\documentclass[journal,twoside,web]{ieeecolor}
\usepackage{generic}
\usepackage{cite}
\usepackage{amsmath,amssymb,amsfonts}
\usepackage{graphicx}
\usepackage{algorithm}
\usepackage{textcomp}
\usepackage{algorithm}
\usepackage{mathtools,verbatim}
\usepackage{epstopdf}

\usepackage{empheq}
\usepackage{algpseudocode}
\usepackage{mathbbol}
\usepackage{dirtytalk}              
\usepackage{amsmath}

\usepackage{verbatim}
\usepackage{subcaption}

\usepackage{amssymb}
\newtheorem{thm}{\textbf{Theorem}}
\newtheorem{assump}{\textbf{Assumption:}}
\newtheorem{lemma}{\textbf{Lemma}}

\newtheorem{definition}{\textbf{Definition}}
\newtheorem{remark}{\textbf{Remark}}
\newtheorem{expm}{Example}

\usepackage{subcaption}
\def\BibTeX{{\rm B\kern-.05em{\sc i\kern-.025em b}\kern-.08em
    T\kern-.1667em\lower.7ex\hbox{E}\kern-.125emX}}
\begin{document}
\title{Influence Enhancement in Opinion Dynamics {U}sing Edge Modification: \\
A Kron Reduction-Based Approach}

\author{Aashi Shrinate, \IEEEmembership{Student, IEEE}, 
Aravind Seshadri, Twinkle Tripathy, \IEEEmembership{Senior Member, IEEE},
Laxmidhar Behera, \IEEEmembership{Senior Member, IEEE}, 
Lingfei Wang, Karl Henrik Johansson \IEEEmembership{Fellow, IEEE}
\thanks{This work was supported in part by the Prime Minister's Research Fellowship and by the Knut and Alice Wallenberg Foundation Wallenberg Scholar Grant, the Swedish Research Council Distinguished Professor Grant 2017-01078, and the Swedish Foundation for Strategic Research SUCCESS FUS21-0026.}
\thanks{
$^{1}$Aashi Shrinate, 
Twinkle Tripathy,  Laxmidhar Behera are with the Department of Electrical Engineering, Indian Institute of Technology Kanpur. \\
 $^2$ Aravind Seshadri is with Adobe Systems, Bangalore. \\
 $^3$ Lingfei Wang and Karl Henrik Johansson are with School of Electrical Engineering and Computer Science, KTH Royal Institute of Technology, Stockholm.
 \textit{Email:} \texttt{ aashis21@iitk.ac.in , aravi15062003@gmail.com, ttripathy@iitk.ac.in, lbehera@iitk.ac.in,lingfei@kth.se,kallej@kth.se.}}}

\maketitle

\begin{abstract}
With the emergence of online social networks as a primary platform for advertising and advocacy, enhancing a user's influence has become of significant interest. In this paper, we investigate this problem under the Friedkin-Johnsen {opinion dynamics} model, wherein stubborn agents influence the opinions of other agents in the network. Unlike most of the existing works, we leverage topological properties of the network to {increase} the influence of a desired stubborn agent. 
Specifically, we introduce the notion of edge modification{, which} mimics the mechanism of recommendations  in {social networks}. 
 First, we present a topology-based condition that identifies edge modifications that always increase the influence of a desired stubborn agent. 
{It is shown that} the impact of the chosen edge modifications remains robust
 to changes in parameters such as stubbornness and the interaction weights. Thereafter, we formulate a discrete optimisation problem to identify a set of edge modifications that maximise the agent's influence centrality. 
 We present a computationally efficient approximate solution to the optimisation problem. Finally, we demonstrate the effectiveness of our {approach on the Friedkin-Johnsen opinion dynamics over} the Erdős–Rényi random graph.   
\end{abstract}

\begin{IEEEkeywords}
Influence centrality, Friedkin-Johnsen Model, Influence maximisation, Kron reduction
\end{IEEEkeywords}

\section{Introduction}
Certain individuals in a social network influence the outcome of the discussions within the network. This ability of an individual to shape the collective outcome of networked interactions is defined as their social power \cite{cartwright1959studies}. An accurate estimate of social power {can be} derived from the underlying dynamics of opinion evolution in a network \cite{friedkin1991centrality}.
Owing to conforming empirical evidence, the DeGroot's averaging-based mechanism is widely adopted \cite{degroot1974consensus,anderson1981foundations}. The Friedkin-Johnsen (FJ) model further extends this process by incorporating the stubborn behaviour of {some} agents (their attachment to biases) to explain the common occurrence of persistent disagreement. In the FJ framework, the influence centrality (social power) of an agent quantifies the impact of its bias on the average final opinion \cite{Community_Cleavage}. Henceforth, we refer to an agent's social power as its influence centrality. Recently, the problem of enhancing a particular agent's influence centrality has attracted significant interest due to its wide applications, from political campaigns to brand marketing. 
\vspace{-10pt}
\subsection{Motivation}
A particular agent's influence centrality depends on the stubbornness of {various} agents and {the interaction network}.
Existing approaches \cite{10.1145/3511808.3557304,Lingfei_wang,L_wang_parallel} {have shown how to} suitably allocate an agent's stubbornness to enhance its influence centrality. However, in the FJ framework, stubbornness is generally derived from an agent's inherent features, which \textit{might be} infeasible to modify. Specifically, in \cite{Lingfei_wang,L_wang_parallel}, the authors equate stubbornness with the trait of outspokenness, which may vary across distinct issues. Nevertheless, unless the interaction network is conducive, being highly vocal (stubborn) does not guarantee a high influence centrality. Moreover, the impact of network topology on influence centrality is evident in online social networks, where the popularity of influencers is shaped by the algorithms that recommend their content \cite{feed_algo}. Consequently, suitable network interventions emerge as an effective and practically feasible tool for increasing an agent's influence centrality. However, designing such interventions remains challenging due to the complex, nonlinear dependence of influence centrality on the network topology \cite{Community_Cleavage}. 



Motivated by this, the present work proposes a tactical addition of edges in the network, akin to
algorithmic recommendations in online social networks, for enhancing the influence centrality of an agent. 
In recent works  \cite{ancona2022model,wang2025addinglinks}, edges are added from the desired stubborn agent to increase its influence centrality. However, these works rely on restrictive assumptions on network connectivity and the {desired} agent's dynamics.
On the contrary, our work seeks a criterion for identifying the suitable edges that increase the influence centrality of any selected stubborn agent in an arbitrary directed graph. 
Importantly, the proposed framework results in a significantly broader range of viable edges.
\vspace{-10pt}
\subsection{Related literature}

In recent works \cite{ancona2022model,gt_attract,wang2025addinglinks}, the authors assume that an external stubborn agent exists in the network whose opinion remains fixed, unaffected by any other agent under the opinion evolution process. These works increase the external agent's influence centrality by effectively designing its interactions with the remaining agents in the network. The authors in  \cite{ancona2022model} present an FJ model-based pro-vaccine campaign in which edges are added from the external influencer to selected target nodes. 
Reference \cite{gt_attract} examines a zero-sum game between two external stubborn agents, wherein each agent suitably allocates the weights to its interactions in the network to maximise its influence centrality. Further, under the influence maximisation problem examined in \cite{wang2025addinglinks}, an external stubborn agent adds a fixed number of edges from itself to other agents to maximise its influence centrality. Along similar lines,  the related problem of opinion maximisation is solved in \cite{ZHU2025115090}, where an external leader drives the average final opinion close to its own opinion by adding edges from itself. 



In the context of social networks, the approach of adding edges to increase influence centrality is especially pertinent as sponsored content is promoted through targeted recommendations \cite{feed_algo}. Standard recommender algorithms, including content-based \cite{content_filtering}, collaborative \cite{collab_filtering}, and hybrid filtering \cite{Survey_recommendations}, identify the target audience by exploiting users' prior preferences and/or the similar interests of like-minded users. However, these mechanisms largely overlook the social interactions among users  \cite{Survey_recommendations}. In practice, social interactions can create a cascading effect that amplifies an entity’s popularity across the social network. The authors in \cite{ancona2022model} design targeted vaccine campaigns that incorporate the effect of social interactions via the FJ model. Such mechanisms are especially effective in designing campaigns that aim to influence the entire population, such as awareness campaigns, countering fake news, and even brand marketing. 

{The existing approaches for adding edges (designing recommendations) under the FJ model presented in \cite{wang2025addinglinks,gt_attract,ancona2022model} rely on the assumption that the external agent is unaffected by the opinions of other agents.} Without this assumption, a strict increase in influence centrality through edge additions from the desired stubborn agent is no longer guaranteed (see Example \ref{expm:null_mod}).  Additionally, the underlying network, excluding the external agent, is assumed to be an undirected connected graph in \cite{ancona2022model} and a strongly connected directed graph in \cite{gt_attract,wang2025addinglinks}. 
These assumptions are restrictive because (i) {social networks} can take the structure of any weakly connected digraph and (ii) the influencers and their followers often engage in two-way communication through likes, dislikes and comments. As a result, such influencers often adjust their views in response to the audience feedback \cite{etienne2024mimetic}. 

\vspace{-10pt}
\subsection{Contributions}
In this work, we consider a directed graph (digraph) comprising $n$ individuals whose opinions evolve according to the FJ model. Each agent's stubbornness is considered to be its intrinsic property that remains fixed. We seek to increase a {given} stubborn agent's influence centrality by suitably altering the network interactions. To this end, we use the mechanism of edge modification $(a,b,d)$, {which} mimics algorithmic recommendations in social networks. Under {a modification} $(a,b,d)$, an edge $(a,b)$ is added to the network and the edge-weight of an existing edge $(d,b)$ is reduced such that the in-degree of $b$ remains constant. Unlike the edge addition protocol in \cite{wang2025addinglinks}, which reduces the weights of all incoming edges at $b$, the proposed approach enables the reduction in the weight of selected edge. 
This mechanism allows a recommender algorithm to maintain thematic consistency in recommendations by specifically deprioritising unrelated content rather than penalising all sources.

The primary objective of this work is to characterise edge modifications that enhance the influence centrality of any stubborn agent in any arbitrary digraph, thereby generalising the results in \cite{wang2025addinglinks,ancona2022model}. 
We begin by establishing that a stubborn agent's influence centrality increases under edge modification $(a,b,d)$, if and only if the aggregate weight of all the walks from the stubborn agent to $a$ exceeds that of the walks to $d$ (Theorem \ref{Lemma:path-weights}). However, {verification of} this condition is both computationally expensive and depends on the parameters of the FJ model, including stubbornness and edge weights, which are hard to accurately estimate {in practice}. To overcome these challenges, we employ the {notion} of locally topologically persuasive (LTP) agent introduced in \cite{shrinate2025opinionclusteringfriedkinjohnsenmodel}. Using this notion, for each stubborn agent, we classify the agents in the network as its endorsers (the stubborn LTP agent and those persuaded by it) and its non-endorsers (the rest). Algorithm \ref{algo:identifying_endorsers} presents a computationally efficient way to identify these agents.
{Using Kron reduction \cite{Kron_red_digraphs}, we show that when node $a$ is the endorser of a particular stubborn agent, the impact of $(a,b,d)$ on the stubborn agent's influence centrality is independent of the FJ model parameters.} More importantly, we present a topology-based sufficient condition to increase the influence centrality of any stubborn agent. 

Our key contributions are as follows:

\begin{itemize}
\item {First, we derive an augmented graph from the FJ dynamics.}
A special {Kron reduced} version of the augmented graph always exists with the weight of each edge  equal to the associated stubborn agent’s influence centrality (Lemma \ref{lm:influence_cen}).
Consequently, the impact of {edge modification} $(a,b,d)$ is determined by examining the change in the edge weights of this reduced graph. {W}e identify the affected edges in any Kron-reduced graph when an edge is {modified} in the original graph (Theorem \ref{lemma:modification_propagates}).



\item 
    
 
 %
{Second, we define a stubborn agent's endorsers and non-endorsers and present Algorithm \ref{algo:identifying_endorsers} to identify these agents.} We show that a stubborn agent's influence centrality always increases under  {modification} $(a,b,d)$ if $a$ is {an} endorser and $d$ is {not} (Theorem \ref{thm:useful_mod}), independent of the FJ model parameters. Further, unlike \cite{wang2025addinglinks,ancona2022model},
 a stubborn agent's influence centrality can be increased by adding edges from endorsers of the stubborn agent, thereby expanding the set of suitable edge modifications.


\item 
Finally, we examine the influence maximisation problem that selects a fixed number of edge modifications $(a,b,d)$ to maximise a stubborn agent's influence centrality.
Leveraging the endorser-based condition, we present a heuristic solution to this discrete optimisation problem that significantly reduces the search space by always selecting $a$ as {the stubborn agent's} endorser and optimising only over $b$ and $d$ {(Algorithm \ref{algo:greedy_optimal_edge_m})}. Compared to the standard greedy approach {(Algorithm \ref{algo:optimal_edge_m})},
the proposed algorithm is {more} efficient. 
\end{itemize}
\subsection{Organisation of the paper}
The paper is organised as follows: Sec. \ref{sec:prelims} presents notations and preliminaries. Sec. \ref{sec:FJ_Model} {gives} the problem formulation. The analysis of the impact of an edge modification is presented in Sec. \ref{sec:influence_centrality}. The influence centrality maximisation problem is analysed in Sec. \ref{sec:algorithm} and {its performance is validated in Sec. \ref{sec:simulations}.}
Finally, Sec. \ref{sec:conclusion} concludes with future research directions.

\section{Preliminaries}
\label{sec:prelims}
\subsection{Notations} Let $\mathbb{N}$ and $\mathbb{R}$ denote the set of natural and real numbers, respectively. The set of positive (non-negative) real numbers is denoted as $\mathbb{R}_{>0}$( $\mathbb{R}_{\geq 0}$). $\mathbf{1}_n$ ($\mathbf{0}_n$) is a column vector with $n$ entries, each equal to $1$ ($0$), with dimensions omitted if no confusion arises. $I$ denotes the identity matrix of appropriate dimension. The vector $e_i$ is the standard basis vector with 
a single non-zero entry at the $i^{th}$ position that equals $1$. Given a finite set $X$, its cardinality is given by $|X|$.  The set $\{1,2,...,n\}$ for $n \in \mathbb{N}$  is denoted as $[n]$. Consider a matrix $H=[h_{ij}]$, 
its  spectral radius is denoted as $\rho(H).$ 
A matrix $H=\operatorname{diag}(h_1,h_2,...,h_n)$ is a diagonal matrix. 
Let $H\in \mathbb{R}^{p \times p}$ and $\alpha,\beta \subseteq [p]$ be the index sets. The submatrix of $H$ with rows indexed by $\alpha$ and columns indexed by $\beta$ is denoted by $H[\alpha,\beta]$. Further, $H[\alpha,\alpha]$ is simply denoted as $H[\alpha]$. Equivalently, for a vector $\mathbf{x} \in \mathbb{R}^p$, we denote a vector containing entries of $\mathbf{x}$ indexed by $\alpha$ as $x[\alpha]$.  

\subsection{Graph {theory} }Consider a group of $n$ agents whose interaction topology is represented by a digraph $\mathcal{G}=(\mathcal{V},\mathcal{E})$, where $\mathcal{V}$ is the set of nodes and $\mathcal{E} \subseteq \mathcal{V} \times \mathcal{V}$ is the set of edges. 
The adjacency matrix $W=[w_{ij}]$ of the $\mathcal{G}$ is defined as: $w_{ij}>0$ if there is an edge $(j,i) \in \mathcal{E}$, otherwise $w_{ij}=0$. The Laplacian matrix of $\mathcal{G}$ is given as $L=D-W$, where $D=\operatorname{diag}(d_1,...,d_n)$ is the in-degree matrix with $d_i=\sum_{j=1}^{n}w_{ij}$. The off-diagonal entries of $L$ are non-positive and $L$ satisfies $L\mathbf{1}_n=\mathbf{0}_n$. 

The loopy Laplacian matrix $Q$ of a digraph is defined as $Q=L+\operatorname{diag}{(w_{11},...,w_{nn})}$ \cite{dorfler2012kron}. If a digraph does not have any self-loops, then $Q=L$ is referred to as the loopless Laplacian matrix.
For any matrix $H=[h_{ij}]\in \mathbb{R}^{n \times n}$, the associated digraph 
${G}(H)$ has $n$ nodes. An edge $(j,i)$ exists in $G(H)$ with edge weight equal to  $h_{ij}$ only if $h_{ij} \neq 0$. 
 
 A \textit{walk} is an ordered sequence of nodes such that each pair of consecutive nodes forms an edge in the graph. A \textit{cycle} is a walk whose initial and final nodes coincide. If none of the nodes in a walk are repeated, it is a \textit{path}. The product of the weights of all the edges that form a path (walk) is defined as its path (walk) weight. 
A graph is said to be strongly connected if there exists a path from each node to all other nodes in the graph. 

\subsection{Matrix {theory}} A matrix $H$ of the form: $H=sI-B$, where $s>0$ and $B$ is a non-negative matrix, is an M matrix if $s\geq\rho(B)$. Each eigenvalue of an M matrix has a non-negative real part.
 If $H$ is a non-singular M matrix, it satisfies the following properties (Theorem 2.3 and 2.4 \cite{Plemmons}):
\begin{itemize}
    \item $H^{-1}=1/s\sum_{k=0}^{\infty}(B/s)^k$,
    \item all principal minors of $H$ are positive.
\end{itemize}

A real square matrix is a P matrix if all its principal minors are positive. Hence, each non-singular M matrix is also a P matrix. A matrix $H=[h_{ij}]$ is called \textit{row-diagonally dominant} if it satisfies the following: $|a_{ii}|\geq \sum_{j=1,j\neq i}^n|a_{ij}|$ for all $i\in[n]$. $H$ is called diagonally dominant of its column entries if $|a_{ii}|\geq|a_{ij}|$ for all $i\in[n]$.
\begin{lemma}(Corollary-2 \cite{JOHNSON202484})
\label{Lemma:RDD_Invertible}
    Let matrix $H\in \mathbb{R}^{n \times n}$ be a row diagonally dominant and invertible matrix. Then, $H^{-1}$ is diagonally dominant of its column entries.
\end{lemma}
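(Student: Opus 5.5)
We read the statement in the column-wise sense of \cite{JOHNSON202484}. Writing $G=H^{-1}=[g_{ij}]$, we will show that in every column $j$ the diagonal entry dominates, i.e.\ $|g_{jj}|\geq |g_{ij}|$ for all $i\in[n]$. This is the sense used in \cite{JOHNSON202484}. The paper's one-line definition indexes the condition over a row; if that reading is intended, the argument below must be redone for rows, and it does not transfer by simply transposing $H$, since $H^\top$ need not be row diagonally dominant. The plan is a maximum-modulus argument applied to one column of $G$ at a time.

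Fix $j$ and set $x=Ge_j$, so $Hx=e_j$ and $x\neq 0$. Let $M=\max_l |x_l|>0$ and $S=\{l:|x_l|=M\}$. We argue by contradiction, assuming $|x_j|<M$, so that $j\notin S$.

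For any $k\in S$, row $k$ of $Hx=e_j$ reads $h_{kk}x_k=-\sum_{l\neq k}h_{kl}x_l$, because $k\neq j$. Taking moduli and using row diagonal dominance gives the chain
\[
|h_{kk}|M=\Bigl|\sum_{l\neq k}h_{kl}x_l\Bigr|\leq \sum_{l\neq k}|h_{kl}||x_l|\leq M\sum_{l\neq k}|h_{kl}|\leq |h_{kk}|M .
\]
Hence every inequality in the chain is an equality. In particular, $\sum_{l\neq k}|h_{kl}|(M-|x_l|)=0$, so $h_{kl}\neq 0$ forces $l\in S$. This means $H[S,S^c]=0$.

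Consequently, after a simultaneous permutation of rows and columns, $H$ is block triangular with diagonal blocks $H[S]$ and $H[S^c]$. Therefore $\det H=\det H[S]\det H[S^c]$.

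On the other hand, restricting $Hx=e_j$ to the rows in $S$ and using $H[S,S^c]=0$ gives $H[S]\,x[S]=e_j[S]=0$, because $j\notin S$. Since $x[S]\neq 0$, the block $H[S]$ is singular, so $\det H=0$. This contradicts the invertibility of $H$. Thus $|g_{jj}|=|x_j|=M\geq |x_i|=|g_{ij}|$ for all $i$, which is the claim.

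The main obstacle is that diagonal dominance is only weak. The naive maximum-modulus step yields only equalities, not a contradiction. Invertibility enters exactly through the block-triangular structure forced by the equality case, and this is the step that needs care.
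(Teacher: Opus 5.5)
Your proof is correct. Note that the paper gives no argument for this lemma: it imports the statement as Corollary~2 of \cite{JOHNSON202484}. Your proposal is therefore a self-contained substitute for the citation rather than a variant of an in-paper proof.

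What you give is Ostrowski's maximum-modulus argument on a column $x=H^{-1}e_j$, plus the extra step needed for weak dominance. The equality case forces $H[S,S^c]=0$, and then $H[S]x[S]=0$ with $x[S]\neq 0$ makes $H$ singular. This extra step matters for the paper. The lemma is applied to $I-P$ in the proof of Theorem~\ref{Lemma:path-weights}, and row $i$ of $I-P$ has diagonal excess exactly $\beta_i$. So the rows of non-stubborn agents are only weakly dominant, and the strict Ostrowski version would not suffice. Your argument shows exactly where nonsingularity (guaranteed by Assumption~\ref{Assump:1}) takes the place of strictness.

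A shorter route to the same weak statement is perturbation. Invertibility plus row dominance forces every $h_{ii}\neq 0$. Hence $H_\varepsilon=H+\varepsilon\operatorname{diag}(\operatorname{sgn}h_{11},\dots,\operatorname{sgn}h_{nn})$ is strictly row dominant for $\varepsilon>0$, and the one-line strict argument applies to $H_\varepsilon^{-1}$. The non-strict inequalities then pass to the limit $\varepsilon\to 0$ by continuity of inversion at $H$. Your version avoids the limit and instead exposes the structural (reducibility) reason.

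One correction to how you frame the definition: there is nothing to ``redo for rows,'' because the row-wise reading is false. For instance, $H=\begin{bmatrix}1&0\\5&10\end{bmatrix}$ is row diagonally dominant and invertible, but $H^{-1}=\begin{bmatrix}1&0\\-1/2&1/10\end{bmatrix}$ has $|g_{22}|<|g_{21}|$. The paper's displayed definition is an index slip for $|a_{jj}|\geq|a_{ij}|$. The column reading you adopt is the one the paper actually uses: $F[j,j]\geq F[i,j]$ in the proof of Theorem~\ref{Lemma:path-weights}.
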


\begin{lemma}
\label{lm:Q_is_M_matrix}
 A loopy (or loopless) Laplacian Matrix $Q\in \mathbb{R}^{n \times n}$ is an M matrix. 
\end{lemma}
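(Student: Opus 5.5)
We need to show that $Q$ has the form $sI-B$ with $B\ge 0$ and $s\ge\rho(B)$. The plan is to write $Q$ explicitly, choose $s$ as the largest diagonal entry, and then bound $\rho(B)$ by a row-sum (Gershgorin-type) estimate.

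First we identify the entries of $Q$. By definition $Q=L+\operatorname{diag}(w_{11},\dots,w_{nn})=D-W+\operatorname{diag}(w_{11},\dots,w_{nn})$. Hence the off-diagonal entries are $q_{ij}=-w_{ij}\le 0$ for $i\neq j$. The diagonal entries are
\[
q_{ii}=d_i-w_{ii}+w_{ii}=d_i=\sum_{j=1}^n w_{ij}\ge \sum_{j\neq i} w_{ij}\ge 0 .
\]
In the loopless case $w_{ii}=0$, so $Q=L$ and $q_{ii}=d_i=\sum_{j\neq i}w_{ij}$. In both cases the diagonal entries are nonnegative and dominate the absolute row sums of the off-diagonal part.

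Next we choose the splitting. Let $s=\max_i d_i$. If $s=0$, then every $w_{ij}=0$ and $Q=0$, which is trivially an M matrix; to fit the paper's convention $s>0$, any $s>0$ works with $B=sI$, since then $\rho(B)=s$. Otherwise set $B=sI-Q$. Its off-diagonal entries are $b_{ij}=w_{ij}\ge 0$, and its diagonal entries are $b_{ii}=s-d_i\ge 0$, so $B$ is nonnegative.

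The key step is the spectral radius bound. For a nonnegative matrix, $\rho(B)\le\|B\|_\infty$, the maximum row sum. Row $i$ of $B$ sums to
\[
s-d_i+\sum_{j\neq i}w_{ij}=s-w_{ii}\le s .
\]
Therefore $\rho(B)\le s$, and $Q=sI-B$ is an M matrix. This step carries the only real content: it relies on the row sums of $Q$ being nonnegative, namely $Q\mathbf{1}=\operatorname{diag}(w_{ii})\mathbf{1}\ge 0$, equivalently that $Q$ is row diagonally dominant with nonpositive off-diagonal entries. I do not expect any obstacle beyond checking the self-loop bookkeeping, in particular that $q_{ii}=d_i$ and not $d_i-w_{ii}$.
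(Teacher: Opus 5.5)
Your proof is correct, and it takes a different route from the paper's.

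The paper argues spectrally. It cites Chebotarev--Agaev for the fact that a loopless Laplacian has all eigenvalues in the closed right half-plane. It extends this to the loopy case by Gershgorin's theorem: the discs are centred at $q_{ii}=d_i$ with radius $d_i-w_{ii}\le d_i$. It then concludes that $Q$ is an M matrix. That last step silently uses the converse characterisation that a Z-matrix whose eigenvalues have nonnegative real parts is an M matrix. This converse is not among the facts in the paper's preliminaries, which state only the forward direction.

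You instead verify the paper's own definition directly. You exhibit the splitting $Q=sI-B$ with $s=\max_i d_i$, check that $B\ge 0$, and bound $\rho(B)\le\|B\|_\infty=\max_i(s-w_{ii})\le s$. The underlying inequality is the same row diagonal dominance that drives the Gershgorin step, but you apply it to $B$ rather than to $Q$. As a result, you avoid both the external citation and the characterisation theorem, and you treat the loopy and loopless cases uniformly. You also handle the degenerate case $Q=0$ under the convention $s>0$.

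The paper's version is shorter given its citations. Yours is self-contained and fully justified from the definitions the paper actually states.
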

\begin{proof}
A loopless Laplacian is an M matrix because the real part of each of its eigenvalues is non-negative  (Pg. 258 in\cite{CHEBOTAREV2002253}). 
 Since a loopy Laplacian matrix $Q$ is defined as $Q=L+\operatorname{diag}(w_{11},...,w_{nn})$, the eigenvalues of $Q$ also have a non-negative real part by the Gersgorin disks Theorem \cite{bullo}. Hence, $Q$ is also an $M$ matrix. \end{proof}

 %

\subsection{Schur complement and Kron reduction } Consider $H\in \mathbb{R}^{n \times n}$ and $\alpha \subseteq [n]$. 
Let $\alpha^{c}$ be defined as $\alpha^{c}=[n] \setminus \alpha$. If $H[\alpha^c]$ is non-singular, then the Schur complement of $H[\alpha^c]$ in $H$ is given as
\begin{align}
\label{eqn:Schur_complement}
    H/\alpha^c=H[\alpha]-H[\alpha,\alpha^c](H[\alpha^c])^{-1}H[\alpha^c,\alpha]
\end{align}

Consider a set of linear equations of the form $H\mathbf{x}=\mathbf{y}$, where 
$\mathbf{x},\mathbf{y}\in \mathbb{R}^n$. By Schur complement, we can obtain a reduced set of linear equations for $\alpha\subset [n]$, given by
\begin{align}
\label{eqn:reduced_LE}
H/\alpha ^c \cdot \mathbf{x}[\alpha ]=\mathbf{y}[\alpha ]-H[\alpha ,\alpha ^c]H[\alpha ^c]^{-1}\mathbf{y}[\alpha ^c]    
\end{align}

Let $Q\in \mathbb{R}^{n \times n}$ be a  Laplacian (or a loopy Laplacian) matrix with ${G}(Q)$ as its associated digraph. {For a suitable $\alpha\subseteq[n]$, we can evaluate the Schur complement $Q/\alpha^c$ with $G_{\alpha}$ being its associated digraph. 
The transformation of $G(Q)$ to the reduced graph $G_{\alpha}$ using the Schur complement that eliminates the nodes in set $\alpha^c$ from $G(Q)$ is called \textit{Kron reduction}.} The relation between graph-theoretic and spectral properties of the original graph and the Kron-reduced graph is presented in \cite{dorfler2012kron} for undirected graphs, and generalised to digraphs in \cite{Kron_red_digraphs}.




\begin{lemma}[\hspace{-0.5pt}\cite{Kron_red_digraphs}]
\label{lm:basic_properties_1}
Under Kron reduction of a digraph ${G}(Q)$ to $G_{\alpha}$ for $\alpha \subset[n]$, the following statements hold:
\begin{enumerate}
    \item The Kron-reduced matrix $Q/\alpha^c$ is well-defined if every node $i \in \alpha^c$ has a path in ${G}(Q)$ from a node $j \in \alpha$.\footnote{In \cite{Kron_red_digraphs}, a Kron-reduced matrix $Q/\alpha^c$ is well-defined if for each $i\in \alpha^c$ there exists a node $j\in \alpha$ and a path $i\to j$ in $G(Q)$. In this work, since the adjacency matrix follows the convention $w_{ij} > 0$ for edge $(j,i) \in \mathcal{E}$ (opposite to that in \cite{Kron_red_digraphs}), we adapt this condition accordingly.}
    \item If $Q$ is a loopless (loopy) Laplacian matrix, then the Kron-reduced matrix  $Q/\alpha^c$ is also a loopless (loopy) Laplacian matrix.
    \item If $G(Q)$ is strongly connected, then $G_{\alpha}$ is also strongly connected.
     \item An edge $(i,j)$ exists in $G_{\alpha}$ if and only if there is a path $i$ to $j$ in $G(Q)$ such that all the nodes comprising the path  {belong to $\{i,j\} \cup \alpha^c$.} 
\end{enumerate}
\end{lemma}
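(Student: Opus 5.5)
The plan is to prove the four items in the order (1), (2), (4), (3). The basic tools are the M-matrix structure of $Q$ (Lemma \ref{lm:Q_is_M_matrix}) and the Neumann-series expansion of the inverse of a non-singular M matrix. Throughout, entry $(i,j)$ of any Laplacian-type matrix corresponds to the edge $(j,i)$, matching the convention $w_{ij}>0 \Leftrightarrow (j,i)\in\mathcal{E}$.

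\emph{Item (1).} Well-definedness of $Q/\alpha^c$ means that $Q[\alpha^c]$ is non-singular. The submatrix $Q[\alpha^c]$ is a Z-matrix. The row sum of row $i\in\alpha^c$ equals $w_{ii}+\sum_{j\in\alpha}w_{ij}\ge 0$, so $Q[\alpha^c]$ is row diagonally dominant. Row $i$ is strictly dominant whenever $i$ has an in-neighbour in $\alpha$. The hypothesis says every $i\in\alpha^c$ is reached by a path from some node of $\alpha$. Following such a path backwards, every row of $Q[\alpha^c]$ is linked, through nonzero off-diagonal entries, to a strictly dominant row; that is, $Q[\alpha^c]$ is weakly chained diagonally dominant.

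I would then show that $\rho(B/s)<1$, where $Q[\alpha^c]=sI-B$ and $s$ is the largest diagonal entry. The argument is by contradiction. Take a nonnegative Perron vector $v$ of $B$ and a maximal entry $v_k$ on an invariant set of $B$. By the usual argument, the dominance inequalities hold with equality along all in-neighbours of $k$. Propagating this equality back to a strictly dominant row gives a contradiction. This is the step I expect to be the main technical obstacle, mainly because the reducibility of $G(Q[\alpha^c])$ must be handled. I would handle it by working class by class in the Frobenius normal form, where each initial class contains a strictly dominant row. The conclusion is that $Q[\alpha^c]$ is a non-singular M matrix, with $Q[\alpha^c]^{-1}=\tfrac1s\sum_k (B/s)^k\ge 0$ entrywise.

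\emph{Item (2).} Write $Q/\alpha^c = Q[\alpha]-Q[\alpha,\alpha^c]Q[\alpha^c]^{-1}Q[\alpha^c,\alpha]$. The factors $Q[\alpha,\alpha^c]$ and $Q[\alpha^c,\alpha]$ are entrywise $\le 0$ and the inverse is $\ge 0$, so the subtracted product is entrywise $\ge 0$. Hence the off-diagonal entries of $Q/\alpha^c$ remain nonpositive.

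For the row sums, let $r=Q\mathbf 1\ge 0$, which is the vector of self-loop weights and is zero in the loopless case. Splitting $r$ into $r_\alpha$ and $r_{\alpha^c}$ and using $Q[\alpha^c,\alpha]\mathbf 1=r_{\alpha^c}-Q[\alpha^c]\mathbf 1$ gives
\[
(Q/\alpha^c)\mathbf 1 = r_\alpha - Q[\alpha,\alpha^c]Q[\alpha^c]^{-1}r_{\alpha^c}\ \ge r_\alpha \ge 0 .
\]
In the loopless case both $r_\alpha$ and $r_{\alpha^c}$ vanish, so the right-hand side is $\mathbf 0$. So $Q/\alpha^c$ is a Z-matrix with zero row sums (loopless) or nonnegative row sums (loopy). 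It is therefore a loopless Laplacian, respectively a loopy Laplacian whose self-loop weights are the row sums.

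\emph{Item (4).} For $i\neq j$ in $\alpha$, expand the series inside the Schur complement. The quantity $-(Q/\alpha^c)_{ij}$ equals $w_{ij}$ plus a sum of nonnegative terms $w_{ik}\,[(B/s)^m]_{kl}\,w_{lj}/s$ with $k,l\in\alpha^c$. Each term is the (scaled) weight of a walk from $j$ to $i$ whose interior nodes lie in $\alpha^c$. Because no cancellation can occur, the entry is nonzero if and only if such a walk exists, and a walk of this type can be shortened to a path of the same type. Translating the index convention back to edges gives item (4).

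\emph{Item (3).} Take $i,j\in\alpha$. By strong connectivity of $G(Q)$ there is a path from $i$ to $j$. Cut this path at its successive visits to $\alpha$. Each resulting segment joins two nodes of $\alpha$ and has all its interior nodes in $\alpha^c$, so by item (4) it yields an edge of $G_\alpha$. Concatenating these edges gives a path from $i$ to $j$ in $G_\alpha$, so $G_\alpha$ is strongly connected.
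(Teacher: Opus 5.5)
Your proof is correct. It necessarily takes a different route from the paper, because the paper proves nothing here. The lemma is imported from Sugiyama and Sato \cite{Kron_red_digraphs}, and the authors' only contribution is the footnote that translates the reachability hypothesis into their convention $w_{ij}>0 \Leftrightarrow (j,i)\in\mathcal{E}$.

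Your argument is self-contained: weak chained diagonal dominance with a Perron-vector argument for non-singularity, the sign pattern with the identity $(Q/\alpha^c)\mathbf{1}=r_\alpha-Q[\alpha,\alpha^c]Q[\alpha^c]^{-1}r_{\alpha^c}$ for the Laplacian property, the Neumann-series walk expansion for the edge characterisation, and path splitting for strong connectivity. This gives two things the citation does not. First, it proves the statement directly in the paper's orientation, so the footnote's translation is checked rather than asserted. Second, it exposes $Q[\alpha^c]^{-1}=\frac{1}{s}\sum_k (B/s)^k\ge 0$ and the walk expansion of $Q/\alpha^c$, which are exactly the mechanism the paper relies on in Remark~\ref{rem:walks_edges} and in the proof of Theorem~\ref{lemma:modification_propagates}. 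The citation buys only brevity.

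One simplification and a few small points:

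\emph{Item (1).} You do not need the Frobenius normal form, so the obstacle you anticipate does not arise. Suppose $\rho(B)=s$ with Perron vector $v\ge 0$, and take $v_k=\max_i v_i>0$ over all of $\alpha^c$. Then row $k$ of $B$ sums to exactly $s$, and $v_l=v_k$ whenever $B_{kl}>0$. Both properties propagate along the chain from row $k$ that weak chained dominance guarantees, and they contradict strict dominance at its end. This is the Shivakumar--Chew argument, which you could simply cite.

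\emph{Item (3).} Say explicitly that $G_\alpha$ is well defined. Strong connectivity (with $\alpha\neq\emptyset$) supplies the reachability hypothesis of item (1).

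\emph{Item (4).} $B$ may have positive diagonal entries $s-Q_{kk}$, so a series term is the weight of a walk with idle steps. This does not affect the existence equivalence. Restricting to $i\neq j$ is right, since the diagonal of a (loopy) Laplacian is determined by its off-diagonal entries and self-loop weights and is not what item (4) describes.
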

\begin{figure}[h]
    \centering
    \begin{subfigure}{0.2\textwidth}
    \centering
         \includegraphics[width=0.5\linewidth]{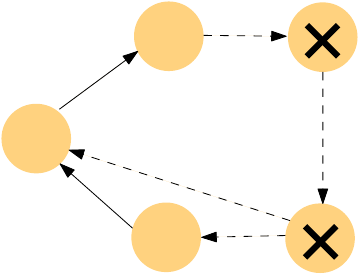}
    \caption{Network $G(Q)$}
    \label{fig:Original_Net}
    \end{subfigure}
        \begin{subfigure}{0.2\textwidth}
        \centering
         \includegraphics[width=0.3\linewidth]{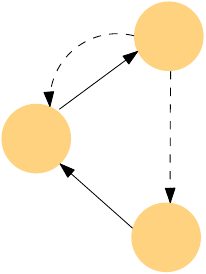}
    \caption{reduced graph $G_{\alpha}$}
    \label{fig:reduced_net}
    \end{subfigure}
    \caption{Kron reduction of $G(Q)$ to $G_{\alpha}$. Nodes marked with crosses in $G(Q)$ belong to $\alpha^c$ and the remaining are in $\alpha$. The dashed lines in {(a)} denote paths {of} $G(Q)$ that start and end at nodes in set $\alpha$ but pass through only nodes in $\alpha^c$. {As established in Statement 4) of Lemma \ref{lm:basic_properties_1}, such paths result in the dashed edges in $G_{\alpha}$ as shown in (b)}.}
    \label{fig:Example_1}
\end{figure}
\vspace{-15pt}
\section{Problem Formulation}
\label{sec:FJ_Model}

In this section, we begin with a brief review of the FJ model. Subsequently, we define the edge modification and show its impact on influence centrality. Finally, we state the problem of enhancing a stubborn agent's influence centrality using edge modifications.
\subsection{FJ model}
Consider a network $\mathcal{G}=(\mathcal{V},\mathcal{E})$ composed of $n$ agents. {The agents' opinions evolve under the FJ model as} 
\begin{equation}
\label{eq:opinion}
\mathbf{x}(k+1)=(I_n-\beta)W\mathbf{x}(k)+\beta \mathbf{x}(0)
\end{equation}
where $\mathbf{x}(k)=[x_1(k),....,x_n(k)]\in \mathbb{R}^n $ denotes the opinions, $W$ is a row-stochastic matrix and  $\beta={\operatorname{diag}}(\beta_1,...,\beta_n)$ is a diagonal matrix with $\beta_i\in[0,1]$ quantifying the stubbornness of agent $i\in\mathcal{V}$. 
An agent $i$ is stubborn if  $\beta_i>0$, and non-stubborn otherwise.  
We {assume} digraph $\mathcal{G}$ to be weakly connected and contain $m>1$ stubborn agents. Additionally, to ensure that each agent
 is affected by the stubborn behaviour, we consider the following assumption:
 \begin{assump}
 \label{Assump:1}
 Each non-stubborn agent in $\mathcal{G}$ has a path from at least one stubborn agent.
 \end{assump}
 
From \cite{friedkin1990opinions}, if a network satisfies Assumption \ref{Assump:1}, 
the final opinions $\mathbf{x}^*$ evolving under the FJ model \eqref{eq:opinion} converge to,
\begingroup
\setlength{\abovedisplayskip}{2pt}
\setlength{\belowdisplayskip}{2pt}
\begin{equation}
\label{eq:opinion_final}
\mathbf{x}^*=(I_n-(I_n-\beta)W)^{-1}\beta \mathbf{x}(0)
\end{equation}
\endgroup
In such networks, only the initial opinions of stubborn agents contribute to the final opinions of the whole group. Consequently, the stubborn agents are \textit{influential}. The impact of each stubborn agent on $\mathbf{x}^*$ is quantified by the influence centrality measure, proposed in \cite{Community_Cleavage}.
Mathematically, the influence centrality vector is defined as:\begingroup
\setlength{\abovedisplayskip}{2pt}
\setlength{\belowdisplayskip}{2pt}
\begin{align}
\label{eqn:influence_centrality}
    \mathbf{c}=\frac{((I_n-(I_n-\beta)W)^{-1}\beta)^T \mathbf{1}_n}{n}
\end{align}
\endgroup
where {$\mathbf{c}=[c_1,c_2,...,c_n]$ and}  $\mathbf{c}^T\mathbf{1}_n=1$. {Only the influence centralities of stubborn agents are non-zero. The magnitudes depend on the network topology ($W$) and the stubbornness of agents ($\beta$).} 




\subsection{Edge modifications}
\label{subsec:Edge_modfications}
An edge modification $(a,b,d)$ with distinct nodes $a,b$ and $d$ comprises the following steps:
\begin{itemize}
    \item the addition of edge $(a,b)$ with edge weight $w\in(0,1)$ (or increase in the edge weight of $(a,b)$ by $w$ such that its weight remains less than $1$ )
    \item the reduction of edge weight of an existing edge $(d,b)$ by $w$.
\end{itemize}
The second step {maintains} the row-stochasticity of $W$ {after} the edge modification. Additionally, choosing $w$ such that $w<w_{bd}$ prevents the edge $(d,b)$ from being removed or having a negative weight. 

\begin{remark}
A recommender system  ranks posts on a user's feed to optimise user engagement and to promote sponsored content \cite{feed_algo}. Sometimes the algorithm suggests posts from new sources or prioritises content from certain sponsored sources. Since users' attention and feed capacity are limited, naturally, certain posts get low visibility. 
An edge modification $(a,b,d)$ captures this phenomenon as the addition of edge $(a,b)$ represents a post suggested to the user, and the reduction in weight of $(d,b)$ models the reduced priority given to other posts on the user's feed.

\end{remark}
\begin{remark}
An edge addition protocol, akin to the proposed $(a,b,d)$ modification, is used to increase the influence centrality of an agent in \cite{wang2025addinglinks}  and to mitigate the impact of malicious agents in \cite{victor}. Under this protocol, an edge $(a,b)$ is added, and the edge weights of all incoming edges at $b$ are reduced. In contrast, {our} edge modification $(a,b,d)$ generalises these approaches by permitting changes in the weights of specific incoming edge at $b$. This flexibility enables us to maintain thematic consistency in recommendations, such that a recommendation that popularises an entity is followed by aligning content, while contradictory information is assigned a lower priority.  
\end{remark} 

Consider a network $\mathcal{G}=(\mathcal{V},\mathcal{E})$ with $m$ stubborn agents such that Assumption \ref{Assump:1} holds.
Let $s$ be a stubborn agent, and {let an} edge modification $(a,b,d)$ be applied to $\mathcal{G}$. The adjacency matrix of the modified graph, denoted by $\hat{W}$, is related to $W$ by a rank-one update:  $\hat {W}=W+we_b(e_a-e_d)^T$. Therefore, the influence centrality of $s$ after the edge modification can be expressed as
\begingroup
\setlength{\abovedisplayskip}{2pt}
\setlength{\belowdisplayskip}{2pt}
\begin{align*}
    \hat{c}_s&=\beta_s \mathbf{1}_{n}^T(I_n-P+\mathbf{u}\mathbf{v}^{T})^{-1} e_{s} /n
\end{align*}
\endgroup
where $P=(I-\beta)W$, $\mathbf{u}=w (1-\beta_b) e_b$, $\mathbf{v}=e_d-e_a$. By the Sherman-Morrison-Woodsbury formula \cite{sherman1950}, we evaluate the inverse $(I_n-P+\mathbf{u}\mathbf{v}^{T})^{-1}$ and determine the change in the influence centrality of $s$, (\textit{i.e.}, $\Delta c_s=\hat{c}_s-c_s$,) is given as
\begingroup
\setlength{\abovedisplayskip}{2pt}
\setlength{\belowdisplayskip}{2pt}
\begin{align}
\label{eqn:delta_c_i}
    \Delta c_s=-\beta_s\frac{ (\mathbf{1}_{n}^TF\mathbf{u})(\mathbf{v}^{T}Fe_s)}{n(1+\mathbf{v}^{T}F\mathbf{u})}
\end{align}
\endgroup
with $F=(I_n-P)^{-1}$.
Clearly, $\Delta c_s=0$ if  $\beta_b=1$. Therefore, each edge modification chosen to enhance influence centrality satisfies the following assumption.
\begin{assump}
\label{assump:2}
  The stubbornness of node $b$ selected for the edge modification $(a,b,d)$ must be strictly less than $1$.
\end{assump}

The opinion of an agent with stubbornness equal to $1$ remains fixed and is not affected by social interactions. However, such agents are rare because the opinions of most individuals are shaped by their peers. Therefore, Assumption \ref{assump:2} has a negligible impact on the number of feasible edge modifications.

%
\subsection{Problems of interest}
\label{Sec:PI}
In this work, our primary objective is to increase the influence centrality $c_s$ {of agent $s$} using edge modification $(a,b,d)$. 
To this end, we address the following {problems.}

\textbf{Problem 1:} \textit{What choice of nodes $a,b$ and $d$ in the network leads to a positive $\Delta c_s$ 
    when edge modification $(a,b,d)$ is applied to $\mathcal{G}$?}

  Unlike \cite{wang2025addinglinks,ancona2022model,gt_attract},  we examine edge modifications $(a,b,d)$ that increase the influence centrality of any desired stubborn agent and are applicable to any weakly connected digraph.
   
\textbf{Problem 2:} \textit{Can certain topological properties of the network of nodes $a,b$ or $d$ ensure that $(a,b,d)$  always results {in} positive $\Delta c_s$?}

 {Note that $\Delta c_s$ depends in general on parameters such as edge weights and the stubbornness of agents.} We seek to characterise nodes $a,b$ and $d$ such that they lead to positive $\Delta c_s$ regardless of the FJ model parameters.

{The following problem is a novel formulation of the classical \textit{influence maximisation} problem in the literature \cite{gionis2013opinion}, where the influence centrality of a stubborn agent is increased using edge modifications. 
} 

\textbf{Problem 3} \textit{How to identify $N$ edge modifications of the form $(a,b,d)$ that yield the maximum increase in the influence centrality of a stubborn agent?}

\section{Edge modifications and  Influence Centrality}
\label{sec:influence_centrality}
In this section, we present the impact of introducing a single edge modification $(a,b,d)$ in $\mathcal{G}$ on the influence centrality measure. More importantly, our focus is on identifying those edge modifications that increase the influence centrality of a stubborn agent.

\begin{thm}
\label{Lemma:path-weights}
Consider a network $\mathcal{G}$ with $m>1$ stubborn agents such that Assumption \ref{Assump:1} holds. Let the opinions evolve under the FJ model \eqref{eq:opinion}. An edge modification $(a,b,d)$, satisfying Assumption \ref{assump:2}, increases the influence centrality $c_s$ of a stubborn agent $s\in \mathcal{V}$ if and only if 
\begingroup
\setlength{\abovedisplayskip}{2pt}
\setlength{\belowdisplayskip}{2pt}
\begin{align}
\label{eqn:increase_inf_cent}
   (e_a-e_d)^T \sum_{k=0}^{\infty}(P)^{k} e_s>0
\end{align}
\endgroup
where $P=(I_n-\beta)W$.
     \end{thm}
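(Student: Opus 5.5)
The plan is to start from the Sherman--Morrison expression \eqref{eqn:delta_c_i} for $\Delta c_s$ and determine the sign of each factor separately. Write $F=(I_n-P)^{-1}$ with $P=(I_n-\beta)W$. Under Assumption \ref{Assump:1}, the matrix $I_n-P$ is a non-singular M matrix. Indeed, $\rho(P)<1$, because every row of $P$ belonging to a stubborn agent has sum $1-\beta_i<1$, and every non-stubborn agent is reached by a path from a stubborn one. This is exactly the condition used in \cite{friedkin1990opinions} to obtain \eqref{eq:opinion_final}. Hence, by the M-matrix property recalled in Section \ref{sec:prelims}, $F=\sum_{k=0}^{\infty}P^k$ is entrywise non-negative with $F_{ii}\ge 1$. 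Substituting $\mathbf{u}=w(1-\beta_b)e_b$ and $\mathbf{v}=e_d-e_a$ into \eqref{eqn:delta_c_i} gives
\begin{align*}
\Delta c_s=\beta_s\,\frac{w(1-\beta_b)\,(\mathbf{1}_n^TFe_b)\,(e_a-e_d)^TFe_s}{n\bigl(1+w(1-\beta_b)(F_{db}-F_{ab})\bigr)}.
\end{align*}
Because $F=\sum_k P^k$, the factor $(e_a-e_d)^TFe_s$ is precisely the left-hand side of \eqref{eqn:increase_inf_cent}. It therefore suffices to show that every other factor is strictly positive.

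First, $\beta_s>0$ because $s$ is stubborn. Next, $w>0$ by construction, and $1-\beta_b>0$ by Assumption \ref{assump:2}. Finally, $\mathbf{1}_n^TFe_b\ge F_{bb}\ge 1>0$, since $F$ is non-negative and its diagonal includes the $k=0$ term $I_n$.

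The main obstacle is the denominator $1+\mathbf{v}^TF\mathbf{u}$. Here I would avoid a direct bound and instead argue structurally. The modified matrix $\hat P=(I_n-\beta)\hat W$ is again of FJ type:
\begin{itemize}
\item $\hat W$ is non-negative and row-stochastic, because $w<w_{bd}$.
\item Assumption \ref{Assump:1} still holds. Only node $b$'s in-edges change, and $(d,b)$ keeps positive weight, so every path survives.
\end{itemize}
Hence $I_n-\hat P=I_n-P+\mathbf{u}\mathbf{v}^T$ is a non-singular M matrix, and in particular a P matrix, so its determinant is positive. The matrix determinant lemma then gives
\begin{align*}
\det(I_n-\hat P)=\det(I_n-P)\,(1+\mathbf{v}^TF\mathbf{u}).
\end{align*}
Since $\det(I_n-P)>0$ as well, it follows that $1+\mathbf{v}^TF\mathbf{u}>0$.

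As a more direct alternative for the same step, note that $I_n-P$ is row diagonally dominant. Lemma \ref{Lemma:RDD_Invertible} then gives $F_{bb}\ge F_{ab}$, but it says nothing about $F_{ab}-F_{db}$ in general, which is why I prefer the determinant route.

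Combining these facts, $\Delta c_s$ has the same sign as $(e_a-e_d)^T\sum_{k\ge0}P^ke_s$. This proves both directions of the equivalence at once.
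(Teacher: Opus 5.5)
Your proof is correct. It shares the paper's outer structure, a sign analysis of the factors in \eqref{eqn:delta_c_i}, but settles the only delicate step, $1+\mathbf{v}^TF\mathbf{u}>0$, by a genuinely different argument.

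The paper works entrywise. Lemma \ref{Lemma:RDD_Invertible} gives $F_{kb}\le F_{bb}$ for all $k$. This is fed into the $b$-th row of $(I_n-P)Fe_b=e_b$, i.e.\ $F_{bb}=1+\sum_k p_{bk}F_{kb}$ with $\sum_k p_{bk}=1-\beta_b$. Bounding the terms with $k\neq d$ by $F_{bb}$ yields $(\beta_b+p_{bd})F_{bb}\le 1+p_{bd}F_{db}$. Hence $p_{bd}(F_{bb}-F_{db})\le 1$, and so $w(1-\beta_b)(F_{ab}-F_{db})\le w/w_{bd}<1$. The paper packages this through the auxiliary functions $h$ and $g$. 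So the direct route you set aside does close, once the diagonal-dominance bound is combined with that single row equation.

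Your route instead notes that the modified network is again an FJ network satisfying Assumption \ref{Assump:1}; no edge disappears since $w<w_{bd}$. Hence $\det(I_n-\hat P)>0$ and $\det(I_n-P)>0$, and the matrix determinant lemma transfers the sign to the denominator.

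\textbf{What each route buys.} Yours is shorter and avoids the diagonal-dominance result entirely. It also makes explicit that the modified FJ dynamics still converge, so $\hat c_s$ is meaningful, which the paper leaves implicit. It would apply verbatim to any rank-one perturbation that preserves the FJ structure. The paper's entrywise route, in exchange, gives a quantitative, parameter-uniform bound such as $1+\mathbf{v}^TF\mathbf{u}\ge 1-w/w_{bd}$, that is, $1-\zeta$ under the parametrisation $w=\zeta w_{bd}$ of Section \ref{sec:algorithm}. A ratio of determinants only certifies positivity.
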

     
The proof is presented in Appendix-A. 

{Theorem \ref{Lemma:path-weights} gives a solution to \textbf{Problem 1}. It states} that whether an edge modification with $\beta_b<1$ increases the influence centrality $c_s$ or not depends only on the walks from the stubborn agent $s$ to nodes $a$ and $d$ in the associated graph $G(P)$. 
Clearly, under Assumption \ref{assump:2}, the choice of $b$ is not of much consequence in increasing $c_s$. Consequently, our primary objective is  hereafter to identify suitable nodes $a$ and $d$ such that \eqref{eqn:increase_inf_cent} holds. 

 To verify if a pair of nodes $a$ and $d$ satisfy  \eqref{eqn:increase_inf_cent}, one must identify each walk in $G(P)$
from $s$ to nodes $a$ and $d$ and determine the walk weights.
Thus, finding suitable edge modifications is computationally challenging. {
To simplify this, we seek a solution to \textbf{Problem 2} to characterise pairs that always satisfy \eqref{eqn:increase_inf_cent} for a given stubborn agent $s$. Thereby, simplifying the search for suitable edge modifications.}

\subsection{LTP nodes}
\label{subsec:LTPs}
In \cite{shrinate2025opinionclusteringfriedkinjohnsenmodel}, we introduced a class of agents referred to as LTP agents, who ensure the formation of opinion clusters under the FJ framework. In this work, we examine their topological properties towards increasing the influence centrality of a desired stubborn agent. An LTP agent is defined as follows:  

\begin{definition}[\hspace{-0.1pt}\cite{shrinate2025opinionclusteringfriedkinjohnsenmodel}] \label{defn:LTP}
An agent $p\in \mathcal{V}$ is called an {LTP agent}\footnote{This definition is suitably modified from \cite{shrinate2025opinionclusteringfriedkinjohnsenmodel} according to the framework of the present work, where owing to the Assumption \ref{Assump:1}, only stubborn agents are influential.} if the following conditions hold:
\begin{enumerate}
    \item[(i)] there exists a non-stubborn agent $q$ such that every path in $\mathcal{G}$ from each stubborn agent $s\in \mathcal{V}$ to $q$ (where $p \neq q$) contains $p$ and,
   
    \item[(ii)] if $p$ is non-stubborn, then an agent $c\in \mathcal{V}$ such that all the paths from the stubborn agents to $p$ contains $c$, does not exist.
\end{enumerate}
If $p$ is an LTP agent, then $q$ satisfying condition (i) is said to be {{persuaded by}} $p$. The set of all non-stubborn agents persuaded by $p$ is denoted by $\mathcal{N}_p$.
\end{definition}

If $p$ is a stubborn agent, condition (i) is sufficient to ensure that $p$ is an LTP agent. Condition (ii) is applicable only if $p$ is non-stubborn.  A non-stubborn agent $p$, persuaded by an LTP agent $c$ (\textit{i.e.,} $p\in \mathcal{N}_{c}$), cannot be an LTP agent. 

{Under the FJ model, an LTP agent and its persuaded nodes always converge to the same final opinion \cite{shrinate2025opinionclusteringfriedkinjohnsenmodel}. This indicates that these nodes are closely connected.} 
Building on this notion, we classify the agents in $\mathcal{G}$ based on whether they are persuaded by a stubborn agent $i \in \mathcal{V}$ or not.
\begin{definition}(\textbf{Endorser and Non-endorser})
{Let} a stubborn agent $p$ in $\mathcal{G}$ {be} an LTP agent. {Then,} each agent in $\mathcal{N}_p$ is called an \textbf{endorser} of $p$. {E}ach stubborn agent is also its own endorser. Any agent in $\mathcal{G}$ which is not an endorser of $p$ is called {a} \textbf{non-endorser} of $p$. 
\end{definition}

Note that if a stubborn agent is not an LTP agent, then it has only itself as its endorser. Hence, the set of endorsers of a stubborn agent is always non-empty.  The endorsers of a stubborn agent (other than itself) are non-stubborn. Consequently, in a network with two or more stubborn agents, the set of non-endorsers of each stubborn agent is also non-empty. 


\begin{expm}
\label{expm:LTP}
Consider the network shown in Fig. \ref{fig:14}. Throughout this paper, red nodes represent stubborn agents and orange nodes represent non-stubborn agents. Fig. \ref{fig:34} highlights (using solid lines) that each path from stubborn agents $1$ and $5$ to agents $3$ and $4$ contains agent $2$. Thus, $2$ is an LTP agent with $\mathcal{N}_2=\{3,4\}$. Since agent $2$ is also a stubborn agent, the agents $2-4$ are its endorsers, and the remaining agents are its non-endorsers.

Fig. \ref{fig:15} demonstrates that the stubborn agent $5$ is an LTP agent, with agents $5$ and $6$ are its endorsers.

\end{expm}
 
\begin{figure}[h]
    \centering
    \begin{subfigure}{0.12\textwidth}
        \centering
    \includegraphics[width=1\linewidth]{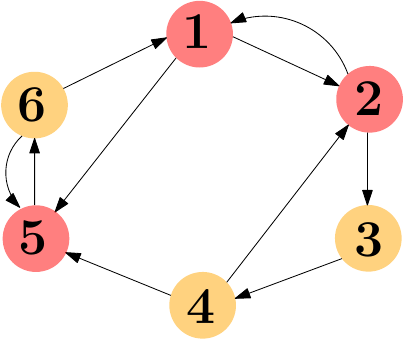}
    \caption{Network $\mathcal{G}$}
    \label{fig:14}
    \end{subfigure}
    \hfill
    \begin{subfigure}{0.14\textwidth}
        \centering
    \includegraphics[width=00.95\linewidth]{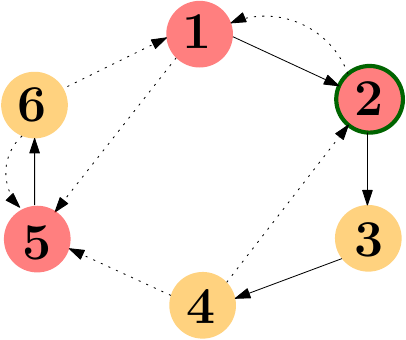}
    \caption{Agent $2$ is LTP}
    \label{fig:34}
    \end{subfigure}
    \hfill
    \begin{subfigure}{0.14\textwidth}
        \centering
    \includegraphics[width=0.95\linewidth]{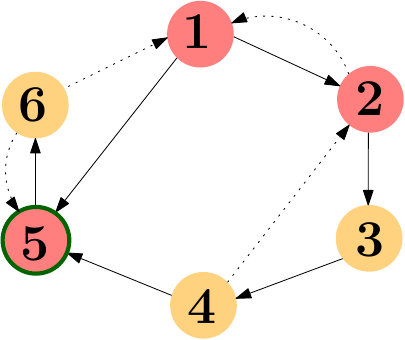}
    \caption{Agent $5$ is LTP}
    \label{fig:15}
    \end{subfigure}
    \caption{ {Illustration of stubborn LTP agents in $\mathcal{G}$ and their endorsers.} }
    \label{fig:redundant_modification_expm}
\end{figure}

\subsection{The relation between network topology and influence centrality}
\label{subsec:KRIC}

In this subsection, we develop a framework to characterise the effect of an edge modification in the network on the influence centrality measure. Using this framework, we establish conditions under which edge modifications increase the influence centrality of a stubborn agent

Let $\mathcal{G}$ be a network with $n$ agents, out of which $m>1$ are stubborn. The nodes in $\mathcal{G}$ can be indexed such that {agents} $1,...,m$ are stubborn and the rest are non-stubborn. At steady state, the following equations hold:
\begingroup
\setlength{\abovedisplayskip}{2pt}
\setlength{\belowdisplayskip}{2pt}\begin{equation}
\label{eqn:final_op}
\begin{split}
   & \mathbf{x}^* = (I_n-\beta)W \mathbf{x}^*+\beta \mathbf{x}(0)  \\
 & \bar{x}=\mathbf{1}_n^T \mathbf{x}^*/n.
\end{split}
\end{equation}
\endgroup
Here, $\bar{x}$ is the average final opinion. From  \eqref{eq:opinion_final} and \eqref{eqn:influence_centrality}, it follows that $\mathbf{c}$ quantifies the impact of initial opinions of the stubborn agents on $\bar{x}$. Further, eqn. \eqref{eqn:final_op} relates $\bar{x}$ with the initial opinions, network topology ($W$) and the stubborn behaviour ($\beta$). 
We can rewrite eqn. \eqref{eqn:final_op} as
\begingroup
\setlength{\abovedisplayskip}{2pt}
\setlength{\belowdisplayskip}{2pt}
\begin{align}
\label{eqn:steady_state}
\underbrace{\begin{bmatrix}
 I_n-(I_n-\beta)W & -\beta[[n],[m]] & \mathbf{0} \\
    \mathbf{0} & \mathbf{0} & \mathbf{0}\\
    -\mathbf{1}_n^T/n & \mathbf{0} & 1  
\end{bmatrix}}_{R} \
\underbrace{
\begin{bmatrix}
    \mathbf{x}^*\\
    \mathbf{x}_s(0) \\
    \bar{x}
\end{bmatrix}}_{\mathbf{z}}=\mathbf{0}
\end{align}
\endgroup
where, $\mathbf{x}_s(0)=[x_{1}(0),x_{2}(0),...,x_{m}(0)]$ denotes the initial opinions of stubborn agents. 
{Interestingly, $R$ in \eqref{eqn:steady_state}, possesses the following properties:}
\begin{enumerate}
    \item $R\mathbf{1}=\mathbf{0}$ 
    \item {the diagonal entries of $R$ are non-negative and its off-diagonal entries are non-positive. }
\end{enumerate}

By Definition 6.3  in \cite{bullo}, $R$ is a Laplacian matrix. Next, we present the construction of the associated network $G(R)$ and its salient properties.
The network $G(R)$ consists of $n+m+1$ nodes and the edges have edge weights equal to the entries in $R$. As $R$ is a Laplacian matrix, all edges in $G(R)$ except self-loops have negative edge weights. Importantly, if $\beta_i<1$, an edge $(j,i)$ in $\mathcal{G}$ contributes an edge $(j,i)$ in $G(R)$ for all $i,j\in [n]$. The nodes in $G(R)$ are associated with the states in $\mathbf{z}$ as follows:
 \begin{enumerate}
     \item each node $j \in \{1,...,n\}$ is associated with the final opinions ${x}_j^*$,
     \item the nodes $\{n+1,...,n+m\}$ are associated with the initial opinion of the $m$ stubborn agents,
     \item Finally, $\{n+m+1\}$ is associated to $\bar{x}$.
 \end{enumerate}
 Note that each stubborn agent $j\in[m]$ in $\mathcal{G}$ contributes two nodes in ${G}(R)$: node $j$ associated with its final opinion and node $n+j$ associated with its initial opinion. We denote the node $n+j$ as $S_j$ and $n+m+1$ as $O$. Each $S_j$ forms a source in $G(R)$ and has an  
 outgoing edge only to the node $j$. The node $O$ has an incoming edge from each node in the set $[n]$. {Let us illustrate with an example.}

\begin{expm}
\label{expm:G_G_a}
Consider the network $\mathcal{G}$ in Fig. \ref{fig:network} with three stubborn agents $1,2$ and $3$ with $\beta_j<1$ for $j\in[3]$. 
The corresponding graph $G(R)$ is shown in Fig. \ref{fig:augmented_graph}, where the additional nodes $S_j$ for and $O$ are highlighted in turquoise.  
As noted earlier, for each edge $(i,k)$ in $\mathcal{G}$, there is a corresponding edge $(i,k)$ in $G(R)$. Further, each $S_j$ has an outgoing edge only to $j$ (for $j\in\{1,2,3\}$), and each node in $\{1,...,6\}$ has an outgoing edge to $O$. The edges to $O$ are represented by dashed lines and the self-loops in $G$ are omitted in Fig. \ref{fig:augmented_graph}.
\end{expm}
\begin{figure}[h]
    \centering
    \begin{subfigure}{0.17\textwidth}
    \centering
\includegraphics[width=0.8\linewidth,height=2.5cm,keepaspectratio]{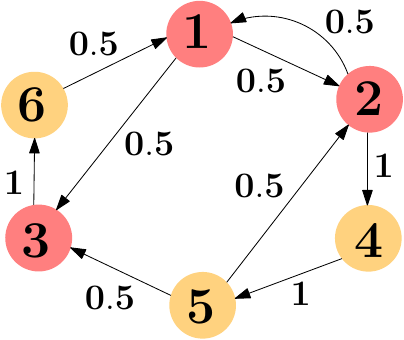}
    \caption{Network $\mathcal{G}$}
    \label{fig:network}
    \end{subfigure}
    \begin{subfigure}{0.2\textwidth}
        \centering
\includegraphics[width=0.8\linewidth,height=2.5cm,keepaspectratio]{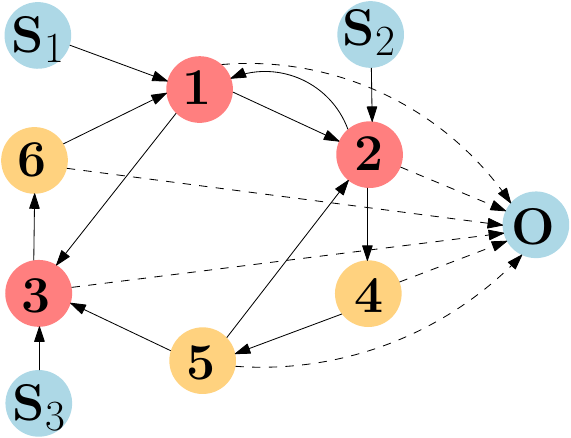}
    \caption{Network $G(R)$}
    \label{fig:augmented_graph}
    \end{subfigure}
    \caption{{Illustration of $\mathcal{G}$ and $G(R)$}}
    \label{fig:graph_and_augmented_graph}
\end{figure}

Since $R$ is a Laplacian matrix, we can reduce ${G}(R)$ by Kron reduction. By reducing $G(R)$, we aim to formalise the impact on the topological properties of nodes $a$ and $d$ on $\bar{x}$  {and influence centrality measure}.
Next, we present some useful properties of the reduced graph $G_{\alpha}$ and the Kron-reduced matrix $R/\alpha^c$.

\begin{lemma}
\label{lm:Basic_properties}
Consider a network $\mathcal{G}$ with $m>1$ stubborn agents that satisfies Assumption \ref{Assump:1}. Let the opinions of agents be governed by the FJ model \eqref{eq:opinion}. Under Kron reduction of $G(R)$, the following properties hold:
\begin{enumerate}
    \item The Kron-reduced matrix $R/\alpha^c$ is well-defined if $\alpha \supseteq \{S_1,...,S_m,O\}$ and $\alpha^c=\{1,...,n,S_1,...,S_m,O\}\setminus \alpha$.
    \item $R/\alpha^c$ is a loopless Laplacian matrix. 
    \item In $G_{\alpha}$, each node in set $\alpha\setminus \{S_1,...,S_m\}$ has a path from {at least one source $S_j$, where $j\in[m]$.}
   
\end{enumerate}
\end{lemma}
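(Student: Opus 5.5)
The plan is to reduce everything to Lemma \ref{lm:basic_properties_1}, applied to the Laplacian $R$ and its associated digraph $G(R)$. Since $\alpha \supseteq \{S_1,\dots,S_m,O\}$, the eliminated set satisfies $\alpha^c\subseteq[n]$, so only "final-opinion" nodes are ever removed.

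\textbf{Statement 1) (well-definedness).} I will verify the hypothesis of Statement 1) of Lemma \ref{lm:basic_properties_1}: every $i\in\alpha^c$ has a path in $G(R)$ from some node of $\alpha$.
\begin{itemize}
\item Take $i\in\alpha^c\subseteq[n]$. If $i$ is stubborn, then $S_i\to i$ is an edge of $G(R)$, with weight $-\beta_i\neq 0$.
\item If $i$ is non-stubborn, Assumption \ref{Assump:1} gives a path in $\mathcal{G}$ from some stubborn agent to $i$. Let $j$ be the \emph{last} stubborn agent on that path. Every later node $k$ on the path is non-stubborn, so $\beta_k=0<1$. Hence each edge of this tail is also an edge of $G(R)$.
\item Prepending the edge $S_j\to j$ yields a path $S_j\to j\to\cdots\to i$ in $G(R)$ that starts at $S_j\in\alpha$.
\end{itemize}
Choosing the last stubborn agent is what avoids the problem that a node with $\beta_k=1$ has no incoming edges from $[n]$ in $G(R)$.

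As a cross-check: $R[\alpha^c]=(I_n-P)[\alpha^c]$ is a principal submatrix of $I_n-P$. Under Assumption \ref{Assump:1}, $I_n-P$ is a nonsingular M-matrix, and all principal minors of such a matrix are positive. So the inverse in \eqref{eqn:Schur_complement} exists directly.

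\textbf{Statement 2) (loopless Laplacian).} The self-loop weight $(1-\beta_i)w_{ii}$ is already absorbed into the diagonal of $R$. Indeed, row $i\le n$ of $R$ reads
\[
1-(1-\beta_i)w_{ii}-\sum_{j\ne i}(1-\beta_i)w_{ij}-\beta_i=0 .
\]
Together with the zero rows for the $S_j$ and the zero row sum of the last row, this gives $R\mathbf{1}=\mathbf{0}$ with non-positive off-diagonal entries. Hence $R$ is a loopless Laplacian, and Statement 2) of Lemma \ref{lm:basic_properties_1} transfers this property to $R/\alpha^c$. The one point I would be careful to state explicitly is the convention that $G(R)$ is read as a loopless digraph, with the diagonal equal to minus the off-diagonal row sum.

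\textbf{Statement 3) (paths from sources).} Take $k\in\alpha\setminus\{S_1,\dots,S_m\}$, so $k\in[n]$ or $k=O$.
\begin{itemize}
\item If $k\in[n]$, the argument of Statement 1) gives a path from some $S_j$ to $k$ in $G(R)$.
\item If $k=O$, extend any such path to a node of $[n]$ by its edge into $O$, which exists for every node of $[n]$.
\end{itemize}
Now split this $G(R)$-path at its successive nodes lying in $\alpha$. Each segment starts and ends in $\alpha$, and all its interior nodes lie in $\alpha^c$. By Statement 4) of Lemma \ref{lm:basic_properties_1}, each segment yields an edge of $G_\alpha$. Concatenating these edges gives a walk, and hence a path, from $S_j$ to $k$ in $G_\alpha$.

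The main obstacle I expect is the bookkeeping around agents with $\beta_i=1$, whose incoming edges vanish in $G(R)$. Choosing the last stubborn agent on each path resolves this uniformly in all three statements.
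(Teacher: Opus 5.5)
Your proposal is correct and follows the paper's proof essentially step for step. You verify the path hypothesis of Statement 1) of Lemma~\ref{lm:basic_properties_1} via Assumption~\ref{Assump:1} and the edges $S_j\to j$, inherit the loopless Laplacian property from its Statement 2), and split a source-to-node path of $G(R)$ at its $\alpha$-nodes using its Statement 4). Your choice of the last stubborn agent on the path (to handle agents with $\beta_k=1$) and your explicit treatment of the node $O$ in Statement 3) make precise two points that the paper's proof passes over.
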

{The proof of Lemma \ref{lm:Basic_properties} is in Apppendix-A}

Lemma \ref{lm:Basic_properties} presents {three} fundamental properties that reduced graph $G_{\alpha}$ 
and the corresponding Kron-reduced matrix $R/\alpha^c$ possess. 
The following remark shows that the properties in Lemma \ref{lm:Basic_properties} continue to hold if the graph  $G_{\alpha}$ is further reduced using Kron reduction. 

\begin{remark}
Consider an index set $\alpha$ such that $R/\alpha^c$ is well defined.
Since each node in $\alpha \setminus\{S_1,...,S_m\}$ has a path in $G_{\alpha}$ from an $S_j$ for $j\in[m]$, if we further reduce $G_{\alpha}$ to $G_{\gamma}$ for $\gamma \subset \alpha$ and $\gamma^c=\alpha \setminus \gamma$. Then, by the arguments in Lemma \ref{lm:Basic_properties}, $(R/\alpha^c)/\gamma^c$ is well defined if $\gamma \supseteq\{S_1,....,S_m,O\}$. It also follows that each node in $\gamma\setminus\{S_1,...,S_m\}$ has a path in $G_{\gamma}$ from at least one source $S_1,...,S_m$. Further, since $R/\alpha^c$ is a loopless Laplacian matrix, $(R/\alpha^c)/\gamma^c$ is also a loopless Laplacian matrix. Thus, Lemma \ref{lm:Basic_properties} holds under iterative Kron reduction of $G(R)$.
\end{remark}

\begin{expm}
Consider the network $G(R)$ in Fig. \ref{fig:augmented_graph}. We iteratively reduce $G(R)$ as shown in Fig. \ref{fig:iterative_KR} using the properties of Kron reduction discussed in Lemmas \ref{lm:basic_properties_1} and \ref{lm:Basic_properties}. We reduce $G(R)$ as follows:
\begin{itemize}
    \item First, we consider $\alpha_1=\{1,2,4,S_1,S_2,S_3,O\}$, and hence $\alpha_1^c=\{3,5,6\}$. The network $G(R)$ reduces to $G_{\alpha_1}$ shown in Fig. \ref{fig:network_S1}.
    \item Second, we consider $\alpha_2=\{1,S_1,S_2,S_3,O\}$, and hence $\alpha_2^c=\{2,4\}$. The network $G_{\alpha_1}$ reduces to $G_{\alpha_2}$ shown in Fig. \ref{fig:network_S2}.
    \item Third, we consider $\alpha_3=\{S_1,S_2,S_3,O\}$, and hence $\alpha_1^c=\{1\}$. The network $G_{\alpha_2}$ reduces to $G_{\alpha_3}$ shown in Fig.  \ref{fig:final_graph}. 
\end{itemize}
It follows from Lemma \ref{lm:Basic_properties} that $G(R)$ cannot reduce any further. Note that the self-loops are omitted in Fig. \ref{fig:iterative_KR}.
\end{expm}
\begin{figure}[h]
    \centering
    \begin{subfigure}{0.26\textwidth}
    \centering
\includegraphics[width=0.8\linewidth,height=2.25cm,keepaspectratio]{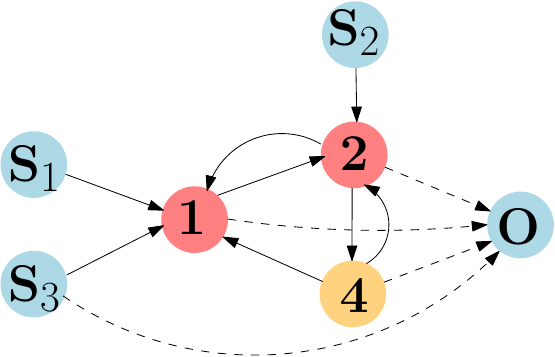}
    \caption{ ${G}_{\alpha_1}$}
    \label{fig:network_S1}
    \end{subfigure}
    \begin{subfigure}{0.2\textwidth}
        \centering
\includegraphics[width=0.7\linewidth,height=2.0cm,keepaspectratio]{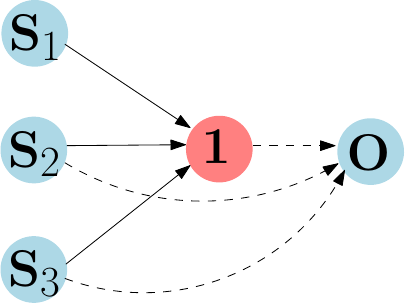}
    \caption{ $G_{\alpha_2}$}
    \label{fig:network_S2}
    \end{subfigure}
       \begin{subfigure}{0.25\textwidth}
        \centering
\includegraphics[width=0.7\linewidth,height=2.0cm,keepaspectratio]{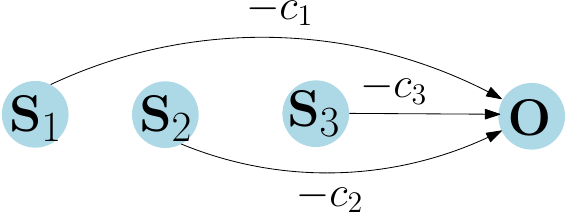}
    \caption{ $G_{\alpha_2}$}
    \label{fig:final_graph}
    \end{subfigure}
 
    \caption{Iterative Kron reduction of  $G(R)$}
    \label{fig:iterative_KR}
\end{figure}

Next, we demonstrate that the Kron reduction of $G(R)$ can be used to determine the influence centrality of the stubborn agents.

\begin{lemma}
\label{lm:influence_cen}
For $\alpha=\{S_1,...,S_m,O\}$, $R/\alpha^c$ is given as
\begin{align}
\label{eqn:Kr_inf_cen}
R/ \alpha^c&=  \begin{bmatrix}
\mathbf{0} & \mathbf{0} \\
 -\mathbf{1}_n^T/n (I_n-(I_n-\beta)W)^{-1} \beta[[n],[m]]& 1
\end{bmatrix} \nonumber \\
&= \begin{bmatrix}
\mathbf{0} & \mathbf{0} \\
 -[c_{1},c_{2},...,c_{m}] & 1
\end{bmatrix}.
\end{align}    
\end{lemma}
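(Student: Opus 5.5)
The plan is to compute the Schur complement \eqref{eqn:Schur_complement} directly with $\alpha=\{S_1,\dots,S_m,O\}$ and $\alpha^c=[n]$, so that $R[\alpha^c]=I_n-(I_n-\beta)W$. The first step is to check that this block is invertible. By Statement 1) of Lemma \ref{lm:Basic_properties} the reduction is well defined. Alternatively, invertibility follows directly from Assumption \ref{Assump:1}: under it, $\rho((I_n-\beta)W)<1$, which is exactly the condition used to obtain the convergence \eqref{eq:opinion_final}. Hence $I_n-(I_n-\beta)W$ is a non-singular M matrix.

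Next I would read off the remaining blocks of $R$ from \eqref{eqn:steady_state}, ordering the nodes of $\alpha$ as $(S_1,\dots,S_m,O)$:
\begin{itemize}
\item $R[\alpha]=\operatorname{diag}(\mathbf{0}_{m\times m},1)$, because the rows of the $S_j$ are zero and the $(O,O)$ entry is $1$;
\item $R[\alpha,\alpha^c]=\begin{bmatrix}\mathbf{0}_{m\times n}\\ -\mathbf{1}_n^T/n\end{bmatrix}$;
\item $R[\alpha^c,\alpha]=\begin{bmatrix}-\beta[[n],[m]] & \mathbf{0}_n\end{bmatrix}$.
\end{itemize}
Substituting these into \eqref{eqn:Schur_complement} gives
\[
R/\alpha^c=\operatorname{diag}(\mathbf{0},1)-\begin{bmatrix}\mathbf{0}\\ -\mathbf{1}_n^T/n\end{bmatrix}(I_n-(I_n-\beta)W)^{-1}\begin{bmatrix}-\beta[[n],[m]] & \mathbf{0}\end{bmatrix}.
\]
Expanding this product, the first $m$ rows vanish, the last column of the product is zero, and the bottom-left block equals $-\mathbf{1}_n^T(I_n-(I_n-\beta)W)^{-1}\beta[[n],[m]]/n$. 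This is the first equality in \eqref{eqn:Kr_inf_cen}.

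For the second equality, I would compare this with \eqref{eqn:influence_centrality}. The vector $\mathbf{c}^T=\mathbf{1}_n^T(I_n-(I_n-\beta)W)^{-1}\beta/n$, and the columns of $\beta$ indexed by $[m]$ pick out the stubborn agents. Hence the bottom-left row is exactly $-[c_1,\dots,c_m]$.

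As a consistency check, $R/\alpha^c$ is a Laplacian by Lemma \ref{lm:Basic_properties}, so its row sums vanish. This forces $\sum_{j\le m}c_j=1$, which agrees with $\mathbf{c}^T\mathbf{1}_n=1$ and the fact that non-stubborn agents have zero centrality. The only delicate point is justifying invertibility of $R[\alpha^c]$, which Assumption \ref{Assump:1} handles. The rest is block bookkeeping with the indexing convention that stubborn agents are $1,\dots,m$.
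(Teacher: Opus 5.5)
Your proposal is correct and follows the same route as the paper. The paper also first establishes that the reduction is well defined (it cites Lemma \ref{lm:Basic_properties}; your alternative via $\rho((I_n-\beta)W)<1$ under Assumption \ref{Assump:1} is the same fact used in Lemma \ref{lm:submatrix_M}) and then evaluates the Schur complement directly with $\alpha^c=[n]$. Your explicit block bookkeeping, and the matching of the bottom-left row with $-[c_1,\dots,c_m]$ through \eqref{eqn:influence_centrality}, simply spell out the computation that the paper leaves implicit.
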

\begin{proof}
 For the given $\alpha$, the Kron-reduced matrix 
$R/\alpha^c$ is well defined by Lemma \ref{lm:Basic_properties}.  Since $\alpha^c=[n]$, hence, $R/\alpha^c$ is given by eqn. \eqref{eqn:Kr_inf_cen}.
\end{proof}

Each edge in $G_{\alpha}$ from $S_i$ to $O$ has an edge weight equal to $-c_i$ for all $i\in[m]$. For $G(R)$ in Fig. \ref{fig:augmented_graph}, the reduced graph $G_{\alpha}$ for $\alpha=\{S_1,S_2,S_3,O\}$ is shown in Fig. \ref{fig:final_graph}.   
\subsection{Effect of edge modifications on the influence centrality}
In this subsection, we begin by relating the paths in 
$G(R)$ to the edge weights of the reduced graph $G_{\alpha}$. 
Therafter, we utilise this relation to determine the effect of the edge modifications on the reduced graph.
\begin{lemma}
\label{lm:submatrix_M}
Consider a network $\mathcal{G}$ with $m>1$ stubborn agents such that Assumption \ref{Assump:1} holds. For the corresponding $R$ matrix and $\alpha^c\subseteq [n]$, the submatrix $R[\alpha^c]$ is a non-singular $M$ matrix. 
\end{lemma}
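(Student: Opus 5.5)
Since $\alpha^c\subseteq[n]$, the submatrix $R[\alpha^c]$ is a principal submatrix of the top-left block $I_n-(I_n-\beta)W = I_n-P$ of $R$. So it suffices to show two things: first, that $I_n-P$ is a non-singular M matrix; second, that every principal submatrix of a non-singular M matrix is again a non-singular M matrix.

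\emph{Step 1: $I_n-P$ is a non-singular M matrix.} The matrix $P=(I_n-\beta)W$ is non-negative, and $I_n-P$ has the form $sI-B$ with $s=1$ and $B=P$. It therefore remains to show $\rho(P)<1$.

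Since $W$ is row-stochastic, $P\mathbf{1}_n=(\mathbf{1}_n-\beta\mathbf{1}_n)$. Row $i$ of $P$ thus sums to $1-\beta_i$, which is strictly less than $1$ exactly when $i$ is stubborn. Hence $P$ is row-substochastic with row sums at most $1$, so $\rho(P)\le 1$.

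To obtain strict inequality, I would use Assumption \ref{Assump:1}. Every non-stubborn agent has a path from some stubborn agent in $\mathcal{G}$. In terms of the matrix, this means every row index of $P$ reaches a row with deficient sum, since $w_{ij}>0$ encodes the edge $(j,i)$, so row $i$ depends on $j$. Consider a non-stubborn row $i$ lying on a path $s\to\cdots\to i$ of length $\ell_i\le n-1$. Then the $i$-th row sum of $P^{\ell_i+1}$ is strictly less than $1$. Stubborn rows already have deficient sums.

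Because row sums of powers of a substochastic matrix are non-increasing in the power, $P^{n}\mathbf{1}<\mathbf{1}$ entrywise. Hence $\|P^n\|_\infty<1$, and so $\rho(P)<1$. This is the standard argument that also underlies the convergence result \eqref{eq:opinion_final} cited from \cite{friedkin1990opinions}. I could alternatively just invoke that result: the existence of $(I_n-P)^{-1}$ together with the convergence of the series $\sum_k P^k$ implies $\rho(P)<1$.

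A subtle point is that the edges in $G(P)$ into a node $i$ with $\beta_i=1$ vanish. I need to check that the path-reachability argument survives this. It does, because the argument only needs row deficiency to propagate back from the stubborn end. The first stubborn agent on the path supplies the deficiency, and the edges after it go into non-stubborn nodes, where $\beta<1$, so those edges persist in $P$.

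\emph{Step 2: principal submatrices.} Let $H=I_n-P$ and take any $\alpha^c\subseteq[n]$. Then $H[\alpha^c]=I-P[\alpha^c]$, where $P[\alpha^c]$ is non-negative. Since $0\le P[\alpha^c]$ is dominated entrywise by $P$ (after padding with zeros), Perron–Frobenius monotonicity gives $\rho(P[\alpha^c])\le\rho(P)<1$.

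Hence $H[\alpha^c]$ is an M matrix with $s=1>\rho(P[\alpha^c])$, and so it is non-singular. Alternatively, I can use the property listed in the preliminaries that all principal minors of a non-singular M matrix are positive, which again gives non-singularity.

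I expect the main obstacle to be Step 1, namely justifying $\rho(P)<1$ rigorously from Assumption \ref{Assump:1}. I would handle it via the power-row-sum argument above.
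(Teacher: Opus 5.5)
Your proposal is correct and follows the same route as the paper. You show that $I_n-(I_n-\beta)W$ is a non-singular M matrix because $\rho\big((I_n-\beta)W\big)<1$ under Assumption~\ref{Assump:1}, and then pass to principal submatrices; the only difference is that you prove $\rho(P)<1$ yourself with the row-sum argument and handle the submatrices with Perron--Frobenius monotonicity, whereas the paper cites Friedkin--Johnsen and Berman--Plemmons (Theorem 2.4, Ch.~6) for these two facts.
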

\begin{proof}
Consider $\alpha^c=[n]$, in this scenario $R[\alpha^c]=I_n-(I_n-\beta)W$ is an $M$ matrix due to the following: 
\begin{itemize}
    \item [(i)] it is of the form $sI_n-B$ with positive $s=1$ and non-negative matrix $B=(I_n-\beta)W$. 
    \item [(ii)] $\rho((I_n-\beta)W)<1$. 
\end{itemize}
 Here, (i) holds by definition, 
 (ii) holds because $\mathcal{G}$ satisfies the connectivity condition in Assumption \ref{Assump:1} \cite{friedkin1990opinions}. Thus, $R[\alpha^c]$ is a non-singular $M$ matrix for $\alpha^c =[n]$. Each principal submatrix of a non-singular $M$ matrix is also a non-singular $M$ matrix (Theorem 2.4 of Ch. 6 in \cite{Plemmons}). Thus, $R[\alpha^c]$ is a non-singular $M$ matrix for $\alpha^c\subset[n]$ as well.
\end{proof}

\begin{remark}
\label{rem:walks_edges}
For any $\alpha^c \subseteq [n]$, the matrix $R[\alpha^c]=I-(I-\beta[\alpha^c])W[\alpha^c]$. Lemma \ref{lm:submatrix_M} shows that $R[\alpha^c]$ is a non-singular $M$ matrix. Consequently, $R[\alpha^c]$ has the following important properties:
\begin{itemize}
    \item $R[\alpha^c]^{-1}$ is a non-negative matrix,
    \item $R[\alpha^c]^{-1}= \sum_{k=0}^{\infty}\big((I-\beta[\alpha^c])W[\alpha^c]\big)^k$. 
\end{itemize}
Thus, we can express the Kron-reduced matrix  $R/\alpha^c$ as:
\begingroup
\setlength{\abovedisplayskip}{2pt}
\setlength{\belowdisplayskip}{2pt}
\begin{align}
\label{eqn:walks_paths_o}
    R/\alpha^c=R[\alpha]-R[\alpha,\alpha^c]\sum_{k=0}^{\infty}\big((I-\beta[\alpha^c])W[\alpha^c]\big)^kR[\alpha^c,\alpha]
\end{align}
\endgroup
Eqn. \eqref{eqn:walks_paths_o} relates the weights of the edges in the reduced graph $G_{\alpha}$ with the walks in the original graph $G(R)$. 

Let graph $G_{\alpha}$ be obtained by Kron reduction of $G(R)$ such that $\alpha=\{S_1,...,S_m,O\}$. 
Remark \ref{rem:walks_edges} shows that 
the weight $-c_j$ of edge $(S_j,O)$ in $G_{\alpha}$ 
depends on the weights of all walks from $S_j$ to $O$ in $G(R)$ that traverse the eliminated nodes in set $\alpha^c=[n]$.  

\begin{remark}
{Suppose $G(R)$ is reduced to $G_{\alpha}$ such that $\alpha^c \subset [n]$. By Remark \ref{rem:walks_edges}, the edge weight of $(S_j,O)$ in $G_{\alpha}$ depends on the walks from $S_j$ to $O$ in $G(R)$ that traverse the nodes in $\alpha^c$. Hence, when $\alpha^c \subset [n]$, the walks that determine the edge weight of $(S_j,O)$ in $G_{\alpha}$ are a subset of the walks in $G(R)$ used to determine $-c_j$. Thus, as $\alpha^c \to [n]$, the weight of edge $(S_i,O)$ in $G_{\alpha}$ provides an approximation of the influence centrality $c_i$. }
\end{remark}

{The relation between the influence centrality and walks in $G(R)$ shows that an edge modification $(a,b,d)$ in $\mathcal{G}$ (equivalently in $G(R)$) affects the influence centrality of agents as the walks with edges $(a,b)$ and $(d,b)$ get impacted.
However, whether $(a,b,d)$ increases or decreases the influence centrality of an agent still remains unclear. We address this by relating the topological properties of nodes with the influence centrality. To this end, we iteratively reduce $G(R)$ using Kron reduction 
such that the set $\alpha$ contains only the required nodes. Thereafter, the effect of the modified edges is examined in a significantly reduced graph, which simplifies the analysis. }

The following result establishes how modifying an edge (adding or increasing/decreasing its edge weight) in $G(R)$ affects the reduced graphs obtained under iterative Kron reduction. More generally, we consider any loopy Laplacian matrix $Q$ with the associated graph $G(Q)$ and examine how modifying an edge in $G(Q)$ affects the
reduced graph. 

\end{remark}

\begin{thm}
\label{lemma:modification_propagates}
Consider a loopy Laplacian matrix $Q\in \mathbb{R}^{n \times n}$ with associated digraph ${G}(Q)$. Let $\alpha \subset[n]$ such that $Q/\alpha^c$ is well-defined and $G(Q)$ is reduced to $G_{\alpha}$ using Kron reduction. 
Let $(u,v)$ be an edge in ${G}(Q)$ whose edge weight is increased (or reduced):
\begin{enumerate}
    \item[(i)] If $u,v \in \alpha$, then only the edge weight of $(u,v)$ in $G_{\alpha}$ increases (reduces).
    \item[(ii)] If $u \in \alpha$ and $v \in \alpha^c$, then the edge weight of an edge $(u,j)$ in $G_{\alpha}$ increases (reduces) if there is a path in $G(Q)$ from $u$ to $j$ that passes through $(u,v)$ and traverses only the nodes in $\alpha^c$ (except $u$ and $j$). 
    \item[(iii)] If $u \in \alpha^c$ and $v \in \alpha$, then the edge weight of the edge $(k,v)$ in $G_{\alpha}$ increases (reduces) if there is a path in $G(Q)$ from $k$ to $v$ that passes through $(u,v)$ and traverses only the nodes in $\alpha^c$ (except $k$ and $v$).
    \item[(iv)] If both $u,v \in \alpha^c$, then the edge weight of the edge $(j,k)$ in $G_{\alpha}$ increases (reduces) if there is a path in $G(Q)$ from $j$ to $k$ that passes only through $(u,v)$ and traverses only the nodes in $\alpha^c$ (except $j$ and $k$).
\end{enumerate}
 where $j,k\in \alpha$
\end{thm}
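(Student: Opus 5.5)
Write the modified matrix as $\hat Q = Q + \delta E$, where $E$ encodes the change in edge $(u,v)$ and $\delta>0$ corresponds to an increase in its weight (the decrease case is symmetric). With the paper's convention an edge $(u,v)$ sits in entry $(v,u)$. Adding weight to an edge subtracts $\delta$ from $q_{vu}$ and, for a loopy Laplacian, adds $\delta$ to $q_{vv}$ so that row sums stay zero. The reduced matrix is $\hat Q/\alpha^c=\hat Q[\alpha]-\hat Q[\alpha,\alpha^c]\hat Q[\alpha^c]^{-1}\hat Q[\alpha^c,\alpha]$. The plan is to compare this expression with $Q/\alpha^c$ term by term, case by case, using the walk expansion from Remark~\ref{rem:walks_edges}. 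Since $Q[\alpha^c]$ is a nonsingular M matrix (the same argument as Lemma~\ref{lm:submatrix_M} applies to any well-defined Kron reduction), we can write $Q[\alpha^c]=D_c-A_c$ with $A_c\ge 0$. This gives $Q[\alpha^c]^{-1}=\sum_k (D_c^{-1}A_c)^kD_c^{-1}\ge 0$, a sum of walk weights inside $\alpha^c$. Hence the off-diagonal entry $(Q/\alpha^c)_{kj}$ for $j\neq k\in\alpha$ equals minus [direct weight $j\to k$ plus the sum over walks $j\to\alpha^c\to\cdots\to\alpha^c\to k$]. Its magnitude is therefore a nonnegative, monotone function of the edge weights involved.

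\emph{Case (i).} If $u,v\in\alpha$, the change touches only $Q[\alpha]$. Then $\hat Q/\alpha^c-Q/\alpha^c=\delta E[\alpha]$, so only edge $(u,v)$ in $G_\alpha$ changes, together with the diagonal, which is a self-loop adjustment and not an edge.

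\emph{Case (ii).} If $u\in\alpha$ and $v\in\alpha^c$, the perturbation affects $Q[\alpha^c,\alpha]$ at entry $(v,u)$ and also the diagonal $Q[\alpha^c]_{vv}$. The off-diagonal part adds $\delta$ to walks that start with the edge $u\to v$ and continue inside $\alpha^c$ to some $j\in\alpha$. Each such walk contains a path from $u$ to $j$ through $(u,v)$ within $\alpha^c$, so only edges $(u,j)$ with such a path are affected.

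\emph{Case (iii).} This is the transpose situation: $u\in\alpha^c$ and $v\in\alpha$. The perturbation now affects $Q[\alpha,\alpha^c]$ at entry $(v,u)$ and $Q[\alpha]_{vv}$. Only entries in row $v$ change, namely edges $(k,v)$ with a path $k\to\cdots\to u\to v$ inside $\alpha^c$.

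\emph{Case (iv).} If $u,v\in\alpha^c$, the change lies in $Q[\alpha^c]$. A resolvent/Sherman--Morrison identity gives $\hat Q[\alpha^c]^{-1}-Q[\alpha^c]^{-1}=-\,Q[\alpha^c]^{-1}\delta E_c\hat Q[\alpha^c]^{-1}$, which is a rank-one update with factors $Q[\alpha^c]^{-1}e_v$ and $(e_u-e_v)^T\hat Q[\alpha^c]^{-1}$. Inserting this between $Q[\alpha,\alpha^c]$ and $Q[\alpha^c,\alpha]$ shows that entry $(k,j)$ changes only if some walk $j\to\cdots\to u$, and therefore $v\to\cdots\to k$, exists inside $\alpha^c$. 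These are exactly the paths from $j$ to $k$ through $(u,v)$ described in the statement. Walks can be pruned to paths for the existence claim by removing cycles.

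\textbf{Main obstacle: the direction of the change.} The diagonal increase accompanies the off-diagonal increase, so in cases (ii)--(iv) the rerouting through the augmented diagonal partly counteracts the new edge. I would handle this by normalising: write $D_c^{-1}A_c$ as a substochastic transition matrix. Raising the weight of $(u,v)$ by $\delta$ while raising $q_{vv}$ by $\delta$ then moves probability mass at $v$ toward $u$ and away from the other in-neighbours of $v$. This mass shift increases the walk weight of every path using $(u,v)$ and decreases that of paths using the competing edges. For cases (ii) and (iii) I would verify directly that the rank-one formula yields a change in $(Q/\alpha^c)_{ju}$ or $(Q/\alpha^c)_{vk}$ proportional to $\delta$ times a positive walk sum. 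Should a pure loopless perturbation (no diagonal change) be intended, I would fall back on plain monotonicity of the Neumann series in the entries of $A_c$, $Q[\alpha,\alpha^c]$ and $Q[\alpha^c,\alpha]$. That gives the sign immediately and is the version I expect the authors use.
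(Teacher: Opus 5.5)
Your one-sentence fallback at the end is exactly the paper's proof. There, the perturbation changes the single entry $q_{vu}$ and leaves the diagonal alone. That entry is what ``the weight of edge $(u,v)$ in $G(Q)$'' means, and it is how the theorem is used later: an $(a,b,d)$ modification moves $R_{ba}$ and $R_{bd}$ by opposite amounts with $R_{bb}$ fixed. The paper then writes $Q[\alpha^c]=tI-B[\alpha^c]$ and reads the sign off the monotonicity of the Neumann series in each block case.

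Everything before that sentence, however, uses the compensated reading in which $q_{vv}$ is also raised by $\delta$. Under that reading statements (ii) and (iv) are false, so your normalisation argument cannot succeed. When $v\in\alpha^c$, the extra $\delta$ enters $Q[\alpha^c]^{-1}$ and weakens every reduced edge whose walks visit $v$. So your claim in (ii) that only the edges $(u,j)$ change is wrong there. Likewise your claim in (iv) that entry $(k,j)$ changes only if $j$ reaches $u$: the $j$-th entry of $(e_u-e_v)^T\hat Q[\alpha^c]^{-1}Q[\alpha^c,\alpha]$ is nonzero whenever $j$ reaches $v$ but not $u$ inside $\alpha^c$.

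The sign also fails under compensation:
\begin{itemize}
\item In (ii), suppose $u$ is the only in-neighbour of $v$ and $v$ has no self-loop. The $v$-th equation of $Q[\alpha^c]x_{\alpha^c}=-Q[\alpha^c,\alpha]x_\alpha$ reads $(q_{vv}+\delta)(x_v-x_u)=0$ for every $\delta$, so $Q/\alpha^c$ does not change at all.
\item In (iv), take $\alpha^c=\{u,x,v\}$ and unit-weight edges $j\to x$, $x\to v$, $j\to u$, $j'\to u$, $u\to v$, $v\to k$. The reduced weight of $(j,k)$ is $3/4$ before and $(3+\delta)/(4+2\delta)<3/4$ after the compensated increase, even though $j\to u\to v\to k$ runs through $\alpha^c$.
\end{itemize}
Each path through $(u,v)$ does gain, as you say. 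But the mass is taken from the competing in-neighbour $x$, which leads to $j$ more strongly than $u$ does.

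So discard the compensated model and make $\hat Q=Q-\delta e_ve_u^T$ the whole proof. Let $M=Q[\alpha^c]^{-1}\ge 0$.
\begin{itemize}
\item In case (ii), the change of $Q/\alpha^c$ is $\delta\,Q[\alpha,\alpha^c]Me_ve_u^T$.
\item In case (iii), it is $\delta\,e_ve_u^TMQ[\alpha^c,\alpha]$.
\item In case (iv), Sherman--Morrison gives $\hat M-M=\frac{\delta}{1-\delta M_{uv}}Me_ve_u^TM\ge 0$. The change is therefore $-\frac{\delta}{1-\delta M_{uv}}(Q[\alpha,\alpha^c]Me_v)(e_u^TMQ[\alpha^c,\alpha])$, which is nonpositive. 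Its entry $(k,j)$ is nonzero exactly when $j$ reaches $u$ and $v$ reaches $k$ inside $\alpha^c$.
\end{itemize}
For $\delta>0$ all three changes are entrywise nonpositive and are supported exactly on the edges named in the statement.

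This is your rank-one idea carried out in the right model, and it is slightly sharper than the paper's argument. It makes explicit that an increase in case (iv) needs $\delta M_{uv}<1$ for $\hat Q[\alpha^c]$ to stay invertible; without compensation, $\hat Q$ need not remain a loopy Laplacian. The paper leaves this implicit by reusing the same $t$ in the Neumann series. Decreases need no such condition.
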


The proof of Theorem \ref{lemma:modification_propagates} is given in Appendix-A.

Theorem \ref{lemma:modification_propagates} specifies the affected edges in the reduced graph $G_{\alpha}$ when the weight of any edge $(u,v)$ is changed in $G(Q)$. Moreover, it also determines whether the edge weight in the reduced graph increases or decreases. Fig. \ref{fig:propagation} illustrates the impact of modifying an edge under each of the conditions presented in Theorem \ref{lemma:modification_propagates}.





\begin{figure}[h]
    \centering
    \begin{subfigure}{0.23\textwidth}
      \centering
    \includegraphics[width=0.8\linewidth]{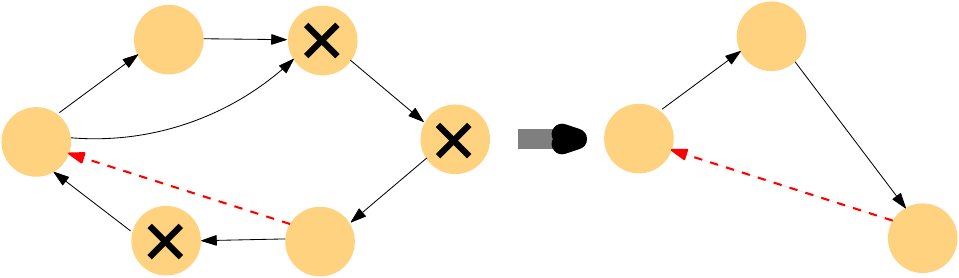}
    \caption*{(i)}
    \label{fig:enter-label}  
    \end{subfigure}
    \begin{subfigure}{0.23\textwidth}
      \centering
    \includegraphics[width=0.8\linewidth]{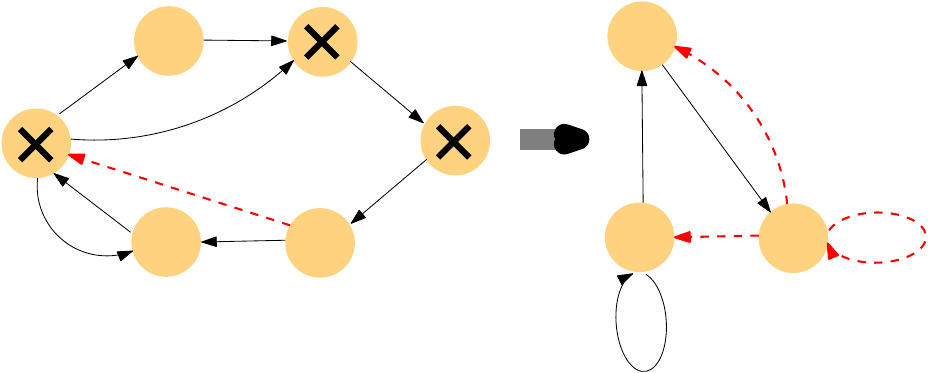}
    \caption*{(ii)}
    \label{fig:enter-label}  
    \end{subfigure}
\begin{subfigure}{0.23\textwidth}
      \centering
    \includegraphics[width=0.8\linewidth]{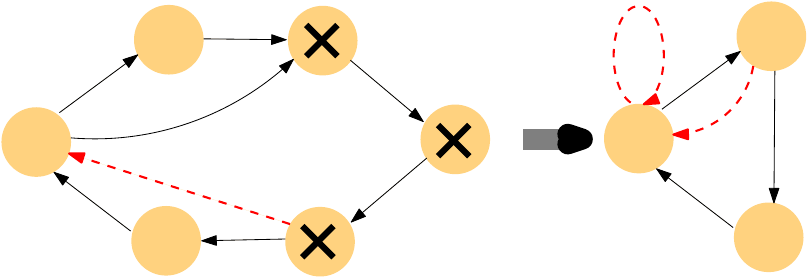}
    \caption*{(iii)}
    \label{fig:enter-label}  
    \end{subfigure}
    \begin{subfigure}{0.23\textwidth}
      \centering
    \includegraphics[width=0.8\linewidth]{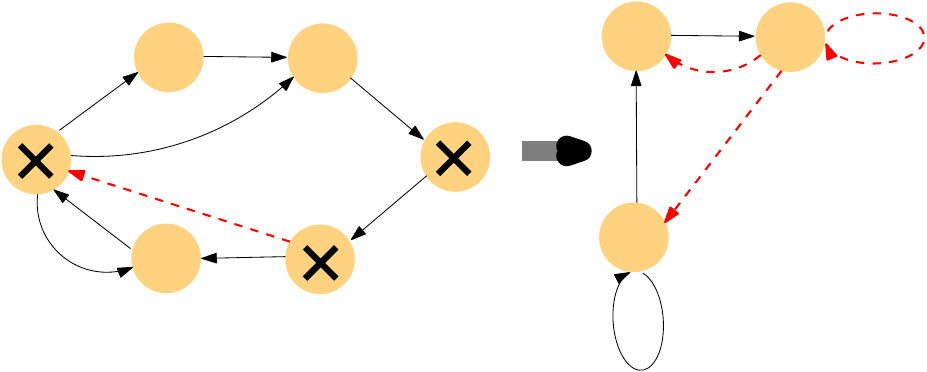}
    \caption*{(iv)}
    \label{fig:enter-label}  
    \end{subfigure}
    \caption{The effect of perturbing an edge in $G(Q)$ on $G_{\alpha}$ for the conditions (i)--(iv) in Theorem \ref{lemma:modification_propagates}. In each subfigure, the network on the left is $G(Q)$ and the reduced network on the right is $G_{\alpha}$. The nodes marked with a cross in $G(Q)$ denote nodes in $\alpha^c$. The perturbed edges in $G(Q)$ and $G_{\alpha}$ are denoted by dashed red arrows.}
    \label{fig:propagation}  
  
\end{figure}

\begin{remark}
In the given setting, $G(R)$ and its reduced graphs are derived from loopless Laplacian matrices. As a result, any modification to $G(R)$ affects at least two or more edges simultaneously to ensure that the modified graph $G(\hat{R}) $ corresponds to a Laplacian matrix $\hat{R}$. Theorem \ref{lemma:modification_propagates} defines the effect of a single edge modification in the $G(R)$ on the reduced graph. When two or more edges are modified in $G(R)$, the change in entries of the Kron-reduced matrix $R/\alpha^c$ must be closely examined.

Suppose $t\geq 2$ edges in $G(R)$ are modified: $(p_1,q_1)$, $(p_2,q_2)$,...,$(p_t,q_t)$.  When $q_k\in \alpha$ for all $k\in [t]$, 
only the submatrices $R[\alpha]$ and $R[\alpha,\alpha^c]$ have modified entries. Hence, we can repeatedly apply Theorem \ref{lemma:modification_propagates} for the $t$ modified edges and determine their effect in the following scenarios:
\begin{itemize}
    \item If $p_k,q_k\in \alpha$ for all $k\in[t]$, by Theorem \ref{lemma:modification_propagates}, the weights of the corresponding edges in $G_{\alpha}$ increase (or reduce).
    \item If there is a node $p_h\in \alpha^c$ for some $h\in [t]$ and  $q_k\in \alpha$ for all $k\in[t]$. The weight of $(j,q_k)$ in $G_{\alpha}$ increases (reduces) only if each path $j\to q_k$ in $G(R)$, composed of  nodes in $\alpha^c \cup \{j,q_k\}$, traverses only the modified edges $(p_i,q_k)$ with increased (reduced) edge weights, where $j\in \alpha$ and $i\in[t]$. Additionally, if the edge $(j,q_k)$ exists in $G(R)$, 
    then its weight must also either increase (reduce) or remain unchanged.
    


\end{itemize}

\end{remark}
Using Theorem \ref{lemma:modification_propagates} and the topological properties of agents presented in Sec. \ref{subsec:LTPs}, in the following subsection, we identify such edge modifications that increase the influence centrality of a desired stubborn agent.
\subsection{Topology-based edge modifications}
\label{sec:useful_modications}
In this subsection, we employ Kron reduction framework to ascertain the impact an edge modification $(a,b,d)$, where associated nodes $a$ and $d$ are from the categories defined in Sec. \ref{subsec:LTPs}. 
We begin by identifying the \textit{redundant edge modifications} under which the influence centrality of each stubborn agent remains unchanged. 

\begin{thm}
\label{thm:redundant_modification} 
Consider a network $\mathcal{G}$ with opinions of agents governed by the FJ model \eqref{eq:opinion}. Let $\mathcal{G}$ contain 
$m>1$ stubborn agents such that Assumption \ref{Assump:1} holds. If $p$ is an LTP agent and $\mathcal{N}_p$ is the set of agents persuaded by $p$, then edge modification $(a,b,d)$, satisfying Assumption \ref{assump:2}, is redundant if $a$ and $d$ belong to the set $\mathcal{N}_p \cup \{p\}  $.
\end{thm}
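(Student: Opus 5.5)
By Theorem \ref{Lemma:path-weights}, or more directly by eqn. \eqref{eqn:delta_c_i}, the change $\Delta c_s$ is proportional to $\mathbf{v}^T F e_s=(e_d-e_a)^T F e_s$, where $F=(I_n-P)^{-1}=\sum_{k\ge 0}P^k$. The plan is to show $e_a^T F e_s=e_d^T F e_s$ for every stubborn agent $s$ whenever $a,d\in\mathcal{N}_p\cup\{p\}$. Then $\Delta c_s=0$ for all $s$, which is exactly redundancy. I will use the Sherman--Morrison expression only to reduce the claim to this equality. The denominator $1+\mathbf{v}^TF\mathbf{u}$ needs no attention, since the modified matrix remains a nonsingular M matrix and so $\hat{c}_s$ is well defined.

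The key step is a row identity for $F$: for every $q\in\mathcal{N}_p$ and every stubborn $s$, $F_{qs}=F_{ps}$. Note that $F\beta$ maps $\mathbf{x}(0)$ to $\mathbf{x}^*$, and $F_{qs}\beta_s$ is the weight of stubborn $s$ in $x_q^*$. I will argue that $x_q^*=x_p^*$ for every initial opinion vector; the paper cites this cluster fact from \cite{shrinate2025opinionclusteringfriedkinjohnsenmodel}, but I would rederive it here. Let $U$ be the set of nodes all of whose paths from stubborn agents pass through $p$, with $p$ excluded. This set contains $\mathcal{N}_p$, and every node in it is non-stubborn. Nodes in $U$ receive edges only from $U\cup\{p\}$: if some $q'\in U$ had an in-neighbour $r\notin U\cup\{p\}$, then $r$ would have a path from a stubborn agent avoiding $p$ (by Assumption \ref{Assump:1} and the definition of $U$), and that path would extend to a $p$-avoiding path to $q'$, a contradiction. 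Hence the steady-state equations restricted to $U$ read $\mathbf{x}^*[U]=W[U]\mathbf{x}^*[U]+W[U,\{p\}]x_p^*$, with row sums of $W$ equal to one on $U$. Lemma \ref{lm:submatrix_M}, or the fact that every node in $U$ is reached from $p$, gives that $I-W[U]$ is a nonsingular M matrix. Therefore $\mathbf{x}^*[U]=\mathbf{1}x_p^*$ is the unique solution. Since this holds for every $\mathbf{x}(0)$, we get $e_q^TF\beta=e_p^TF\beta$, and dividing by $\beta_s>0$ gives $F_{qs}=F_{ps}$ for each stubborn $s$.

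Applying this with $a$ and $d$ each equal to $p$ or to an element of $\mathcal{N}_p$ gives $F_{as}=F_{ps}=F_{ds}$, so $(e_a-e_d)^T F e_s=0$. Hence $\Delta c_s=0$ for every stubborn agent $s$. Non-stubborn agents have zero centrality both before and after the modification, because $\beta$ is unchanged. One could instead phrase the argument through the Kron-reduced graph: take $\alpha=\{p,S_1,\dots,S_m,O\}$ and use Theorem \ref{lemma:modification_propagates}, noting that in that reduction $U$ hangs only off $p$. The direct row-equality argument is shorter, so I would present it and mention the Kron viewpoint only as a remark.

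The main obstacle is the closure claim that nodes of $\mathcal{N}_p$, or of $U$, have in-neighbours only within $U\cup\{p\}$. In particular, Definition \ref{defn:LTP} quantifies over paths, and this must be converted correctly into the statement that each in-neighbour of $q$ is either $p$ or itself dominated by $p$. Once that closure is established, the uniqueness argument and the invariance of $\beta$ are routine.
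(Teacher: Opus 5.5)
Your proof is correct, and it takes a different route from the paper's.

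\textbf{The paper's route.} The paper stays inside its Kron-reduction framework. It reduces $G(R)$ to $\alpha_1=\{b,p,S_1,\dots,S_m,O\}$ and uses Theorem~\ref{lemma:modification_propagates} together with the fact that every path from a source $S_i$ to $a$ or $d$ passes through $p$. This shows that only the entries $(p,b)$ and $(b,b)$ of row $b$ of $R/\alpha_1^c$ can change. It then splits on whether $b$ has a $p$-avoiding path to $a$ or $d$ (cases C1/C2). In each case it kills the change with the zero-row-sum property of the reduced Laplacian, after a second reduction eliminating $b$ in case C2.

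\textbf{Your route.} You use eqn.~\eqref{eqn:delta_c_i} to reduce redundancy to the parameter-free identity \eqref{eqn:redundant_md}, and you prove that identity directly. $\mathcal{N}_p$ receives edges only from $\mathcal{N}_p\cup\{p\}$, so the steady-state equations on $\mathcal{N}_p$ form a nonsingular M-matrix system (Lemma~\ref{lm:submatrix_M}) whose unique solution is $\mathbf{1}x_p^*$. This re-derives the clustering fact of \cite{shrinate2025opinionclusteringfriedkinjohnsenmodel}, and $F_{qs}=F_{ps}$ follows.

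\textbf{Comparison.} Your version buys several things:
\begin{itemize}
\item It is shorter and more transparently rigorous.
\item It makes clear that $b$ plays no role, so the C1/C2 split disappears.
\item It covers $a=p$ or $d=p$ explicitly, whereas the paper's written proof only treats $a,d\in\mathcal{N}_p$.
\item It establishes \eqref{eqn:redundant_md} as the real content, rather than as an after-the-fact consequence via Theorem~\ref{Lemma:path-weights}.
\end{itemize}
The paper's route buys methodological uniformity. The same chain of Kron reductions with sign-tracking through Schur complements is what it reuses for Theorem~\ref{thm:useful_mod}. There a strict inequality $F_{as}>F_{ds}$ is needed, and row equality alone does not supply it.

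\textbf{Two details to tidy when writing up.}
\begin{itemize}
\item Extending a $p$-avoiding path to $r$ by the edge $(r,q')$ may produce a walk. Truncate it at the first visit to $q'$.
\item Your denominator remark holds because $\det(I-P+\mathbf{u}\mathbf{v}^T)=\det(I-P)\,(1+\mathbf{v}^TF\mathbf{u})$, and the modified graph still satisfies Assumption~\ref{Assump:1}, since $w<w_{bd}$ keeps $(d,b)$. Alternatively, if $\mathbf{v}^TFe_s=0$ then $(I-P+\mathbf{u}\mathbf{v}^T)Fe_s=e_s$ directly.
\end{itemize}
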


The proof of Theorem \ref{thm:redundant_modification} is given in Appendix-B.  

Theorem \ref{thm:redundant_modification} shows that if $a,d\in \mathcal{N}_p\cup \{p\}$, then the influence centrality of none of the agents is changed. By Theorem \ref{Lemma:path-weights}, it follows that for the given  $a$ and $d$, the walks from stubborn agent $i\in[m]$ to $a$ and $d$ in $G(P)$ satisfy
\begingroup
\setlength{\abovedisplayskip}{2pt}
\setlength{\belowdisplayskip}{2pt}
\begin{align}
\label{eqn:redundant_md}
e_a^T \sum_{k=0}^{\infty}(P)^{k} e_i=e_d^T \sum_{k=0}^{\infty}(P)^{k} e_i 
\end{align}
\endgroup
for each $i\in [m]$. Interestingly, while $P$ depends on both edge weights and stubbornness, eqn. \eqref{eqn:redundant_md} holds for this choice of $a$ and $d$, independent of the edge weights and the stubbornness of agents. 
Thus, the topological relation of LTP agent $p$ with $a$ and $d$ ensures that 
the corresponding walks in $G(P)$ satisfy eqn. \eqref{eqn:redundant_md}. 

\begin{remark}
 Consider a scenario when $a$ and $d$ are endorsers of a stubborn agent $s$. By Theorem \ref{thm:redundant_modification}, $(a,b,d)$ is a redundant modification because $s$ is an LTP agent with $a,d \in \mathcal{N}_s \cup \{s\}$. Thus, an edge modification, where the edge is added from a stubborn agent, can also be a redundant modification. Hence, the edge-addition protocol from \cite{wang2025addinglinks} does not guarantee a strict increase in influence centrality, as demonstrated in the following example. 
\end{remark}

 \begin{expm}
\label{expm:null_mod}
Consider the network $\mathcal{G}$ in Fig. \ref{fig:14}. In Example \ref{expm:LTP}, the nodes $2$ and $5$ were shown to be LTP agents with $\mathcal{N}_2=\{3,4\}$ and $\mathcal{N}_5=\{6\}$. Consequently, the following edge modifications: (i) $(5,1,6)$ (ii) $(2,5,4)$ and (iii) $(2,4,3)$ are redundant, as illustrated in Fig. \ref{fig:Redundant_Modifications_demo}. 
\begin{figure}[h]
    \centering
    \includegraphics[width=0.7\linewidth,keepaspectratio]{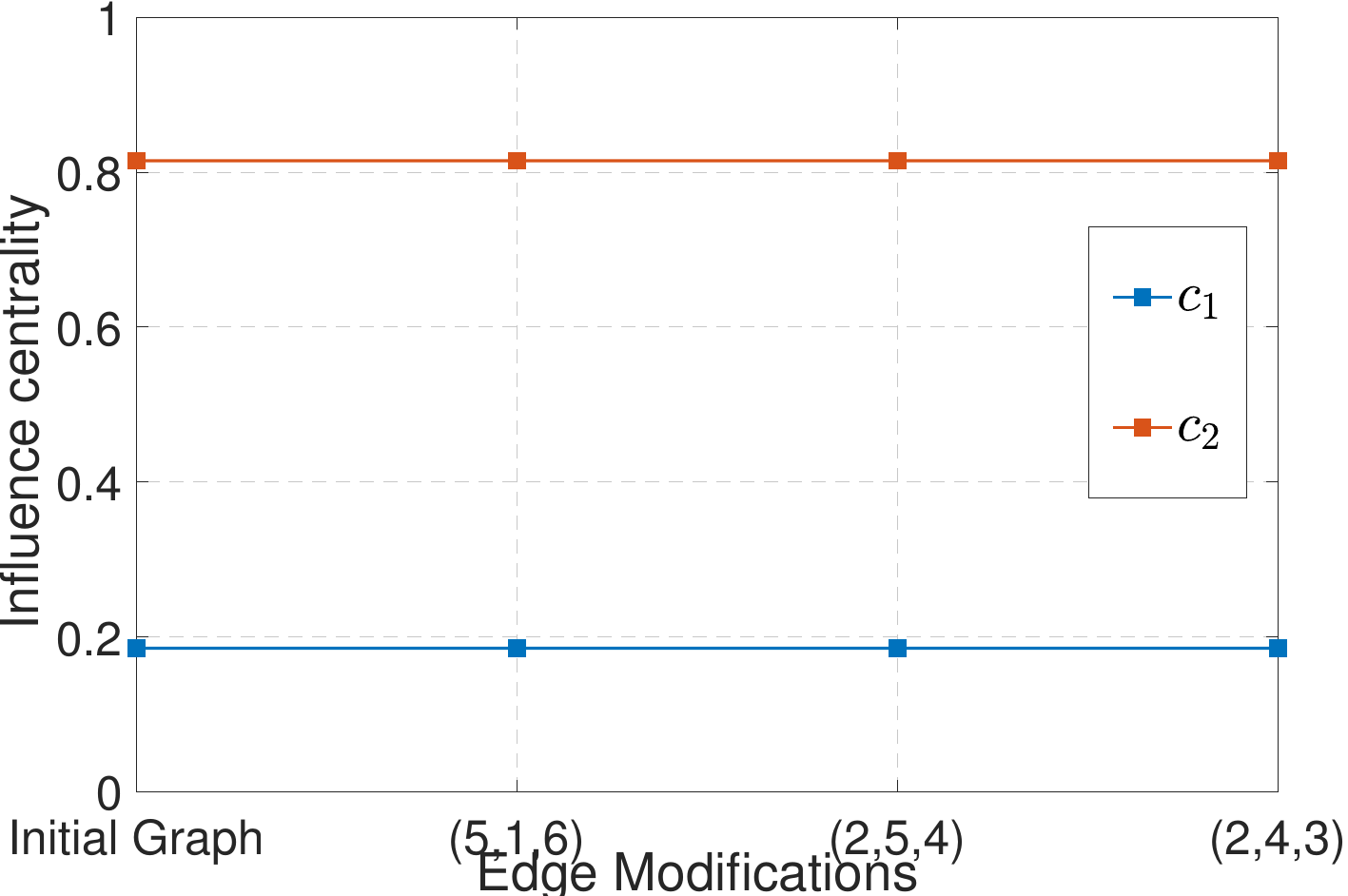}
    \caption{Redundant Modifications in $\mathcal{G}$}
    \label{fig:Redundant_Modifications_demo}
\end{figure}



\end{expm}

Under edge modification $(2,4,3)$, the edge $(2,4)$ with a suitable weight $w$ is added from the stubborn agent $2$. Agent $4$ has only one in-neighbour, whose edge weight is reduced to $(1-w)$. Clearly, this edge modification is equivalent to the edge addition protocol used in \cite{wang2025addinglinks} (where an edge $(a,b)$ with weight $w$ is added from the stubborn agent and the weights of all existing in-neighbours of $b$ are multiplied by $(1-w)$). However, unlike \cite{wang2025addinglinks}, this edge modification is redundant. Thus, in general, adding an edge from a stubborn agent also does not guarantee an increase in its influence centrality. The following result addresses this problem and characterises such edge modifications $(a,b,d)$ that always increase a desired agent's influence centrality in any arbitrary digraph.


\begin{thm}
\label{thm:useful_mod}
Consider a network $\mathcal{G}$ with opinions of agents governed by the FJ model \eqref{eq:opinion}. Let $\mathcal{G}$ contain 
$m>1$ stubborn agents such that Assumption \ref{Assump:1} holds. 
An edge modification $(a,b,d)$, satisfying Assumption \ref{assump:2}, increases the influence centrality of the stubborn agent $s\in[m]$, if $a$ is the endorser of $s$ and $d$ is the non-endorser of $s$.
\end{thm}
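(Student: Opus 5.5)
By Theorem \ref{Lemma:path-weights}, it suffices to show that $(e_a-e_d)^T F e_s>0$, where $F=(I_n-P)^{-1}=\sum_{k\ge 0}P^k$. The proof works with the entries $F_{is}$: each is the total weight of walks from $s$ to $i$ in $G(P)$, where an edge $(j,i)$ carries weight $(1-\beta_i)w_{ij}$. Two facts drive the argument.

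First, the mass identity. Since $W$ is row-stochastic, $(I_n-P)\mathbf{1}=\beta\mathbf{1}$, so $F\beta\mathbf{1}=\mathbf{1}$. Hence for every node $i$ we have $\sum_{j\in[m]}F_{ij}\beta_j=1$, and every term in this sum is non-negative.

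Second, the first-passage decomposition through $s$. Every walk from a stubborn agent $j$ to $i$ that visits $s$ splits uniquely at its last visit to $s$: a walk $j\to s$, followed by a walk $s\to i$ that never returns to $s$. Let $h_{is}$ denote the total weight of walks $s\to i$ that avoid $s$ after the start, with $h_{ss}=1$. Let $r_{ij}$ denote the weight of walks $j\to i$ that never visit $s$. Then $F_{ij}=h_{is}F_{sj}+r_{ij}$, and in particular $F_{is}=h_{is}F_{ss}$. Substituting into the mass identity and using $\sum_j F_{sj}\beta_j=1$ gives
\begin{equation*}
1=h_{is}+\sum_{j\in[m]} r_{ij}\beta_j .
\end{equation*}
The sum on the right is the contribution of stubborn opinions that reach $i$ while avoiding $s$. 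This is exactly the Kron-reduction picture of Lemma \ref{lm:influence_cen} and Remark \ref{rem:walks_edges}. One reduces $G(R)$ onto $\alpha=\{s,a,d,S_1,\dots,S_m,O\}$. By Theorem \ref{lemma:modification_propagates}, the reduced edges into $a$ and $d$ are then precisely the quantities $h_{\cdot s}$ and $r_{\cdot j}$.

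Now apply this to $a$ and $d$.

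For the endorser $a$: if $a=s$, then $h_{as}=1$. Otherwise $a\in\mathcal{N}_s$ is non-stubborn, and every path from every stubborn agent to $a$ passes through $s$. So any walk $j\to a$ avoiding $s$ would contain such a path avoiding $s$, which is impossible. Hence $r_{aj}=0$ for all $j$, and therefore $h_{as}=1$. In both cases $F_{as}=F_{ss}$. This is consistent with $a$ sharing $s$'s final opinion.

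For the non-endorser $d$ there are two cases. If $d$ is stubborn with $d\neq s$, then $r_{dd}\ge 1$ from the trivial walk, and $\beta_d>0$. If $d$ is non-stubborn and not in $\mathcal{N}_s$, then by Definition \ref{defn:LTP} some stubborn agent $t$ has a path to $d$ that avoids $s$. Its weight is positive, because every node on it after $t$ other than a stubborn endpoint has $1-\beta>0$; intermediate stubborn nodes with $\beta=1$ can be handled by choosing $t$ to be the last stubborn node on the path. This gives $r_{dt}\beta_t>0$. In either case $h_{ds}<1$.

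Since $F_{ss}\ge 1>0$, it follows that $F_{as}-F_{ds}=(1-h_{ds})F_{ss}>0$. This is \eqref{eqn:increase_inf_cent}, and Theorem \ref{Lemma:path-weights} gives the claim.

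The main obstacle is making the strictness $h_{ds}<1$ rigorous. This requires checking that the path guaranteed by $d$ being a non-endorser has strictly positive weight in $G(P)$. It also requires that the walk-splitting decomposition does not double count walks that visit $s$ several times; splitting at the last visit to $s$ handles this. Convergence of all the series is ensured by $\rho(P)<1$ (Lemma \ref{lm:submatrix_M}).
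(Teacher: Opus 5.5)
Your proof is correct, and it takes a genuinely different route from the paper. The paper's proof does not invoke Theorem \ref{Lemma:path-weights}. It works on the augmented Laplacian $R$ and Kron-reduces in three stages: onto $\{a,b,s,S_1,\dots,S_m,O\}$, then onto $\{s,S_1,\dots,S_m,O\}$, then onto $\{S_1,\dots,S_m,O\}$. At each stage it uses Theorem \ref{lemma:modification_propagates} and the zero row sums of the reduced Laplacians to sign the changes in the reduced edge weights. Finally it reads off $\hat c_s-c_s>0$ from the $(O,S_s)$ entry, as in Lemma \ref{lm:influence_cen}.

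You instead reduce the claim to $F_{as}>F_{ds}$ and settle it with a last-exit decomposition at $s$ combined with $F\beta\mathbf{1}=\mathbf{1}$. This also gives an explicit margin, $F_{as}-F_{ds}=F_{ss}\sum_{j\in[m]}r_{dj}\beta_j$. Your route has several advantages:
\begin{itemize}
\item It is shorter and more elementary.
\item It makes the source of strict positivity transparent: a positive-weight walk into $d$ from a stubborn agent other than $s$ that avoids $s$, with nodes having $\beta=1$ handled by taking the last stubborn node on the path.
\item It treats the degenerate configurations $a=s$, $b=s$ and stubborn $d$ uniformly. The paper's choice of $\alpha_1$, and its claim that $b$ must be a non-endorser, tacitly assume that $a$, $b$ and $s$ are distinct; the claim fails when $b=s$.
\item Replacing $s$ by any LTP agent $p$ in the same decomposition gives $F_{ai}=F_{pi}$ for every stubborn $i$ when $a\in\mathcal{N}_p$. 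This recovers Theorem \ref{thm:redundant_modification} and Remark \ref{Rem:Maximum_influence_} at no extra cost.
\end{itemize}
The paper's route, in exchange, stays inside the Kron-reduction machinery it develops, does not pass through the Sherman--Morrison computation behind Theorem \ref{Lemma:path-weights}, and gives a picture of which reduced edges change.

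One small correction: your aside that Kron reduction onto $\{s,a,d,S_1,\dots,S_m,O\}$ yields exactly the quantities $h_{\cdot s}$ and $r_{\cdot j}$ is not accurate. Reduced edge weights count only walks whose interior avoids $s$, $a$ and $d$, whereas $h_{as}$ and $r_{dj}$ allow interior revisits of $a$ and $d$. Nothing in your argument uses this remark, so you should simply drop it.
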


The proof of Theorem \ref{thm:useful_mod} is in Appendix-B. 

The endorsers and non-endorsers of a stubborn agent are defined based on graph topology. Thus, a key advantage of Theorem \ref{thm:useful_mod} is that the selection of suitable edge modifications only depends on the network topology and not on the FJ model parameters such as the edge-weights and the stubbornness of agents. {By characterising the topological properties of nodes $a$ and $d$ that guarantee an increase in a stubborn agent's influence centrality under edge modification $(a,b,d)$, Theorem \ref{thm:useful_mod} presents a solution 
to Problem 2.}
\begin{remark}

  Unlike \cite{wang2025addinglinks,ancona2022model}, the proposed condition in Theorem \ref{thm:useful_mod} establishes that a stubborn agent’s influence centrality increases by adding edges from not just itself but also its non-stubborn endorser(s), thereby presenting a more general class of suitable edge modifications. 

The non-stubborn endorsers of a stubborn agent represent supporters (or fan pages in {online social networks}) of influential individuals, such as celebrities or political leaders. 
The edge modifications proposed in Theorem 4 capture the impact of fan pages, which significantly amplify the reach and generate strong word of mouth for the associated influential individual \cite{fan_pages}.

\end{remark}
\begin{remark}
\label{Rem:Maximum_influence_}
From Theorem \ref{thm:redundant_modification}, we know that each pair of endorsers $a$ and $d$ of stubborn agent $s$ satisfy: $e_a^T \sum_{k=0}^{\infty}(P)^{k} e_s=e_d^T \sum_{k=0}^{\infty}(P)^{k} e_s$. Further, eqn. \eqref{eqn:increase_inf_cent} holds for each pair of endorser and non-endorser of a stubborn agent. Hence, $e_a^T \sum_{k=0}^{\infty}(P)^{k} e_s$ achieves the maximum value when $a$ is the endorser of a stubborn agent $s$. 
\end{remark}

As discussed, even verifying whether a single edge modification $(a,b,d)$ increases a desired agent's influence centrality under the condition \eqref{eqn:increase_inf_cent} in Theorem \ref{Lemma:path-weights} is computationally complex. Hence, identifying suitable edge modifications within the network is even more challenging. 
Next, we determine the complexity of identifying suitable edge modifications by employing the endorser property presented in Theorem \ref{thm:useful_mod}. 

Algorithm \ref{algo:identifying_endorsers} outlines the procedure for identifying the endorsers of a stubborn agent $s\in \mathcal{V}$. First, the graph $\tilde{\mathcal{G}}$ is obtained from $\mathcal{G}$ by removing all the outgoing edges of $s$, which takes $O(n)$ time. Next, the graph $\tilde{\mathcal{G}}$ is traversed $m$ times by considering each stubborn agent as the root, using graph traversal techniques such as the Depth First Search (DFS) algorithm. Since a single iteration of DFS has time complexity of $O(n+|\mathcal{E}|)$, the overall traversal of $\tilde{\mathcal{G}}$ takes $O(m(n+|\mathcal{E}|))$ time. Finally, the nodes that are not visited in any of these traversals become the endorsers of $s$ (excluding $s$). Hence, the suitable edge modifications for a desired stubborn agent can be identified in $O(m(n+|\mathcal{E}|))$. Thus, the topology-based condition presented in Theorem \ref{thm:useful_mod} significantly reduces the time-complexity of identifying suitable edge modifications compared to Theorem \ref{Lemma:path-weights}.

\begin{algorithm}[h]
\caption{Identifying the endorsers of  stubborn agent $s$}
\begin{algorithmic}[1]  
    \State Generate graph $\tilde{\mathcal{G}}$ by removing all outgoing edges of $s$ in $\mathcal{G}$.
    \State Define set $\mathcal{H}_j$ for $j\in [m]$
\For{$j= 1$ to $m$}
\State Traverse $\tilde{\mathcal{G}}$ starting from node $j$. 
\State Store each visited node in set $\mathcal{H}_j$
\EndFor
    
    \State Store the nodes visited from each $j\in [m]$ as $\mathcal{H}=\cup_{j=1}^{m} \mathcal{H}_j$.
    \State The nodes $\mathcal{V}\setminus \mathcal{H}$ form the endorsers of $s$ (excluding $s$).
\end{algorithmic}
\label{algo:identifying_endorsers}
\end{algorithm}

\vspace{-15pt}
\section{Influence maximisation using edge modifications}
\label{sec:algorithm}

In the previous section, we identified edge modifications to the network that increase the influence centrality of a stubborn agent. {Building on this result, we examine the influence maximisation problem (\textbf{Problem 3}) in this section.}

Consider a digraph $\mathcal{G}$ with $m>1$ stubborn agents such that Assumption \ref{Assump:1} holds. Let $\mathcal{M}$ denote the set of all possible edge modifications $(a,b,d)$ in the network $\mathcal{G}$. Additionally, for each edge modification $(a,b,d)$, let $w=\zeta w_{b,d}$ with $\zeta \in (0,1)$. This choice of $\zeta$ ensures that the edge $(d,b)$ is not removed from $\mathcal{G}$ under $(a,b,d)$. Thus, the set $\mathcal{M}$ can be denoted as follows:
\begin{align*}
    \mathcal{M}=\{(a,b,d,w)\ | \  a\in\mathcal{V}, \ (d,b)\in \mathcal{E} \text{ and } w=\zeta w_{b,d}\}
\end{align*}
We consider $\zeta$ to be fixed for all edge modifications in $\mathcal{M}$. Next, we formally state the influence maximisation problem using edge modifications $(a,b,d)$.

\textbf{Problem 3} Given a network $\mathcal{G}$ satisfying Assumption \ref{Assump:1} with adjacency matrix $W$ and stubbornness matrix $\beta$. Find a set of $N$ edge modifications $\mathcal{T}\subseteq \mathcal{M}$ that maximises the influence centrality $c_s(\mathcal{T})$ of the stubborn agent $s$. This optimisation problem can be formulated as
\begingroup
\setlength{\abovedisplayskip}{2pt}
\setlength{\belowdisplayskip}{2pt}
\begin{align}
& \operatorname{max}_{\mathcal{T}\subseteq \mathcal{M}} \quad c_s(\mathcal{T}) \nonumber \\
   & s.t. \quad |\mathcal{T}|=N
\end{align}
\endgroup
with $c_s(\mathcal{T})$ denoting the influence centrality of $s$ after the edge modifications in $\mathcal{T}$ are applied on $\mathcal{G}$.

\begin{remark}
Reference \cite{wang2025addinglinks} presents a related influence maximisation problem for a strongly connected network with an external fully stubborn agent ($\beta_s=1$) that does not have a path from the remaining agents. The influence centrality of this external agent is maximised by adding a fixed number of edges from it to the rest. In contrast, the influence maximisation formulated in Problem 3 applies to any weakly (or strongly) connected network. Additionally, the influence centrality of any  stubborn agent can be maximised, regardless of its degree of stubbornness.


\end{remark}



Solving Problem 3 naively is computationally very expensive because we must test all possible subsets $\mathcal{T}$ of $\mathcal{M}$ with cardinality $N$ (\textit{i.e.,} $n|\mathcal{E}| \choose N$). Further, for each subset $\mathcal{T}$, we evaluate $c_s(\mathcal{T})$ that requires computing the inverse $(I-(I-\beta)W)^{-1}$, with  complexity $O(n^3)$. Therefore, it is not feasible to determine the optimal solution of Problem 3. 
In the sequel, we present approximate solutions to Problem 3.

\vspace{-10pt}
\subsection{Greedy heuristic}



Algorithm \ref{algo:optimal_edge_m} outlines a greedy heuristic designed to solve Problem 3. First, it determines $\Delta c_s$ for each edge modification $(a,b,d)$ using eqn. \eqref{eqn:delta_c_i}. Thereafter, it identifies the edge modification in $\mathcal{M}$ that yields the maximum value of $\Delta c_s$ and stores it. Since each edge modification to a digraph alters the entries of $W$, the entries of $F$ are also affected. Thus, $F$ must be updated before selecting the subsequent edge modification. By the Sherman-Woodsbury Formula, the updated matrix $\hat{F}$ due to an edge modification $(a,b,d)$ can be determined as follows: 
\begingroup
\setlength{\abovedisplayskip}{2pt}
\setlength{\belowdisplayskip}{2pt}
\begin{align}
\label{eqn:F_update}
\hat{F}=F  -F\frac{\mathbf{u}\mathbf{v}^T}{1+\mathbf{v}^TF\mathbf{u}} F 
\end{align}
\endgroup
By repeating the selection and update steps $N$ times, the algorithm presents a greedy solution to Problem 3. 
However, this solution is computationally expensive, as shown in the following discussion. 

\begin{algorithm}[h]
\caption{Identification of Edge Modifications by Greedy Heuristic}
\begin{algorithmic}[1]  
\State\textbf{Input:} A network $\mathcal{G}=(\mathcal{V},\mathcal{E})$ with adjacency matrix $W$ and stubbornness $\beta$.
\State \textbf{Output:} A set of $\mathcal{T}\subseteq \mathcal{M}$ edge modifications such that $|\mathcal{T}|=N$.
\State Initialise $\mathcal{T}\leftarrow \emptyset$, $Ve\leftarrow\mathcal{E}$
    \State Evaluate $F$ 
\For{$j = 1$ to $N$}
\State $c_{sm}\gets 0$
        \For{each node $b$ in $\mathcal{V}$}
        \State Calculate $\frac{\mathbf{1}^T F \mathbf{e}_b}{n}$
    \EndFor
    \For{each edge $(d,b)$ in $Ve$}
     \State $w\gets \zeta w_{bd}$
        \For{ each $a\in \mathcal{V}$ such that $a\notin\{b,d\}$}
            \State Calculate value of $\Delta c_s$ using eqn. ~\eqref{eqn:delta_c_i}
            \If{$\Delta c_s>c_{sm}$} 
                 \State $c_{sm}\gets\Delta c_s$, $a_m \gets a,\ b_m\gets b, \ d_m\gets d, \ w_m\gets w$
            \EndIf
        \EndFor
    \EndFor
    \State Update: $\mathcal{T}\gets\mathcal{T} \cup \{(a_m,b_m,d_m,w_m)\}$ and $Ve\gets Ve\setminus\{(d_m,b_m)\}$.
            \State Update: $F$ using eqn. \eqref{eqn:F_update}. 
    \EndFor
\end{algorithmic}
\label{algo:optimal_edge_m}
\end{algorithm}

The computational complexity of Algorithm \ref{algo:optimal_edge_m} is evaluated by analysing the following major steps:
\begin{enumerate}
    \item First, we compute $F$ using power iteration. For a sparse graph with $|\mathcal{E}|$ edges, computing $F$ has a computational complexity of order $O(kn|\mathcal{E}|)$, where $k$ is the total number of power iterations \cite{harm_ful_content}. 
    \item Once $F$ is obtained, we evaluate the term ${1^T F \mathbf{e}_b}$ for each node $b\in \mathcal{V}$. Since computing this term for a single $b$ requires $O(n)$ time, the overall time complexity of this step is $O(n^2)$. 
    \item Next, for each edge $(d,b) \in \mathcal{E}$ and $a \in \mathcal{V}\setminus \{d,b\}$, we compute $\Delta c_s$. Since $F$ and $\mathbf{1}^T F \mathbf{e}_i$ are known, computing $\Delta c_s$ using eqn. \eqref{eqn:delta_c_i} for each triple $(a,b,d)$ requires $O(1)$ time. Therefore, evaluating $\Delta c_s$ for all edge modifications in $\mathcal{M}$ and identifying the one that yields the maximum $\Delta c_s$ requires $O(n|\mathcal{E}|)$ time.
    \item Finally, we update $F$ by eqn. \eqref{eqn:F_update} in $O(n^2)$ time. 
\end{enumerate}

To identify the $N$ edge modifications, Steps 2-4 are repeated $N$ times. Thus, the overall time complexity of Algorithm \ref{algo:optimal_edge_m} is $O((k+N)n|\mathcal{E}|+Nn^2)$.
\vspace{-10pt}

\subsection{Endorser-based heuristic}
\label{subsec:endorser_heuristic}
In this subsection, we develop an efficient and computationally feasible solution to Problem 3 by using the notion of endorsers. 
In Theorem \ref{thm:useful_mod}, we establish that $(a,b,d)$ increases the influence centrality of $s$ if $a$ is its endorser and $d$ is its non-endorser. Moreover, from Remark \ref{Rem:Maximum_influence_}, we know that $\mathbf{e}_a^T F\mathbf{e}_{s}$ attains the highest value when $a$ is an endorser node. Thus, for a given edge $(d,b)$, the term $\mathbf{1}^TF\mathbf{u} \cdot \mathbf{v}^TF\mathbb{e}_s$ (\textit{i.e.,} the numerator of $\Delta c_s$ in eqn. \eqref{eqn:delta_c_i}) attains the maximum value, when $a$ is the endorser of $s$. 

We leverage these properties of endorsers of the desired agent in the proposed heuristic. In this approach, for each selected edge modification $(a,b,d)$, node $a$ is always an endorser of $s$. Consequently, we fix node $a$ as the endorser of $s$ and evaluate $\Delta c_s$ over the edges $(d,b)\in \mathcal{E}$ and select the one for which $\Delta c_s$ attains the maximum value. The steps of the proposed algorithm are presented in Algorithm \ref{algo:greedy_optimal_edge_m} in detail.

\begin{algorithm}[h]
\caption{Identification of Edge Modifications using the endorser-based Heuristic}
\begin{algorithmic}[1]  
\State\textbf{Input:} A network $\mathcal{G}=(\mathcal{V},\mathcal{E})$ with adjacency matrix $W$ and stubbornness $\beta$.
\State \textbf{Output:} A set of $\mathcal{T}\subseteq \mathcal{M}$ edge modifications such that $|\mathcal{T}|=N$.
\State Initialise $\mathcal{T}\leftarrow \emptyset$, $Ve\leftarrow\mathcal{E}$, $a$ be any endsorser of $s$
    \State Evaluate $F$ 
\For{$j = 1$ to $N$}
\State $c_{sm}\gets 0$
        \For{each node $b$ in $\mathcal{V}$}
        \State Calculate $\frac{\mathbf{1}^T F \mathbf{e}_b}{n}$
    \EndFor
    \For{each edge $(d,b)$ in $Ve$}
     \State $w\gets \zeta w_{bd}$
        \State Calculate value of $\Delta c_s$ using eqn. ~\eqref{eqn:delta_c_i}
            \If{$\Delta c_s>c_{sm}$} 
                 \State $c_{sm}\gets\Delta c_s$, $b_m\gets b, \ d_m\gets d, \ w_m\gets w$
            \EndIf
    \EndFor
    \State Update: $\mathcal{T}\gets\mathcal{T} \cup \{(a,b_m,d_m,w_m)\}$ and $Ve\gets Ve\setminus\{(d_m,b_m)\}$.
            \State Update: $F$ using eqn. \eqref{eqn:F_update}. 
    \EndFor
\end{algorithmic}
\label{algo:greedy_optimal_edge_m}
\end{algorithm}

 While Algorithm \ref{algo:optimal_edge_m} evaluates $\Delta c_s$ for all possible triples $(a,b,d)$ in $\mathcal{M}$ to identify the best edge modification in each iteration, Algorithm \ref{algo:greedy_optimal_edge_m} fixes $a$ as the endorser and evalutates $\Delta c_s$ only over the edges $(d,b)\in \mathcal{E}$. Consequently, the computational cost of 
 Step 3 reduces under Algorithm \ref{algo:greedy_optimal_edge_m} to $O(|\mathcal{E}|)$. Since the remaining steps under Algorithm \ref{algo:greedy_optimal_edge_m} remain the same as Algorithm \ref{algo:optimal_edge_m}, the time complexity of Algorithm \ref{algo:greedy_optimal_edge_m} is $O(kn|\mathcal{E}|+N(\mathcal{E}+n^2))$. 
 Hence, Algorithm \ref{algo:greedy_optimal_edge_m} has a significantly reduced run time compared to Algorithm \ref{algo:optimal_edge_m}, especially in large networks. In the sequel, we validate the effectiveness of Algorithm \ref{algo:greedy_optimal_edge_m} by comparing it with suitable baselines.
\section{Empirical Evaluation}
\label{sec:simulations}

\begin{figure}
    \centering
    \begin{subfigure}{0.45\textwidth}
        \centering
        \includegraphics[width=0.8\textwidth]{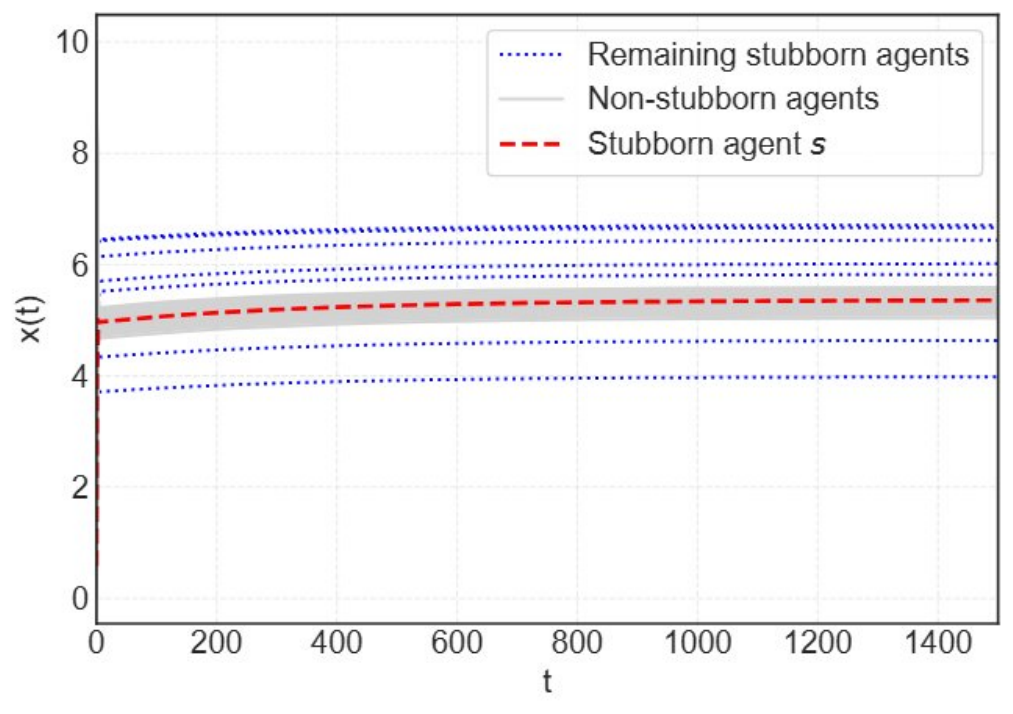}
        \vspace{-5pt}
        \caption{Original Graph}
        \vspace{10pt}
        \label{Fig:opinion_evol_OG}
    \end{subfigure}
    
        \begin{subfigure}{0.45\textwidth}
        \centering
        \includegraphics[width=0.8\textwidth]{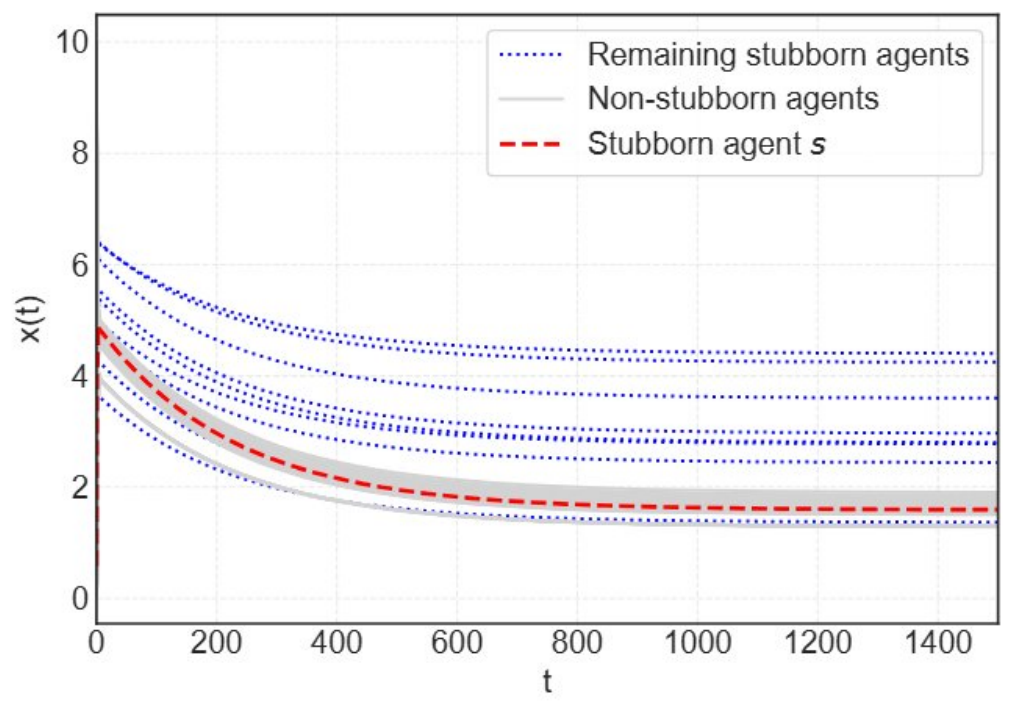}
        \vspace{-5pt}
        \caption{$N=50$}
        \vspace{10pt}
        \label{Fig:opinion_evol_n_50}
    \end{subfigure}
        \begin{subfigure}{0.45\textwidth}
        \centering
        \includegraphics[width=0.8\textwidth]{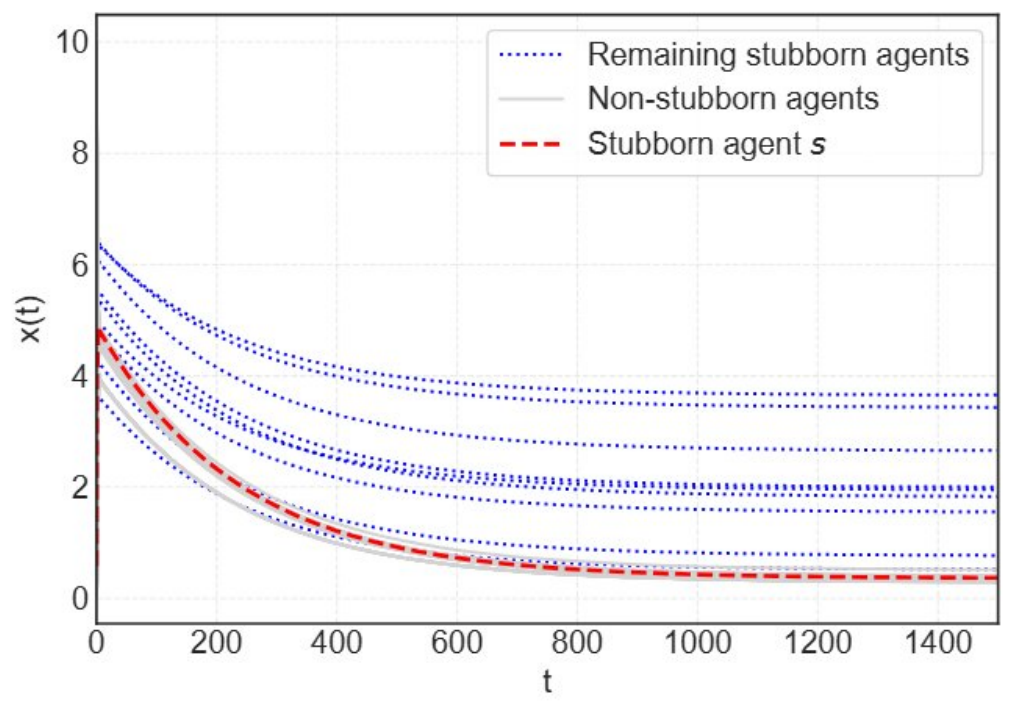}
        \vspace{-5pt}
        \caption{$N=100$}
        \vspace{10pt}
        \label{Fig:opinion_evol_n_100}
    \end{subfigure}
        \begin{subfigure}{0.45\textwidth}
        \centering
        \includegraphics[width=0.8\textwidth]{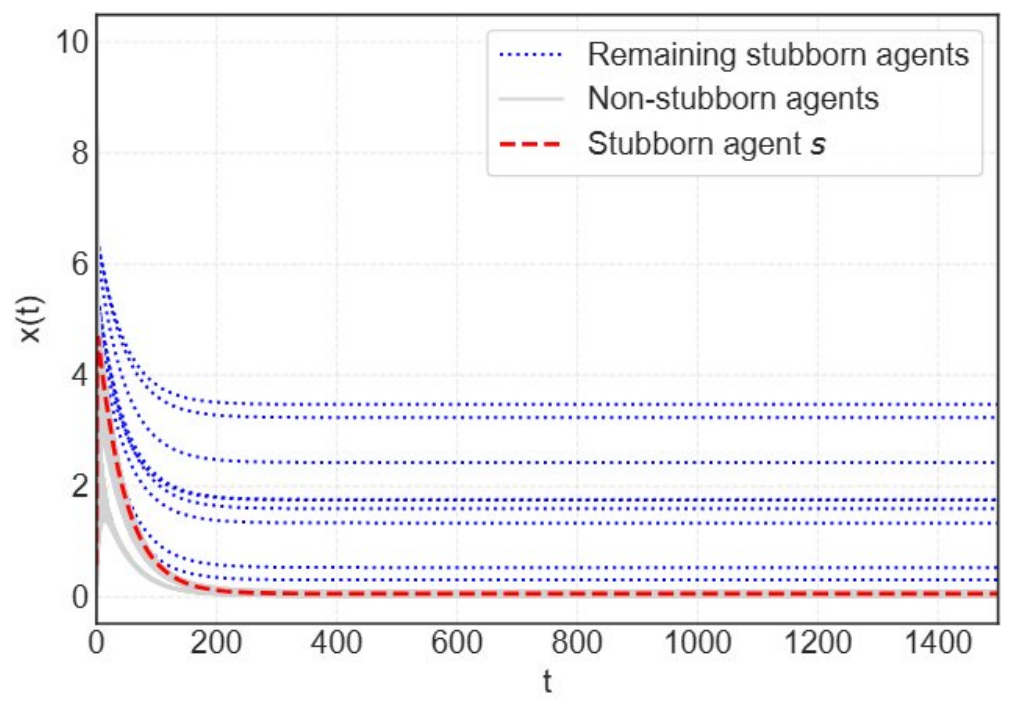}
        \vspace{-5pt}
        \caption{$N=200$}
        \vspace{10pt}
        \label{Fig:opinion_evol_n_200}
    \end{subfigure}
    \caption{Evolution of opinions under the FJ model in the original Erd\H{o}s-R\'{e}nyi graph and the obtained graphs after $N\in\{50,100,200\}$ edge modifications}
\end{figure}

{In this section, we present the impact of increasing a stubborn agent's influence centrality by 
Algorithm \ref{algo:greedy_optimal_edge_m} on opinion evolution under the FJ model. We also test the effectiveness of this approach in increasing an agent's influence centrality and compare it with other baselines.}

{To test the scalability of Algorithm \ref{algo:greedy_optimal_edge_m}, we perform this analysis on an Erdős-Rényi digraph of $1000$ nodes with edge probability \( p=0.0090 \). {Since \(p> {\log n}/{n}=0.0069 \), the digraph is strongly connected with a high probability \cite{graham2008note}}. Out of the $1000$ nodes in the digraph, we select $10$ stubborn agents uniformly at random 
and assign the stubbornness of each stubborn agent from the uniform distribution over $[0.1,0.9]$. The stubborn agent with the least influence centrality in the original graph ($\approx0.1$) is selected as the stubborn agent $s$. We increase the influence centrality of $s$ by adding $N$ edge modifications using Algorithm \ref{algo:greedy_optimal_edge_m}. Throughout all simulations, we consider $\zeta=0.9$.}

\subsection{Impact of edge modifications on opinion formation}
{In this subsection, we show how increasing the influence centrality of stubborn agent $s$ using Algorithm \ref{algo:greedy_optimal_edge_m} affects  opinion evolution under the FJ model \eqref{eq:opinion}. 
To this end, the initial opinions of the agents are selected from the uniform distribution over $(0,10)$ and the initial opinion of $s$ is set to $0$.
First, we plot the evolution of opinions under the FJ model for the original graph in Fig. \ref{Fig:opinion_evol_OG}. Thereafter, we use Algorithm \ref{algo:greedy_optimal_edge_m} to select $N\in\{50,100,200\}$ edge modifications that increase influence centrality of $s$. 
Figs. \ref{Fig:opinion_evol_n_50}-\ref{Fig:opinion_evol_n_200} show the  evolution of opinions under the FJ model in the updated graphs {obtained after
 $N=50, N=100$ and $N=200$ selected edge modifications, respectively.}

{In the original graph, the final opinions of each agent is strictly greater than $0$, as shown in Fig. \ref{Fig:opinion_evol_OG}. However, as $N$ increases, the final opinions of the non-stubborn agents and $s$ converge closer to $0$, see Figs. \ref{Fig:opinion_evol_n_100} and \ref{Fig:opinion_evol_n_200}. The opinions of remaining stubborn agents also shift toward $0$, though less sharply than the non-stubborn agents.}

{This observation follows from the definition of an agent's influence centrality.
By definition, a stubborn agent's influence centrality equals its contribution in the average final opinion $\bar{x}$. As mentioned, the initial opinion of $s$ is $0$. 
Hence, since $s$ has the least influence centrality in the original digraph, the opinions of the agents converge to a value much greater than $0$. However, as $N$ increases under Algorithm \ref{algo:greedy_optimal_edge_m}, $c_s$ also increases and the opinions of all the agents shift closer to $0$. Adhering to their nature, the stubborn agents are more reluctant to shift their opinions closer to $0$ compared to the non-stubborn agents, which almost converge to $0$ for higher values of $N$.}

\subsection{Comparitive analysis of the endorser-based heuristic}
In this subsection, we present the effectiveness of increasing the influence centrality using the endorser-based heuristic Algorithm \ref{algo:greedy_optimal_edge_m}. We compare its performance to the greedy heuristic in Algorithm \ref{algo:optimal_edge_m} and the following baselines:
\begin{itemize}
    \item \textit{Random selection}: We determine the $N$ edge modifications by choosing $N$ edges from $\mathcal{E}$  uniformly at random; and consider $a$ to an endorser of $s$ in each edge modification. It has a time complexity of $O(N)$.
    \item \textit{Top $N$ edge modifications:} We evaluate $\Delta c_s$ for each suitable $(a,b,d)$ in $\mathcal{M}$ for the given $\mathcal{G}$,$W$ and $\beta$. Thereafter, we select $N$ edge modifications that yield the highest values of $\Delta c_s$. It has a time complexity of $O(kn|\mathcal{E}|+n^2)$. 
\end{itemize}


In Random selection baseline, we consider $a$ to always be the endorser of $s$ to ensure a monotonic increase in the influence centrality under the selected $N$ edge modifications. Further, we consider the Top $N$ edge modifications as a baseline to highlight the importance of accounting for the change in the network arising due to previous edge modification while determining the future ones.


For the Erd\H{o}s-R\'{e}nyi digraph $\mathcal{G}$, and stubbornness values $\beta$, we obtain $N$ edge modifications using the greedy algorithm, endorser-based heuristic and the two baselines. We compare the effectiveness of these algorithms in increasing the influence centrality in Fig. \ref{fig:er_influence}. We observe that the performance of Algorithms \ref{algo:optimal_edge_m}, \ref{algo:greedy_optimal_edge_m} and the Top N edge modifications baseline is significantly better than the Random selection. The influence centrality of $s$ approaches $1$ after $140$ edge modifications under both greedy and endorser-based heuristics. 
Although the greedy algorithm performs only slightly better than the endorser-based heuristic, its significantly high run-time makes it unscalable for large networks. The comparison with Top-N edge modifications shows that updating $F$ after each modification results in higher time complexity but significantly improved performance.


\begin{figure}[h]
    \centering
    \includegraphics[width=0.95\linewidth]{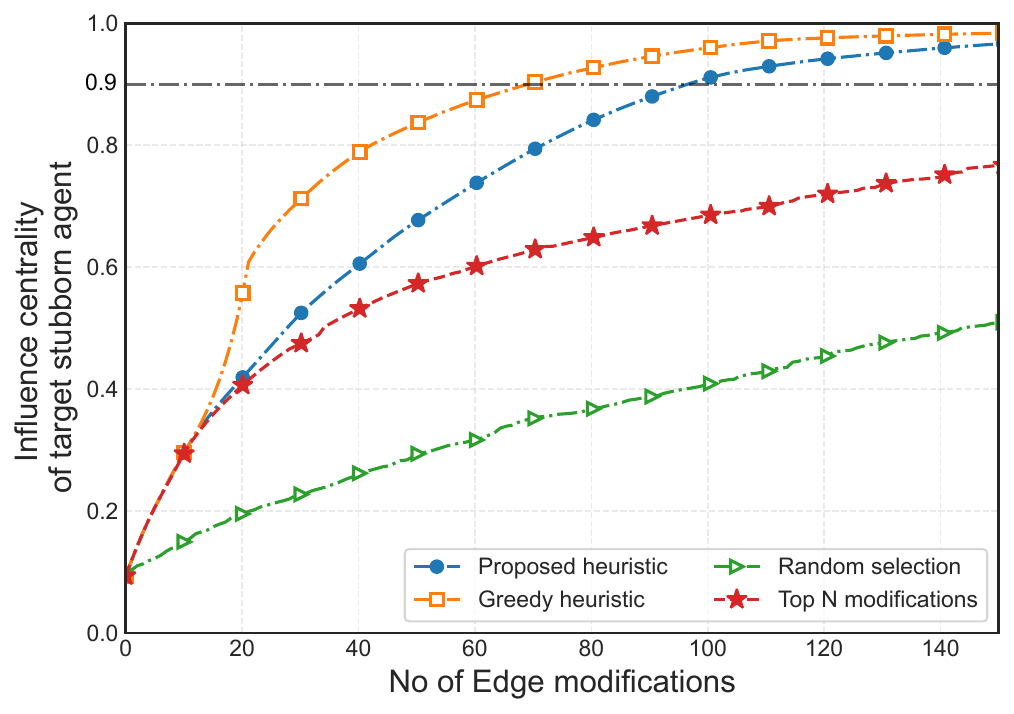}
    \caption{Edge modifications to increase the influence of a randomly chosen stubborn agent}
    \label{fig:er_influence}
\end{figure}  
\section{Conclusion}
\label{sec:conclusion}
In this paper, we address the problem of maximising the influence centrality of a desired stubborn agent in the FJ framework. Towards this objective, we use the edge modifications $(a,b,d)$ that mimic content moderation performed by recommendation algorithms. 
Leveraging the notion of the LTP agents presented in \cite{shrinate2025opinionclusteringfriedkinjohnsenmodel}, we categorise the agents in a network with respect to a particular stubborn agent as: its endorsers and its non-endorsers, based on their topological properties. Thereafter, we construct a network $G(R)$ from the linear equations that define the relation between the final opinions and the initial opinion of the stubborn agents. Using the Kron reduction of $G(R)$, we relate the impact of an edge modification $(a,b,d)$ on the influence centrality vector. Through this framework, we show that $(a,b,d)$ always increases the influence centrality of a stubborn agent if $a$ is its endorser and $d$ is its non-endorser. Interestingly, the influence centrality of the desired stubborn agent increases under this condition, independent of the stubbornness of agents and the edge weights the network.
Moreover, if both $a$ and $d$ are endorsers of the same stubborn agent, then $(a,b,d)$ becomes redundant.  

Based on the proposed edge modifications, we present an endorser-based approximate solution for the discrete optimisation problem that selects $N$ edge modifications to maximise a desired agent's influence centrality. The time complexity of the proposed heuristic is $O(kn|\mathcal{E}|+N(|\mathcal{E}|+n^2))$, which is significantly better than its greedy counterpart with $O((k+N)n|\mathcal{E}|+Nn^2)$. Further, simulations on Erdős–Rényi random graphs show that the proposed heuristic maintains performance as effective as the greedy approach. Hence, it is a computationally efficient alternative to the greedy approach, suitable for large networks.

{In this work, we show that interactions can be suitably modified to enhance influence centrality in the classical FJ model. In future,  
it would be interesting to extend such mechanisms in the non-linear and multi-issue versions of the FJ model.} 

\section*{Appendix-A}

\textbf{\textit{Proof of Theorem} \ref{Lemma:path-weights}:} 
To establish conditions under which $(a,b,d)$ increases $c_s$, we examine $\Delta c_s$ in eqn. \eqref{eqn:delta_c_i}. First, we establish that $\mathbf{1}_{n}^TF\mathbf{u}$ and $(1+\mathbf{v}^{T}F\mathbf{u})$ are always positive under the given conditions.

For any network $\mathcal{G}$ that satisfies Assumption \ref{Assump:1}, $\rho(P)<1$ \cite{friedkin1990opinions}. Hence, we can expand $F$ as: $F=\sum_{k=0}^{\infty}(P)^{k}$ by Neumann series (Ch 2. in \cite{bullo}). Thus, $F$ is a non-negative matrix and $\mathbf{1}_{n}^TF\mathbf{u}$ is positive, when $\beta_b<1$. 
Next, we prove that $1+\mathbf{v}^{T}F\mathbf{u}>0$.

Matrix $I-P$ is row-diagonally dominant and is invertible under Assumption \ref{Assump:1}. Hence, by Lemma \ref{Lemma:RDD_Invertible}, $F$ is diagonally dominant of its column entries \textit{i.e.,}  
$F[j,j]\geq F[i,j]$ for any  $j\in[n]$ and $i\in[n]\setminus\{j\}$.  Additionally, $F$ also satisfies:
\begingroup
\setlength{\abovedisplayskip}{2pt}
\setlength{\belowdisplayskip}{2pt}
\begin{align}
\label{eqn:I-P}
(I_n-P)\begin{bmatrix}
 F[1,j] & F[2,j] & \cdots & F[n,j]
\end{bmatrix}^T&=e_j \quad \forall j\in[n]   
\end{align}
\endgroup

For a given pair $b$ and $d$, we define the following functions: $h(e)=yF[e,b]$  and $g(e)=h(e)-h(d)-1$, where $y=(1-\beta_b)w$ and $e\in \mathcal{V}$. Since $F[j,b]\leq F[b,b]$ for any $j\in \mathcal{V}\setminus \{b\}$, clearly $g(j)\leq g(b)$. To show that $g(j)<0$, we prove that $g(b)<0$.

From eqn. \eqref{eqn:I-P}, we know  $h(b)=y+\sum_{k=1}^np_{bk}h(k)=y+p_{bd}h(d)+\sum_{k=1,k\neq d}^np_{bk}h(k)$, where $p_{ij}$ is the entry of $P$. Further, we can write this eqn. as: 
\begingroup
\setlength{\abovedisplayskip}{2pt}
\setlength{\belowdisplayskip}{2pt}
\begin{align*}
 &g(b)+1+h(d)=y+p_{bd}h(d)+\sum_{k=1,k\neq d}^np_{bk}(g(k)+1+h(d)) \\
 &=y+p_{bd}h(d)+(1-\beta_b-p_{bd})(1+h(d)) +\sum_{k=1,k\neq d}^np_{bk}g(k)
\end{align*}
\endgroup

Hence, we can further simplify as follows:
\begingroup
\setlength{\abovedisplayskip}{2pt}
\setlength{\belowdisplayskip}{2pt}
\begin{align*}
g(b)&=y+p_{bd}h(d)-(\beta_b+p_{bd})(1+h(d))+\sum_{k=1,k\neq d}^np_{bk}g(k)\\
&=y-p_{bd}-\beta_b(1+h(d))+\sum_{k=1,k\neq d}^np_{bk}g(k) \\
&\leq y-p_{bd}-\beta_b(1+h(d))+(1-p_{bd}-\beta_b)g(b)
\end{align*}
\endgroup
The last inequality follows because $g(b)\geq g(j)$ for any $j\in \mathcal{V}\setminus \{b\}$. By definition, $p_{bd}=(1-\beta_b)w_{bd}$. Thus, $g(b) \leq (y-(1-\beta_b)w_{bd}-\beta_b(1+h(d)))/(\beta_b+(1-\beta_b)w_{bd})$. Since ${w}<w_{bd}\leq 1$, $g(b)<0$. Consequently, $g(j)<0$ for all $j\in \mathcal{V}$. Hence, $1+\mathbf{v}^{T}F\mathbf{u}>0$. Thus, the influence centrality $c_i$ increases when eqn. \eqref{eqn:increase_inf_cent} holds.
\hfill$\blacksquare$

\textbf{\textit{Proof of Lemma} \ref{lm:Basic_properties}:}
Due to Assumption \ref{Assump:1}, the first $n$ nodes in $G(R)$ satisfy the following: each non-stubborn node $i$ (node associated with the final opinion of a non-stubborn agent) has a path from a stubborn node $j$ (node associated with the final opinion of a stubborn agent). Moreover, the source $S_j$ has an outgoing edge to node $j$ in $G(R)$ for $j\in [m]$. Hence, each node $i \in [n]$ has a path from a source $S_j$ for $j\in[m]$.  
Thus, for any $\alpha \supseteq \{S_1,...,S_m,O\}$, each node in $\alpha^c\subseteq[n]$ has a path from a node in $\alpha$. By Lemma \ref{lm:basic_properties_1}, the Kron-reduced matrix $R/\alpha^c$ is well defined. 

Statement 2) follows directly from Lemma \ref{lm:basic_properties_1}. For statement 3), we use the fact that each node indexed $\{1,2,...,n\}$ has a path from an $S_j$ in $G(R)$  for $j\in [m]$. Consider a suitable $\alpha$ such that $R/\alpha^c$ is well-defined. 
In general, a path from $S_j$ to $p$ in $G(R)$ of the following form exists: $S_j \to q_1 \to q_2 \to p_1 \to q_3 \to q_4 \to p_2 \to \cdots \to p$ with certain nodes $q_i\in \alpha^c$ and $p_h \in \alpha$. It follows from Lemma \ref{lm:basic_properties_1} that $G_{\alpha}$ has the path $S_j \to p_1 \to p_2 \to \cdots \to p$. Consequently, each $p\in[n]\cap \alpha$ has a path from a source $S_j$ in $G_{\alpha}$ for $j\in[m]$. \hfill$\blacksquare$

\textbf{\textit{Proof of Theorem} \ref{lemma:modification_propagates}:} 
 By Lemma \ref{lm:Q_is_M_matrix}, $Q$ is an M matrix. Clearly, any principal sub-matrix of $Q$ is also an $M$ matrix.
  Hence, for any $\alpha^c$, $Q[\alpha^c]$ is an M matrix. Additionally, if $Q/\alpha^c$ is well-defined, then $Q[\alpha^c]$ is invertible and hence it is a non-singular M-matrix. Let $Q[\alpha^c]=tI-B[\alpha^c]$, here $t>\rho(B[\alpha^c])$. 
Therefore, by Neumann-series, we can re-write the Kron-reduced matrix $Q/\alpha^c$ as:
\begingroup
\setlength{\abovedisplayskip}{2pt}
\setlength{\belowdisplayskip}{2pt}
\begin{align}
\label{eqn:edge_mod_paths}
 Q/\alpha^c=Q[\alpha]+\frac{1}{t}Q[\alpha,\alpha^c](\sum_{k=0}^{\infty}\big(\frac{B[\alpha^c]}{t}\big)^{k})Q[\alpha^c,\alpha]   
\end{align}
\endgroup

Consider the edge weight of $(u,v)$ in $\mathcal{G}$ is increased. By definition of $Q$, the $(v,u)^{th}$ entry of $Q$ gets reduced. Equivalently, in $G(Q)$, the edge weight of $(u,v)$ in ${G}(Q)$ is reduced. We examine the Kron-reduced matrix $Q/\alpha^c$ to determine the effective changes in $G_{\alpha}$. Depending on the sets $\alpha$ or $\alpha^c$ in which  $u$ or/and $v$ belong, the set of edges in $G_{\alpha}$ that are affected differ.

Under $(i)$, both $ u, v \in \alpha$. Therefore, only the submatrix $Q[\alpha]$ has a modified entry due to edge $(u,v)$. By the definition of $Q/\alpha^c$, it follows that only the  $(v,u)^{th}$ entry of $Q/\alpha^c$ gets reduced. This change is equivalently reflected in edge $(u,v)$ in $G_{\alpha}$.

Under (ii), since $u\in \alpha$ and $v\in \alpha^c$, the entry in submatrix $Q[\alpha^c,\alpha]$ gets modified. For
any $\alpha$ such that $Q/\alpha^c$ is well-defined, $Q/\alpha^c$ can be simplified as in eqn. \eqref{eqn:edge_mod_paths}. By definition, the matrix $B[\alpha^c]$ has a positive entry corresponding to each edge $(e,f)$ in $G(Q)$ if $e,f \in \alpha^c$. Consequently, by eqn. \eqref{eqn:edge_mod_paths}, if there is a node $j \in \alpha$ such that a path from $u$ to $j$ exists in ${G}(Q)$ that traverses edge $(u,v)$ and this path only traverses the nodes in $\alpha^c$ (except $u$ and $j$), then the weight of edge $(u,j)$ in $G_{\alpha}$ reduces.

Similarly, under (iii), $u\in \alpha^c$ and $v\in \alpha$, the entry in submatrix $Q[\alpha,\alpha^c]$ gets decreased due to $(u,v)$. Again from eqn. \eqref{eqn:walks_paths_o}, it follows that if there is a node $k\in \alpha^c$ such that there is a path from $k$ to $v$ that traverses $(u,v)$ and this path only passes through nodes in $\alpha^c$ (except $k$ and $v$). Then, the weight of the edges $(k,v)$ in $G_{\alpha}$ for $k\in \alpha$ decreases.

Finally, when both $u,v\in \alpha^c$, then the entry of the submatrix $Q[\alpha^c]$  gets modified. From eqn. \eqref{eqn:walks_paths_o}, it follows that for any $j,k\in \alpha$, if there is a path in $G(Q)$ that traverses edge $(u,v)$ and only nodes in $\alpha^c$, the edge weight of $(j,k)$ in $G_{\alpha}$ reduces. \hfill$\blacksquare$


\vspace{-10pt}
\section*{Appendix-B}
\textbf{\textit{Proof of Theorem} \ref{thm:redundant_modification}:} 
Suppose $a,d \in \mathcal{N}_p$. Under an edge modification $(a,b,d)$ in $\mathcal{G}$, the edge weight of $(a,b)$ decreases by $-(1-\beta_b)w$, while that of $(d,b)$ increases by $+(1-\beta_b)w$ in $G({R})$ to form $G(\hat{R})$ (The matrix $\hat{R}$ is derived from eqn. \eqref{eqn:steady_state} for modified $\mathcal{G}$). 
If there exists a node set $\alpha \supseteq \{S_1,...,S_m,O\}$ such that the reduced graphs  $G_{\alpha}$ and $\hat{G}_{\alpha}$ obtained from $G(R)$ and ${G}(\hat{R})$, respectively, are identical. Then, further reducing $\hat{G}_{\alpha}$ to $\hat{G}_{\eta}$ such that $\eta=\{S_1,...,S_m,O\}$ yields the same influence centrality measure as the original network. Hence, an edge modification $(a,b,d)$ is redundant if such a node set $\alpha$ exists.
To determine the suitable $\alpha$, we perform the following reductions:

\textbf{Step 1:} Let $\alpha_1=\{b,p,S_1,...,S_m,O\}$. By Lemma \ref{lm:Basic_properties}, the Schur complement $R/\alpha_1^c$ is well-defined. %
Under the edge modification $(a,b,d)$, the edges $(a,b)$ and $(d,b)$ in $G(R)$ are modified and the nodes $a,d \in \alpha_1^c$ and $b \in \alpha_1$. Thus, by Theorem \ref{lemma:modification_propagates}, only the edge weight of $(j,b)$ in $G_{\alpha_1}$ is modified (increased or decreased) for $j\in \alpha_1$. The change in the edge weight occurs only if a path from $j$ to $b$ exists in $G(R)$ such that each node on this path (other than $j$ and $b$) is from set $\alpha_1^c$ and it traverses either $(a,b)$ or $(d,b)$ or both. 

Recall that $p$ is an LTP agent and the nodes $a,d \in \mathcal{N}_p$. Hence, a path from any stubborn agent $i\in[m]$ to $a$ and to $d$ in $\mathcal{G}$ always 
traverse $p$. Equivalently, the paths from $S_i$ to $a$ and to $d$ in $G(R)$ also always 
traverse $p$ for all $i\in[m]$. Since $p\in \alpha_1$, the edge weights of edges $(S_i,b)$ in $G_{\alpha_1}$ remain unchanged for all $i\in [m]$. Thus, for any general graph, only the edges $(b,b)$ and $(p,b)$ can be modified in the corresponding reduced $G_{\alpha_1}$.

Now, for any digraph $\mathcal{G}$, either of the  following conditions can hold:
\begin{enumerate}
    \item[C1.] 
    each path (that exists in $\mathcal{G}$) from $b$ to $a$ and $b$ to $d$  traverses $p$, in other words, $b\notin \mathcal{N}_p$,
    \item[C2.] a path from $b$ to $a$ or to $d$ exists that does not traverse $p$, \textit{i.e.,} $b\in \mathcal{N}_p$.
\end{enumerate}
 
If C1) holds, then each path from $b$ to $b$ in $G(R)$ that traverses the edges $(a,b)$ or $(d,b)$ must traverse $p$. Hence, the edge weight of $(b,b)$ remains unchanged in $G_{\alpha_1}$, leaving only edge $(p,b)$, which can get modified. By Lemma \ref{lm:Basic_properties}, it follows that $R/\alpha_1^c$ is a loopless Laplacian matrix with row-sums equal to $0$. Thus, each node in $G_{\alpha_1}$ has zero in-degree.
Hence, the weight of a single edge $(p,b)$ cannot change in  $G_{\alpha_1}$ and the influence centrality vector remains constant.

If C2) holds and a path from $b$ to $a$ or $d$ (or both) exists that does not traverse $p$. Hence, the edge-weights of both the edges $(p,b)$ and $(b,b)$ in $G_{\alpha_1}$ can get modified. 
In this scenario, we further reduce $G_{\alpha_1}$.

\textbf{Step 2:} We reduce $G_{\alpha_1}$ to $G_{\alpha_2}$ by considering $\alpha_2=\{p,S_1,...,S_m,O\}$ and $\alpha_2^c=\{b\}$.
Here, for the modified edges $(b,b)$ and $(p,b)$ in $G_{\alpha_1}$, the conditions (ii) and (iv) of Theorem \ref{lemma:modification_propagates} hold simultaneously. Thus, we examine the Schur complement $(R/\alpha_1^c)/\alpha_2^c$ to determine the change in edge weights of the reduced $G_{\alpha_2}$.  

Let $R_1=(R/\alpha_1^c)$ with the rows (and columns) indexed according to set $\alpha_1$.  Note that under C2), each path from $S_i$ to $b$ in $G(R)$ passes through $p$ for each $i \in [m]$; otherwise $a,d\notin \mathcal{N}_p$. Consequently, $R_1[\alpha_2^c,\alpha_2]=[h_1 ~ \mathbf{0}_{m+1}]$ where $h_1\in \mathbb{R}$ is a scalar. Additionally, in $G_{\alpha_1}$, node $b$ can have a path only to $p$ and $O$, resulting in $R_1[\alpha_2,\alpha_2^c]={\begin{bmatrix}
    h_2 & \mathbf{0}_{m} & h_3
\end{bmatrix}^T}$ where $h_2,h_3 \in \mathbb{R}$.

Under the edge modifications, the edges $(p,b)$ and  $(b,b)$ in $G_{\alpha_1}$ are affected. Thus, the parameter $h_1$ is modified to $\hat{h}_1$ and $R_1[\alpha_2^c]$ is modified to $\hat{R}_1[\alpha_2^c]$, yielding the matrix $\hat{R}_1$. To identify the affected edges in $G_{\alpha_2}$, we determine
 $\Delta R_2=\hat{R}_1/\alpha_2^c-{R}_1/\alpha_2^c={\begin{bmatrix}
    (\hat{h}_1\hat{R}_2[\alpha_3^c])^{-1}-h_1{R}_2[\alpha_3^c])^{-1})h_2 & \mathbf{0}_{m+1} \\
    0_{m} & \mathbf{0}_{m,m+1}\\
    (\hat{h}_1\hat{R}_2[\alpha_3^c]^{-1}-h_1{R}_2[\alpha_3^c]^{-1})h_3 & \mathbf{0}_{m+1} 
\end{bmatrix}}$. 
Similar to case C1), a single incoming edge $(p,p)$ at $p$ and $(p,O)$ at $O$ is modified in $G_{\alpha_2}$. By Statement 2) in Lemma \ref{lm:basic_properties_1}, the in-degree of the reduced graphs is always zero. Hence, the edge weight of a single incoming edge cannot change and  
the influence centrality vector remains unchanged. \hfill$\blacksquare$ 

\textbf{\textit{Proof of Theorem} \ref{thm:useful_mod}:}
To prove that $(a,b,d)$ increases $c_s$, we iteratively reduce $G(R)$ and determine the change in edge weights of the reduced graphs. 

\textbf{Step 1:} First, we consider $\alpha_1=(a,b,s,S_1,...,S_m,O)$ and reduce $G(R)$ to $G_{\alpha_1}$.  
The following edges in $G_{\alpha_1}$ are impacted:
\begin{itemize}
 \item Since $d\in \alpha_1^c$ and $b\in \alpha_1$, by Theorem \ref{lemma:modification_propagates}, the edge weight of $(j,b)$ in $G_{\alpha_1}$ increases if a path $j$ to $b$ in $G(R)$ contains $(d,b)$ and consists only of nodes from $\alpha_1^c$ (except $j$ and $b$).
 
    \item As nodes $a,b \in \alpha_1$, the edge $(a,b)$ in $G(R)$ only affects the edge weight of corresponding edge $(a,b)$ in $G_{\alpha_1}$. Notably, edge $(a,b)$ in $G_{\alpha_1}$ can also be affected by $(d,b)$ (in $G(R)$) if there exists a suitable path in $G(R)$ from $a$ to $b$ that traverses $(d,b)$. Thus, the edge weight of $(a,b)$ may increase or decrease.
\end{itemize}

The exact change in the edge weights of the incoming edges at $b$ can be determined as:
\begingroup
\setlength{\abovedisplayskip}{2pt}
\setlength{\belowdisplayskip}{2pt}
\begin{align*}
    \Delta R_1&=\hat{R}/\alpha_1^c-R/\alpha_1^c\\
    &=\hat{R}[\alpha_1]-{R}[\alpha_1]+(R[\alpha_1,\alpha_1^c]-\hat{R}[\alpha_1,\alpha_1^c]){R}[\alpha_1^c]^{-1}R[\alpha_1^c,\alpha_1]
\end{align*}
\endgroup
The second equality follows because only the $(b,a)^{th}$ and $(b,d)^{th}$ entries differ in $R$ and $\hat{R}$. 

Let the nodes in $\alpha_1^c$ be arranged such that $d$ is the first node. In this scenario,  we focus on $e_b^T \Delta R_1$ as only the entries of $b^{th}$ row of $R_1$ change, 
\begingroup
\setlength{\abovedisplayskip}{2pt}
\setlength{\belowdisplayskip}{2pt}
\begin{align*}
   e_b^T \Delta R_1&= \begin{bmatrix}
        -y & \mathbf{0}_{m+3} 
   \end{bmatrix}^T +
   &\begin{bmatrix}
        -y & \mathbf{0}_{|\alpha_1^c|-1} 
   \end{bmatrix}^T{R}[\alpha_1^c]^{-1}R[\alpha_1^c,\alpha_1]
\end{align*}
\endgroup
where $y=(1-\beta_b)w$.
By Lemma \ref{lm:submatrix_M}, the submatrix ${R}[\alpha_1^c]^{-1}$ is a non-singular $M$ matrix and its inverse ${R}[\alpha_1^c]^{-1}$ has non-negative entries. Additionally, the off-diagonal elements of $R$ are non-positive. Thus,
\begingroup
\setlength{\abovedisplayskip}{2pt}
\setlength{\belowdisplayskip}{2pt}
\begin{align}
\label{eqn:change_R1}
   e_b^T \Delta R_1
   &= \big[ -y \quad \mathbf{0}_{m+3} \big] \nonumber \\
     & + \big[ \delta_{a}y \quad \delta_{b}y \quad \delta_{s}y \quad
      \delta_{S_{1}}y \quad \cdots \quad \delta_{S_m}y \quad 0 \big].
\end{align}
\endgroup
where $\delta_{j}\geq 0$ for each $j\in \alpha_1$.
The following changes occur in $G_{\alpha_1}$:
\begin{itemize}
    \item the edge weights of $(j,b)$ for $j\in \alpha_1\setminus \{a,S_s,O\}$ increases by $y\delta_{j}$. 
    \item the edge weight of $(a,b)$ is changed by $y(-1+\delta_{a})$.
        \item Node $S_s$ has only one outgoing edge $(S_s,s)$ in $G(R)$. Since $s\in \alpha_1$, the edge $(S_s,b)$ does not exist in $G_{\alpha_1}$ both before and after modification. Similarly, by construction, the edge $(O,b)$ does not exist. Thus, the corresponding entries remain zero in $e_b^T \Delta R_1$.
\end{itemize}
Since $R/\alpha_1^c$ and $\hat{R}/\alpha_1^c$ are loopless Laplacian matrices, $e_b^T\Delta R_1 \mathbf{1}_{m+2}=0$. Consequently, $\sum_{j\in \alpha_1\setminus \{S_s,O\}} \delta_{k}=1$ and $\delta_j \in [0,1]$ for all $j \in \alpha_1$. 
Since only the edge weight of $(a,b)$ in $G_{\alpha_1}$ can decrease, it follows that unless there is no change in the $G_{\alpha_1}$, the edge weight of $(a,b)$ always decreases.

\textbf{Step 2:} Now, we further reduce $G_{\alpha_1}$ to $G_{\alpha_2}$ with $\alpha_2=\{s,S_1, S_2,...,S_m,O\}$ and $\alpha_2^c=\{a,b\}$. Since $\alpha_2 \supseteq\{S_1, S_2,...,S_m,O\}$, the Schur complement $R_1/\alpha_2^c$ is well-defined. We examine the changes in $G_{\alpha_2}$ due to edge modifications in $G_{\alpha_1}$. 

Recall that only the incoming edges at $b$ in $G_{\alpha_1}$ are modified. Hence, only the entries of $b^{th}$ row in $R_1$ are affected. Since $b\in \alpha_2^c$, we determine the change in edges in $G_{\alpha_2}$ using the Schur complement. 
Let $R_2=R_1/\alpha_2^c$, the change in $R_2$ is given as $\Delta R_2=\hat{R}_1/\alpha_2^c-{R}_1/\alpha_2^c$ is equal to,  \begin{align}
\label{eqn:change_in_R3}
    \tiny \Delta R_2=&R_1[\alpha_2,\alpha_2^c]\big(R_1[\alpha_2^c]^{-1}R_1[\alpha_2^c,\alpha_2]-\hat{R}_1[\alpha_2^c]^{-1}\hat{R}_1[\alpha_2^c,\alpha_2]\big) \nonumber \\
    =&{R_1[\alpha_2,\alpha_2^c] R_1[\alpha_2^c]^{-1}(R_1[\alpha_2^c,\alpha_2]-\hat{R}_1[\alpha_2^c,\alpha_2])} + \nonumber \\ &{R_1[\alpha_2,\alpha_2^c]({R}_1[\alpha_2^c]^{-1}- \hat{R}_1[\alpha_2^c])^{-1}\hat{R}_1[\alpha_2^c,\alpha_2]}
\end{align}
Next, we use the topological properties of $G_{\alpha_1}$ to determine the change in $\Delta R_2$. 
\begin{enumerate}
    \item  Only nodes $s$ and $O$ in set $\alpha_2$ have incoming edges in $G_{\alpha_1}$. Thus, $R_1[\alpha_2,\alpha_2^c]=\scriptsize{\begin{bmatrix}
        \times & \times \\
        \mathbf{0}_m &  \mathbf{0}_m \\
        \times & \times
    \end{bmatrix}}$ where $\times$ highlights the possible non-zero entries.
    \item Since $a$ is an endorser of $s$ and $s\in \alpha_1$, the edge $(S_j,a)$ does not exist in $G_{\alpha_1}$ for all $j\in [m]$. Hence, $R_1[\alpha_2^c,\alpha_2]=\scriptsize{\begin{bmatrix}
        \times & \mathbf{0}_{m+1} \\
        \times & \times
    \end{bmatrix}}$. Note that this property holds for $\hat{R}_1[\alpha_2^c,\alpha_2]$ as well. 
    \item Since only the incoming edges at node $b$ in $G_{\alpha_1}$ are modified, by eqn. \eqref{eqn:change_R1}, $R_1[\alpha_2^c,\alpha_2]-\hat{R}_1[\alpha_2^c,\alpha_2]$ equals,
\begin{align*}
\scriptsize{\begin{bmatrix}
    0& 0 & ... & 0 & 0& 0 & ... & 0 & 0 \\
    -y\delta_{s} &-y\delta_{S_1} & ... & -y\delta_{S_{s-1}} & 0& -y\delta_{S_{s+1}}& ... & -y\delta_{S_m} & 0
\end{bmatrix}   } 
\end{align*}
Note that since the edges $(S_s,b)$ and $(O,b)$ do not exist in $G_{\alpha_1}$, the corresponding entries are zero.
\item Let $R_1[\alpha_2^c]=\scriptsize{\begin{bmatrix}
    r_{aa}^1 & r_{ab}^1 \\
    r_{ba}^1 & r_{bb}^1
\end{bmatrix}}$
where $r_{ef}^1$ denotes the entry $(e,f)^{th}$ entry of $R_1$. 

Node $d$ is a non-endorser of $s$ and edge $(d,b)$ exists in $\mathcal{G}$, hence, node $b$ is also a non-endorser of $s$. Thus, each path in $G(R)$ from $b$ to $a$ passes through $s$; otherwise, $a$ would also become a non-endorser. Since $s\in \alpha_1$, the edge $(b,a)$ does not exist in $G_{\alpha_1}$. This means that $r_{ab}^1=0$. By eqn. \eqref{eqn:change_R1}, it follows that due to edge modifications in $G(R)$, $\hat{R}_1[\alpha_2^c]=\scriptsize{\begin{bmatrix}
    r_{aa}^1 & 0 \\
    r_{ba}^1-(1-\delta_a)y & r_{bb}^1+\delta_b y
\end{bmatrix}}$. Finally, simple calculations show that $R_1[\alpha_2^c]^{-1}-\hat{R}_1[\alpha_2^c]^{-1}=\scriptsize{\begin{bmatrix}
    0 & 0 \\
    k_1 & k_2
\end{bmatrix}}$ where $k_1 \in \mathbb{R},k_2 \in \mathbb{R}_{>0}$.
\end{enumerate}

Additionally, recall that each entry of $R_1[\alpha_2^c]^{-1}$ is positive and each entry of matrices $R_1[\alpha_2,\alpha_2^c]$ and $\hat{R}_1[\alpha_2,\alpha_2^c]$ is non-positive. Thus, by points 1-4, and eqn. \eqref{eqn:change_in_R3}, the following holds:
\begin{itemize}
    \item The edge weight of $(i,j)$ in $\mathcal{G}_{\alpha_2}$ increases for $i\in\alpha_2\setminus \{S_s,O,s\}$ and $j \in \{s,O\}$ 
    \item  The edge weight of $(S_s,j)$ in $\mathcal{G}_{\alpha_2}$ remains unchanged  and $j \in \{s,O\}$.
    \item Since $\hat{R}_1/\alpha_2^c$ is also a Laplacian matrix with row-sums $0$, the edge weight of $(s,j)$ in $\mathcal{G}_{\alpha_2}$ decreases for $j \in \{s,O\}$.
\end{itemize}
These modified edges result in the matrix $\hat{R}_2$.

\textbf{Step 3:} In this step, we reduce $G_{\alpha_2}$ to $G_{\alpha_3}$ with $\alpha_3=\{S_1,...,S_m,O\}$ and $\alpha_3^c=\{s\}$. As demostrated in Sec. \ref{subsec:KRIC}, the edge weight of edge $(S_j,O)$ in $G_{\alpha_3}$ is $-c_j$ for all $j\in [m]$. Thus, we evaluate the change in $c_s$.

From eqn. \eqref{eqn:Schur_complement}, it follows that $
 \hat{c}_j-{c}_j={R}_2[O,S_j]-\hat{R}_2[O,S_j] +\hat{R}_2[O,s]\hat{R}_2[s]^{-1}\hat{R}_2[s,S_j]- R_2[O,s]R_2[s]^{-1}{R}_2[s,S_j]$ for any $j\in [m]$. Note that the following properties hold in  $G_{\alpha_2}$: (i) the edge $(S_s,s)$ remains unchanged and (ii) the edge $(S_s,O)$ does not exist. Thus, we can simplify $\hat{c}_s-{c}_s=(\hat{R}_2[O,s]\hat{R}_2[s]^{-1}-R_2[O,s]R_2[s]^{-1}){R}_2[s,S_s]$. On further simplifying we get,
 $\hat{c}_s-{c}_s=R_2[O,s](R_2[s]^{-1}-\hat{R}_2[s]^{-1}){R}_2[s,S_s]+({R}_2[O,s]-\hat{R}_2[O,s])\hat{R}_2[s]^{-1}{R}_2[s,S_s]$. 

From Step 2, we know that the weight of edges $(s,s)$ and $(s,O)$ in $G_{\alpha_2}$ have reduced. Thus, $(\hat{R}_2[s]^{-1}-R_2[s]^{-1})>0$ and $\hat{R}_2[O,s]-{R}_2[O,s]<0.$ Additionally, we know that from Lemma \ref{lm:Q_is_M_matrix}, we know that both $R_2$ and $\hat{R}_2$ are $M$ matrices. Thus, the diagonal entry ($\hat{R}_2[s]>0$) is positive and the off-diagonal entries ($R_2[O,s]$ and $R_2[s,s]$) are negative. Therefore, $\hat{c}_j>c_j$ and the influence centrality of stubborn agent $s$ increases. \hfill$\blacksquare$

\section*{REFERENCES}
\vspace{-20pt}
\bibliographystyle{IEEEtran}
\bibliography{IEEEabrv,references_main}

\begin{thebibliography}{10}
\providecommand{\url}[1]{#1}
\csname url@samestyle\endcsname
\providecommand{\newblock}{\relax}
\providecommand{\bibinfo}[2]{#2}
\providecommand{\BIBentrySTDinterwordspacing}{\spaceskip=0pt\relax}
\providecommand{\BIBentryALTinterwordstretchfactor}{4}
\providecommand{\BIBentryALTinterwordspacing}{\spaceskip=\fontdimen2\font plus
\BIBentryALTinterwordstretchfactor\fontdimen3\font minus
  \fontdimen4\font\relax}
\providecommand{\BIBforeignlanguage}[2]{{%
\expandafter\ifx\csname l@#1\endcsname\relax
\typeout{** WARNING: IEEEtran.bst: No hyphenation pattern has been}%
\typeout{** loaded for the language `#1'. Using the pattern for}%
\typeout{** the default language instead.}%
\else
\language=\csname l@#1\endcsname
\fi
#2}}
\providecommand{\BIBdecl}{\relax}
\BIBdecl

\bibitem{cartwright1959studies}
D.~Cartwright \emph{et~al.}, \emph{Studies in social power}.\hskip 1em plus
  0.5em minus 0.4em\relax Publications of the Institute for Social Research:
  Research Center for Group Dynamics Series, Research Center for Group
  Dynamics, Institute for Social Research, University of Michigan, 1959.

\bibitem{friedkin1991centrality}
N.~E. Friedkin, ``Theoretical foundations for centrality measures,''
  \emph{American Journal of Sociology}, vol.~96, no.~6, pp. 1478--1504, 1991.

\bibitem{degroot1974consensus}
M.~H. DeGroot, ``Reaching a consensus,'' \emph{Journal of the American
  Statistical Association}, vol.~69, no. 345, pp. 118--121, 1974.

\bibitem{anderson1981foundations}
N.~H. Anderson, ``Foundations of information integration theory,'' 1981.

\bibitem{Community_Cleavage}
N.~E. Friedkin, ``The problem of social control and coordination of complex
  systems in sociology: A look at the community cleavage problem,'' \emph{IEEE
  Control Systems Magazine}, vol.~35, pp. 40--51, 2015.

\bibitem{10.1145/3511808.3557304}
W.~Xu, L.~Zhu, J.~Guan, Z.~Zhang, and Z.~Zhang, ``Effects of stubbornness on
  opinion dynamics,'' in \emph{Proceedings of the 31st ACM International
  Conference on Information \& Knowledge Management}, ser. CIKM '22, p.
  2321–2330.

\bibitem{Lingfei_wang}
L.~Wang, G.~Chen, C.~Bernardo, Y.~Hong, G.~Shi, and C.~Altafini, ``Social power
  games in concatenated opinion dynamics,'' \emph{IEEE Transactions on
  Automatic Control}, vol.~69, no.~11, pp. 7614--7629, 2024.

\bibitem{L_wang_parallel}
L.~Wang, Y.~Xing, S.~Huang, C.~Altafini, and K.~H. Johansson, ``Social power
  games for parallel {F}riedkin–{J}ohnsen models,'' \emph{IEEE Transactions
  on Automatic Control}, vol.~71, no.~5, pp. 3074--3089, 2026.

\bibitem{feed_algo}
L.~Edelson, F.~Haugen, and D.~McCoy, ``A comparative survey of algorithmic feed
  recommendation system designs,'' \emph{ACM Trans. Recomm. Syst.}, 2025.

\bibitem{ancona2022model}
C.~Ancona, F.~L. Iudice, F.~Garofalo, and P.~De~Lellis, ``A model-based opinion
  dynamics approach to tackle vaccine hesitancy,'' \emph{Scientific Reports},
  vol.~12, no.~1, p. 11835, 2022.

\bibitem{wang2025addinglinks}
L.~Wang, Y.~Xing, Y.~Yi, M.~Cao, and K.~H. Johansson, ``Adding links wisely:
  how an influencer seeks for leadership in opinion dynamics?'' {J}une 2025,
  arXiv:2506.12463.

\bibitem{gt_attract}
Y.~Ao and Y.~Jia, ``Agents attraction competition in an extended
  {F}riedkin-{J}ohnsen social network,'' \emph{IEEE Transactions on Control of
  Network Systems}, vol.~10, no.~3, pp. 1100--1112, 2023.

\bibitem{ZHU2025115090}
L.~Zhu and Z.~Zhang, ``Opinion maximization in social networks via link
  recommendation,'' \emph{Theoretical Computer Science}, vol. 1033, p. 115090,
  2025.

\bibitem{content_filtering}
T.~Bansal, M.~Das, and C.~Bhattacharyya, ``Content driven user profiling for
  comment-worthy recommendations of news and blog articles,'' in
  \emph{Proceedings of the 9th ACM Conference on Recommender Systems}, ser.
  RecSys '15, p. 195–202.

\bibitem{collab_filtering}
M.~Eirinaki, M.~D. Louta, and I.~Varlamis, ``A trust-aware system for
  personalized user recommendations in social networks,'' \emph{IEEE
  Transactions on Systems, Man, and Cybernetics: Systems}, vol.~44, no.~4, pp.
  409--421, 2014.

\bibitem{Survey_recommendations}
R.~Chen, Q.~Hua, Y.-S. Chang, B.~Wang, L.~Zhang, and X.~Kong, ``A survey of
  collaborative filtering-based recommender systems: From traditional methods
  to hybrid methods based on social networks,'' \emph{IEEE Access}, vol.~6, pp.
  64\,301--64\,320, 2018.

\bibitem{etienne2024mimetic}
H.~Etienne and F.~Charton, ``A mimetic approach to social influence on
  instagram,'' \emph{Philosophy \& Technology}, vol.~37, no.~2, p.~65, 2024.

\bibitem{shrinate2025opinionclusteringfriedkinjohnsenmodel}
A.~Shrinate and T.~Tripathy, ``Opinion clustering under the
  {F}riedkin-{J}ohnsen model: Agreement in disagreement,''
  {https://arxiv.org/abs/2509.11045v3}.

\bibitem{Kron_red_digraphs}
T.~Sugiyama and K.~Sato, ``Kron reduction and effective resistance of directed
  graphs,'' \emph{SIAM Journal on Matrix Analysis and Applications}, vol.~44,
  no.~1, pp. 270--292, 2023.

\bibitem{dorfler2012kron}
F.~Dorfler and F.~Bullo, ``Kron reduction of graphs with applications to
  electrical networks,'' \emph{IEEE Transactions on Circuits and Systems I:
  Regular Papers}, vol.~60, no.~1, pp. 150--163, 2012.

\bibitem{Plemmons}
A.~Berman and R.~J. Plemmons, ``{M}-matrices,'' in \emph{Nonnegative Matrices
  in the Mathematical Sciences}.\hskip 1em plus 0.5em minus 0.4em\relax
  Philadelphia, PA, USA: Society for Industrial and Applied Mathematics (SIAM),
  1994, ch.~6, pp. 132--164.

\bibitem{JOHNSON202484}
C.~Johnson, C.~Marijuán, M.~Pisonero, and I.~Spitkovsky, ``Diagonal entries of
  inverses of diagonally dominant matrices,'' \emph{Linear Algebra and its
  Applications}, vol. 692, pp. 84--90, 2024.

\bibitem{CHEBOTAREV2002253}
P.~Chebotarev and R.~Agaev, ``Forest matrices around the {L}aplacian matrix,''
  \emph{Linear Algebra and its Applications}, vol. 356, no.~1, pp. 253--274,
  2002.

\bibitem{bullo}
F.~Bullo, \emph{Lectures on Network Systems}, {1.7}~ed.\hskip 1em plus 0.5em
  minus 0.4em\relax Kindle Direct Publishing, 2024.

\bibitem{friedkin1990opinions}
N.~E. Friedkin and E.~C. Johnsen, ``Social influence and opinions,'' \emph{The
  Journal of Mathematical Sociology}, vol.~15, pp. 193--206, 1990.

\bibitem{victor}
V.~Amelkin and A.~K. Singh, ``Fighting opinion control in social networks via
  link recommendation,'' in \emph{Proceedings of the 25th ACM SIGKDD
  International Conference on Knowledge Discovery \& Data Mining}, ser. KDD
  '19, p. 677–685.

\bibitem{sherman1950}
J.~Sherman and W.~J. Morrison, ``Adjustment of an inverse matrix corresponding
  to a change in one element of a given matrix,'' \emph{The Annals of
  Mathematical Statistics}, vol.~21, no.~1, pp. 124--127, 1950.

\bibitem{gionis2013opinion}
A.~Gionis, E.~Terzi, and P.~Tsaparas, ``Opinion maximization in social
  networks,'' in \emph{Proceedings of the 2013 SIAM international conference on
  data mining}, pp. 387--395.

\bibitem{fan_pages}
L.~de~Vries, S.~Gensler, and P.~S. Leeflang, ``Popularity of brand posts on
  brand fan pages: An investigation of the effects of social media marketing,''
  \emph{Journal of Interactive Marketing}, vol.~26, no.~2, pp. 83--91, 2012.

\bibitem{harm_ful_content}
C.~Coupette, S.~Neumann, and A.~Gionis, ``Reducing exposure to harmful content
  via graph rewiring,'' in \emph{Proceedings of the 29th ACM SIGKDD Conference
  on Knowledge Discovery and Data Mining}, ser. KDD '23, p. 323–334.

\bibitem{graham2008note}
A.~J. Graham and D.~A. Pike, ``A note on thresholds and connectivity in random
  directed graphs,'' \emph{Atl. Electron. J. Math}, vol.~3, no.~1, pp. 1--5,
  2008.

\end{thebibliography}

\vskip -2\baselineskip plus -1fil
\begin{IEEEbiography}[{\includegraphics[width=0.9in,height=1in]{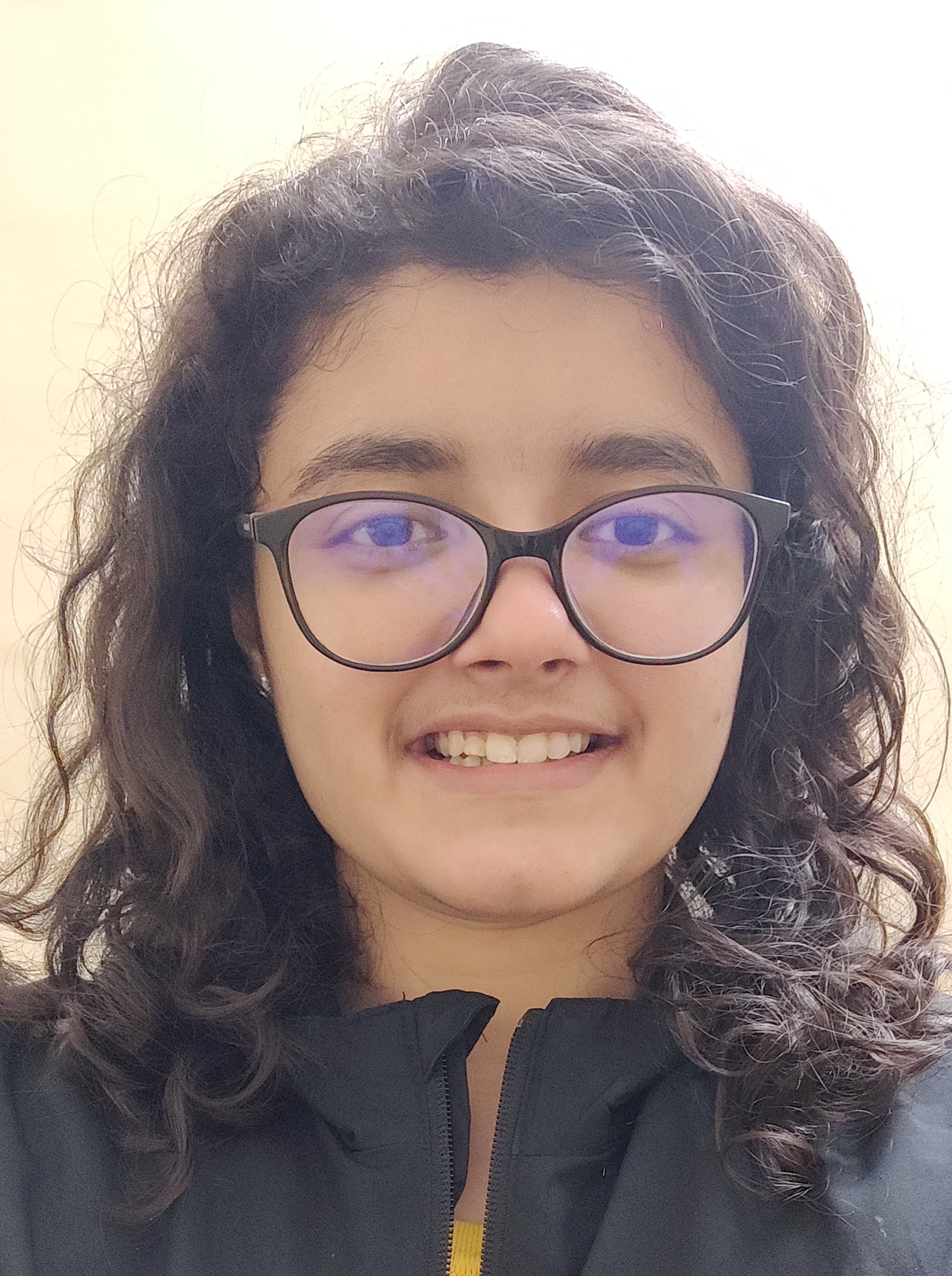}}]{Aashi Shrinate} received her B. Tech degree in Electrical Engineering from Motilal Nehru National Institute of Technology, Allahabad, India in 2020. She is working towards a PhD degree in control and automation specialization in Distributed Control and Decision Lab, Department of Electrical Engineering at the Indian Institute of Technology, Kanpur. She has been receiving the Prime Minister Research Fellowship from 2023. Her research focuses on networked dynamical systems with applications to opinion dynamics in social networks and robotic networks. 
\end{IEEEbiography}
\vskip -2\baselineskip plus -1fil
\begin{IEEEbiography}[{\includegraphics[width=0.9in,height=1.05in,clip]{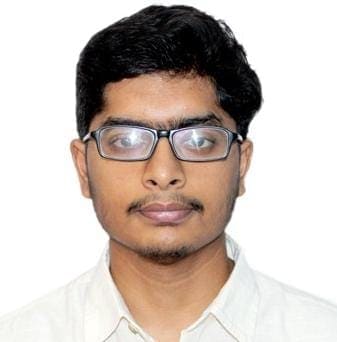}}]{Aravind Seshadri} received his B. Tech degree in Electrical Engineering
from Indian Institute of Technology Kanpur, Uttar Pradesh, India in 2025.
He currently works as a Machine Learning Engineer at Adobe Systems. His
research interests include Computer Vision, Robotics, and opinion dynamics
in social networks.
\end{IEEEbiography}
\vskip -2\baselineskip plus -1fil
\begin{IEEEbiography}[{\includegraphics[width=1in,height=1.05in,clip]{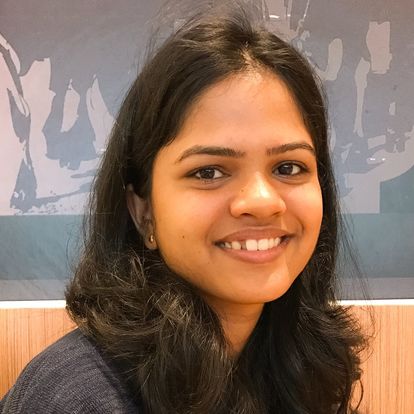}}]{Twinkle Tripathy} is currently an Assistant Professor in the Department of Electrical Engineering of IIT Kanpur. She received a Dual Degree of MTech. and Ph.D. at Systems \& Control Engineering, IIT Bombay in Dec. 2016. She started her post-doctoral tenure at the School of Electrical \& Electronic Engineering, NTU, Singapore. After serving there for a year, she joined the Faculty of Aerospace Engineering, Technion – Israel Institute of Technology as a post-doctoral fellow. Her research interests broadly include control and guidance of autonomous systems, cyclic pursuit strategies and opinion dynamics.
\end{IEEEbiography}
\vskip -2\baselineskip plus -1fil
\begin{IEEEbiography}[{\includegraphics[width=1in,height=1.05in,clip]{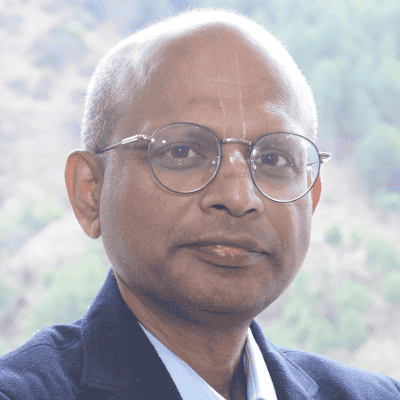}}]{Laxmidhar Behera}
received the B.Sc.(Engg.) and M.Sc.(Engg.) degrees
from the National Institute of Technology (NIT),
Rourkela, India, in 1988 and 1990, respectively, and
the Ph.D. degree in electrical engineering from IIT
Delhi, New Delhi, India, in 1996.
He is currently the Director of IIT Mandi, Mandi,
India, and a Senior Professor with the Department
of Electrical Engineering, IIT Kanpur, Kanpur, India.
He pursued post-doctoral research with the German
National Research Centre for Information Technology, Sankt Augustin, Germany. He has over 28 years of teaching and research
experience in intelligent systems and control, robotics, warehouse automation,
brain–computer interfaces, drone technologies, and mental healthcare. Dr. Behera is a Fellow of INAE, a Distinguished Alumnus of NIT Rourkela,
and a recipient of the National Systems Gold Medal (2023) from the Systems
Society of India. His team achieved top positions in the Amazon Robotics
Challenge 2017. He is an Associate Editor of IEEE Transactions on Systems, Man, and Cybernetics:Systems.
\end{IEEEbiography}
\begin{IEEEbiography}[{\includegraphics[width=1in,height=1.05in,clip]{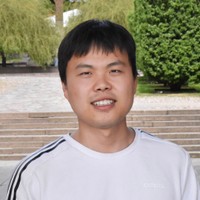}}]{Lingfei Wang}
    received the B.Sc. degree in applied mathematics from Nankai University, Tianjin, China, in 2017, and the Ph.D. degree in
systems theory from Academy of Mathematics
and Systems Science, Chinese Academy of Sciences, Beijing, China, in 2023.
In 2020 and 2022, he was a Visiting Ph.D.
Student with the Division of Automatic Control,
Linkoping University, Linkoping, Sweden. Since
March 2023, he has been a Postdoctoral Researcher with the Division of Decision and Control Systems, KTH Royal Institute of Technology, Stockholm, Sweden.
His research interests include multi-agent systems, social opinion dynamics, and game theory
\end{IEEEbiography}
\begin{IEEEbiography}[{\includegraphics[width=1in,height=1.05in,clip]{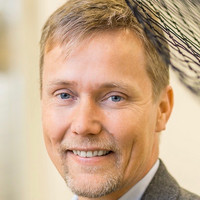}}]{Karl H. Johansson} is Swedish Research Council
Distinguished Professor in Electrical Engineering
and Computer Science at KTH Royal Institute of
Technology in Sweden and Founding Director of
Digital Futures. He earned his MSc degree in Electrical Engineering and PhD in Automatic Control
from Lund University. He has held visiting positions at UC Berkeley, Caltech, NTU and other
prestigious institutions. His research interests focus
on networked control systems and cyber-physical
systems with applications in transportation, energy,
and automation networks. For his scientific contributions, he has received
numerous best paper awards and various distinctions from IEEE, IFAC,
and other organizations. He has been awarded Distinguished Professor by
the Swedish Research Council, Wallenberg Scholar by the Knut and Alice
Wallenberg Foundation, Future Research Leader by the Swedish Foundation
for Strategic Research. He has also received the triennial IFAC Young Author
Prize and IEEE CSS Distinguished Lecturer. He is the recipient of the
2024 IEEE CSS Henrik W. Bode Lecture Prize. His extensive service to
the academic community includes being President of the European Control
Association, IEEE CSS Vice President Diversity, Outreach \& Development,
and Member of IEEE CSS Board of Governors and IFAC Council. He has
served on the editorial boards of Automatica, IEEE TAC, IEEE TCNS and
many other journals. He has also been a member of the Swedish Scientific
Council for Natural Sciences and Engineering Sciences. He is Fellow of both
the IEEE and the Royal Swedish Academy of Engineering Sciences
    
\end{IEEEbiography}

\end{document}